\theoremstyle{plain}
\newtheorem{theorem}{Theorem}[section]
\newtheorem{definition}[theorem]{Definition}
\newtheorem{lemma}[theorem]{Lemma}
\newtheorem{proposition}[theorem]{Proposition}
\newtheorem{hyp}[theorem]{Assumption}
\theoremstyle{remark}
\newtheorem{remark}[theorem]{Remark}
\newtheorem*{notation}{Notation}
\def\R{{\mathbf R}}
\def\N{{\mathbf N}}
\def\O{\mathcal O}
\def\({\left(}
\def\){\right)}
\def\<{\left\langle}
\def\>{\right\rangle}
\def\le{\leqslant}
\def\ge{\geqslant}
\def\Tend#1#2{\mathop{\longrightarrow}\limits_{#1\rightarrow#2}}
\def\d{{\partial}}
\def\eps{\varepsilon}
\def\l{\lambda}
\def\si{{\sigma}}
\def\g{\gamma}
\DeclareMathOperator{\RE}{Re}
 \DeclareMathOperator{\IM}{Im}
 \numberwithin{equation}{section}
\begin{document}

\title[Wave packets for Hartree
equations]{Semiclassical wave packet dynamics for Hartree equations}  
  \author[P. Cao]{Pei Cao}
\address{Department of Mathematical Science\\ Tsinghua University
\\Beijing 100084\\ China}
\email{pcao04@gmail.com}
\author[R. Carles]{R\'emi Carles}
\address{Univ. Montpellier~2\\Math\'ematiques
\\CC~051\\F-34095 Montpellier}
\address{CNRS, UMR 5149\\  F-34095 Montpellier\\ France}
\email{Remi.Carles@math.cnrs.fr}

\begin{abstract}
  We study the propagation of wave packets for
  nonlinear nonlocal Schr\"odinger equations 
   in the semi-classical limit. When the kernel is smooth, we
   construct approximate solutions for the wave functions in
   subcritical,  critical
   and supercritical cases (in terms of the size of the initial data).
   The validity of the approximation is proved up to
    Ehrenfest time.
 For homogeneous kernels, we establish similar results in subcritical
 and critical cases.
Nonlinear superposition principle for two nonlinear wave packets is
also considered. 
\end{abstract}
\thanks{This work was supported by the French ANR project
  R.A.S. (ANR-08-JCJC-0124-01), and was achieved when the first author
was visiting the University of Montpellier~2, under a grant from
Tsinghua University. She would like to thank these institutions for
this opportunity.}
\maketitle

\section{Introduction}
In this paper, we consider the
following semi-classically scaled Hartree equation
\begin{equation}
  \label{eq:hartreegen}
  i\eps \d_t \psi^\eps + \frac{\eps^2}{2}\Delta \psi^\eps =
  V(t,x)\psi^\eps +\(K\ast 
  |\psi^\eps|^2\) \psi^\eps,\quad (t,x)\in \R_{+}\times \R^d,
\end{equation}
where $K:\R^d\to \R$, $V:\R_{+}\times \R^d\to \R$, $d\ge 1$, with
initial data
\begin{equation}
  \label{eq:ci}
  \psi^\eps(0,x)=\eps^M\times \eps^{-d/4}
  a\left(\frac{x-x_0}{\sqrt\eps}\right)
e^{i{(x-x_0)\cdot\xi_0/\eps}},\quad a\in{\mathcal S}({\R}^d),\quad
x_0,\xi_0\in \R^d.
\end{equation}
Such data, which are called \emph{semi-classical wave packets} (or
\emph{coherent states}), have raised great interest in the linear
case (see e.g. \cite{BR01,CR97,CR06,L86,P97}). It is well known that
if the data is a wave packet, then the solution of the linear
equation ($K=0$) associated with \eqref{eq:hartreegen} and
\eqref{eq:ci} still is a wave packet at leading order up to times of
order $C\log \left(\frac{1}{\eps}\right)$,  called \emph{Ehrenfest
  time} (see e.g. 
\cite{BGP99,HJ00,HJ01}). We refer the reader to the recent papers
\cite{CR97,Rob10,Rou-p,SR09}, where overview and  references on the
topics can be found. Throughout this paper, we consider dynamical
properties for positive time only: this is just for the simplicity of
notations, since the equation is reversible.
\smallbreak

 This paper is inspired by the two recent papers \cite{APPP-p}, where
 \eqref{eq:hartreegen} is considered for a smooth kernel $K$, and
 \cite{CaFe11} where the nonlinearity is local, as opposed to the
 Hartree nonlinearity. In \cite{CaFe11}, the authors proved that if
 the initial data 
have subcritical size, the leading order behavior of the wave
function as $\eps\rightarrow 0$ is the same as for the linear equation.
When the size of the initial data is critical, at leading order the wave
function propagates like a coherent state whose envelope is given by
a nonlinear equation, up to a nonlinear analogue of the Ehrenfest
time. In this paper, we follow a similar approach in the case of 
nonlocal Schr\"odinger equations, a case where this analysis is not
\emph{a priori} clear, precisely because the nonlinearity is nonlocal.

Up to changing $\psi^{\varepsilon}$ to
$\varepsilon^{-M}\psi^{\varepsilon},$  \eqref{eq:hartreegen} and
\eqref{eq:ci} can be written as:
\begin{equation}\label{eq:NLS0}
 \left\{
\begin{aligned}
     i\varepsilon\partial_{t} \psi^{\varepsilon}+\frac{\varepsilon^{2}}{2} \Delta
\psi^{\varepsilon}&=V\(t,  x\)
\psi^{\varepsilon}+\varepsilon^{\alpha} \(K*|\psi^{\varepsilon}|^{2}
\)\psi^{\varepsilon},\\
      \psi^{\varepsilon}(0,x)&= \varepsilon^{-d/4}  a
      \(\frac{x-x_{0}} {\sqrt{\varepsilon}}\) e^{i(x-x_{0})\cdot
        \xi_{0}/ \varepsilon}, 
  \end{aligned}
\right.
\end{equation}
with $\alpha=2M$. Notice that the initial data are of order
$\O(1)$ in $L^{2}(\R^{d})$, and $\alpha$ accounts for the strength of
nonlinear effects in the limit $\eps\to 0$.

Consider the trajectories associated with the Hamiltonian flow
$\frac{|\xi|^{2}}{2}+V\(t, x\(t\)\)$:
\begin{equation}\label{eq:traj}
 \dot x(t)=\xi(t),\;\;\dot \xi(t)=-\nabla V\(t,x(t)\);\quad
 x(0)=x_0,\;\xi(0)=\xi_0.
 \end{equation}

 \begin{hyp}\label{hyp:V}
  The external potential $V$ is smooth, real-valued, and at most quadratic
  in space:
  \begin{equation*}
 V\in C^\infty(\R_{+}\times \R^d;\R),\quad \text{and}\quad   \d_x^\beta V\in
 L^\infty\(\R_{+}\times\R^d\),\quad \forall |\beta|\ge  2.
  \end{equation*}
In addition, we require $t\mapsto \nabla V(t,0)\in
 L^\infty(\R_{+})$.
\end{hyp}
\begin{remark}
  If $V=V(x)$ does not depend on time, the last assumption is
  automatically fulfilled. This assumption is needed to ensure that
  the Hamiltonian flow grows at most exponentially in time. Typically,
  if $V(x)=\kappa \cdot x e^{e^t}$ for some (constant) $\kappa\in \R^d$, then
$\dot \xi(t) = -\kappa e^{e^t}$, so $x$ and $\xi$ grow like a double
exponential.
\end{remark}
The following lemma is straightforward.
\begin{lemma}\label{lem:traj}
Let $(x_0,\xi_0)\in \R^d\times \R^d$. Under Assumption~\ref{hyp:V},
\eqref{eq:traj} has a unique global, smooth solution $(x,\xi)\in
C^\infty(\R_{+};\R^d)^2$. It grows at most exponentially:
 \begin{equation}\label{growthtraj}
 \exists C_0>0,\quad \left|x(t)\right|+\left|\xi(t)\right|\lesssim
 e^{C_0t},\quad \forall t\ge  0.
 \end{equation}
 \end{lemma}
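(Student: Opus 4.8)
The plan is to apply the Cauchy--Lipschitz theorem locally and then propagate a global \emph{a priori} bound by Gr\"onwall's lemma. First I would rewrite \eqref{eq:traj} as a first order system $\dot y=F(t,y)$ for $y=(x,\xi)\in\R^{2d}$, with $F(t,x,\xi)=\(\xi,-\nabla V(t,x)\)$. Under Assumption~\ref{hyp:V} the potential $V$ is $C^\infty$ in $(t,x)$, hence so is $F$; in particular $F$ is continuous in $t$ and locally Lipschitz in $y$, uniformly on compact time intervals. The Picard--Lindel\"of theorem then yields a unique maximal solution $(x,\xi)\in C^1\([0,T_{\max});\R^d\)^2$ with $T_{\max}\in(0,+\infty]$.

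Next I would derive the a priori estimate that upgrades this to a global solution. Writing $\nabla V(t,x)=\nabla V(t,0)+\int_0^1 \nabla^2_x V(t,sx)\,x\,ds$ and using $\d_x^\beta V\in L^\infty$ for $|\beta|=2$ together with $\nabla V(\cdot,0)\in L^\infty(\R_{+})$, one gets a bound $|\nabla V(t,x)|\le C_1+C_2|x|$ with $C_1,C_2$ independent of $t$. Setting $N(t):=|x(t)|+|\xi(t)|$, the system gives, in the sense of the one-sided (Dini) derivative,
\[
\frac{d}{dt}N(t)\le |\dot x(t)|+|\dot \xi(t)|\le |\xi(t)|+C_1+C_2|x(t)|\le C_1+C_3\,N(t),\quad C_3:=\max(1,C_2).
\]
Gr\"onwall's lemma then yields $N(t)\le\(N(0)+C_1/C_3\)e^{C_3 t}$, so $N$ stays finite on every bounded interval. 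This rules out blow-up in finite time, forces $T_{\max}=+\infty$, and gives \eqref{growthtraj} with $C_0=C_3$ (the additive constant being absorbed into the implicit constant).

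For the smoothness statement, once global existence in $C^1$ is established, I would bootstrap on the equations themselves: differentiating $\dot x=\xi$ and $\dot \xi=-\nabla V(t,x)$ and using the chain rule with $V\in C^\infty$ shows by induction that $(x,\xi)$ possesses derivatives of all orders, i.e. $(x,\xi)\in C^\infty(\R_{+};\R^d)^2$.

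I do not expect a serious obstacle here — this is exactly why the statement is called straightforward. The only point worth stressing is the role of the hypothesis $t\mapsto\nabla V(t,0)\in L^\infty(\R_{+})$: it is precisely what keeps $C_1$ independent of $t$ in the differential inequality above, hence what makes the growth genuinely a single exponential rather than faster, in agreement with the Remark preceding the lemma.
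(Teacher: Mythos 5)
Your proof is correct and is exactly the argument the paper has in mind when it declares the lemma ``straightforward'' and omits the proof: local existence and uniqueness by Picard--Lindel\"of, a linear-growth bound on $\nabla V$ coming from the at-most-quadratic hypothesis plus $\nabla V(\cdot,0)\in L^\infty(\R_+)$, then Gr\"onwall to get global existence together with the exponential bound \eqref{growthtraj}, and a bootstrap for $C^\infty$ regularity. Nothing is missing; in particular you correctly isolate the role of $\nabla V(\cdot,0)\in L^\infty(\R_+)$, which is what makes the constants in the differential inequality time-independent and hence the growth a single exponential.
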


As far as the Hartree kernel is concerned, two cases will be considered,
leading to two different interesting  phenomena:
\begin{itemize}
\item Smooth kernel: $K\in W^{3,\infty}(\R^d)$, with $K$ smooth near the origin.
\item Homogeneous kernel: $K(x)=\l |x|^{-\gamma}$, with $\l\in \R$
  and $0<\gamma<\min (2,d)$. 
\end{itemize}
The second case includes the three-dimensional Schr\"odinger--Poisson,
typically. 
\begin{remark}
  For several results (linearizable case --- see definition below ---
  or finite time propagation), the second assumption
  could be relaxed to $0<\gamma<\min(4,d)$ (energy-subcritical
  case). To simplify the 
  presentation, we shall not discuss this extension.
\end{remark}

We will focus on the first case in Section ~\ref{sec:smooth}, in
which we mostly revisit the results from \cite{APPP-p}. 
In the rest of this introduction, we consider the homogeneous case. We
seek the solution with the form
\begin{equation*}
\psi^{\varepsilon}(t,x)=\varepsilon^{-d/4}u^{\varepsilon}\(t
,\frac{x-x(t)}{\sqrt{\varepsilon}}\)e^{i(S(t)+
  \xi(t)\cdot(x-x(t)))/\varepsilon} .
\end{equation*}
Here $S(t)$ is the classical Lagrangian action along the Hamiltonian
flow generated by ~\eqref{eq:traj}, given by
 \begin{equation}\label{eq:classicalaction}
S(t)=\int_0^t  \(\frac{1}{2} |\xi(s)|^2-V(s,x(s))\)ds.
\end{equation}
In terms of $u^{\eps}=u^{\eps}(t, y)$, ~\eqref{eq:NLS0} is
equivalent
\begin{equation}\label{eq:hartree}
 i\partial_{t}u^{\varepsilon}+\frac{1}{2}\Delta u^{\varepsilon}=
 V^{\varepsilon} (t, y) u^{\varepsilon}+\lambda 
     \varepsilon^{\alpha-\alpha_{c}}\(|y|^{-\gamma}\ast
     |u^{\varepsilon}|^{2}\)u^{\varepsilon}
\end{equation}
with the initial date $u^{\eps}(0,y)=a(y)$, where
\begin{equation*}
\alpha_{c}=1+\frac{\gamma}{2}
\end{equation*}
is a critical exponent and the time-dependent potential
$V^{\varepsilon}(t, y)$ is given by
\begin{eqnarray*}
V^{\varepsilon}\(t, y\)=\frac{1}{\varepsilon}\(V(t,
x(t)+\sqrt{\varepsilon}y)-V\(t, x(t)\)-\sqrt{\varepsilon}\<\nabla
V(t, x(t)), y\>\).
\end{eqnarray*}
It reveals the first terms of the Taylor expansion of $V$ about the
point $x(t).$ Passing formally to the limit, $V^{\varepsilon}$
converges to the Hessian of $V$ at $x(t)$ evaluated at $(y,y)$. 
Throughout the paper, we denote
\begin{equation}\label{eq:Q}
Q(t)={\rm Hess}\ V\(t, x(t)\).
\end{equation}

\subsection{The linear case $\lambda=0$}\label{sec:linear} 
Introduce the function
\begin{equation*}\label{eq:linearvarphi}
\varphi_{{\rm
lin}}^{\varepsilon}(t,x)=\varepsilon^{-d/4}u_{\rm lin}\(t,
\frac{x-x(t)}{\sqrt{\varepsilon}}\)e^{i(S(t) 
+\xi(t)\cdot(x-x(t)))/\varepsilon},
\end{equation*}
where $u_{\rm lin}$ solves
\begin{equation}\label{eq:linearu}
 i\partial_{t}u_{\rm lin}+\frac{1}{2}\Delta u_{\rm
   lin}=\frac{1}{2}\<y, Q(t)y\>u_{\rm lin}\quad ;\quad 
u_{\rm lin}(0,y)=a(y).
\end{equation}
Then the following lemma is well-known, see e.g.
\cite{BGP99,CaFe11,CR97,CR06,CR07,H80,HJ00,HJ01} and references
therein.
\begin{lemma}\label{lemlinear}
Let $a\in \mathcal{S}(\R^{d})$, and $\psi^\eps$ solve \eqref{eq:NLS0}
with $K=0$. There exist positive constants $C$
and $C_{1}$ independent of $\varepsilon$
 such that
 \begin{equation*}
 \|\psi^{\varepsilon}(t)-\varphi_{{\rm
     lin}}^{\varepsilon}(t)\|_{L^{2}(\R^{d})}\le 
 C\sqrt{\varepsilon} e^{C_{1}t}.
 \end{equation*}
 In particular, there exists  $c>0$ independent of $\eps$ such that
 \begin{equation*}
 \sup_{0\le t\le c\ln\frac{1}{\eps}}\|\psi^{\varepsilon}(t)
 -\varphi_{{\rm lin}}^{\varepsilon}(t)\|_{L^{2}(\R^{d})}\Tend
 \eps 0 0. 
 \end{equation*}
 \end{lemma}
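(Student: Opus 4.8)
The plan is to reduce everything to the equation \eqref{eq:linearu} for the rescaled envelope $u_{\rm lin}$ and to track the error introduced by replacing the exact potential $V^\eps(t,y)$ by its limiting quadratic approximation $\tfrac12\langle y,Q(t)y\rangle$. First I would observe that the exact solution $\psi^\eps$ of \eqref{eq:NLS0} with $K=0$ can be written \emph{exactly} in the WKB wave-packet form
\begin{equation*}
\psi^\eps(t,x)=\eps^{-d/4}u^\eps\!\(t,\tfrac{x-x(t)}{\sqrt\eps}\)e^{i(S(t)+\xi(t)\cdot(x-x(t)))/\eps},
\end{equation*}
where $u^\eps$ solves the linear version of \eqref{eq:hartree}, namely $i\d_t u^\eps+\tfrac12\Delta u^\eps=V^\eps(t,y)u^\eps$ with $u^\eps(0)=a$. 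This is a direct computation using \eqref{eq:traj} and \eqref{eq:classicalaction}: plugging the ansatz into \eqref{eq:NLS0}, the terms of order $\eps^{-1}$ cancel by the eikonal/Lagrangian-action identities, the $\eps^{-1/2}$ terms cancel by the equations of motion for $x$ and $\xi$, and what remains is precisely the Schr\"odinger equation with the Taylor-remainder potential $V^\eps$. Since the $L^2$ norm is preserved by the unitary wave-packet change of variables, it suffices to bound $\|u^\eps(t)-u_{\rm lin}(t)\|_{L^2}$.

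Next I would set $w^\eps=u^\eps-u_{\rm lin}$, which solves
\begin{equation*}
i\d_t w^\eps+\tfrac12\Delta w^\eps=V^\eps(t,y)w^\eps+\(V^\eps(t,y)-\tfrac12\langle y,Q(t)y\rangle\)u_{\rm lin},\qquad w^\eps(0)=0.
\end{equation*}
By Taylor's formula with integral remainder and Assumption~\ref{hyp:V}, the source term is $O(\sqrt\eps)$ times a polynomially-weighted quantity: $V^\eps(t,y)-\tfrac12\langle y,Q(t)y\rangle=\sqrt\eps\,R^\eps(t,y)$ with $|R^\eps(t,y)|\lesssim |y|^3$ uniformly (using that the third derivatives of $V$ are bounded). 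Applying the standard energy estimate in $L^2$ for the Schr\"odinger equation with a real, at most quadratic, potential $V^\eps$ — here one uses Lemma~\ref{lem:traj} to control the coefficients of $V^\eps$, which are at most quadratic in $y$ with coefficients growing at most exponentially in $t$ — one obtains
\begin{equation*}
\frac{d}{dt}\|w^\eps(t)\|_{L^2}\lesssim \sqrt\eps\,\big\||y|^3 u_{\rm lin}(t)\big\|_{L^2}.
\end{equation*}
The remaining input is that $u_{\rm lin}(t)$ stays in every weighted space $\Sigma^k$ (i.e. $\langle y\rangle^k u_{\rm lin}\in L^2$), with norms growing at most exponentially in $t$: this follows from propagating the operators $y$ and $\nabla_y$ through \eqref{eq:linearu}, since their commutators with the generator produce only bounded-in-space (exponentially-in-time) multiples of $y$, $\nabla_y$, again by Assumption~\ref{hyp:V} and Lemma~\ref{lem:traj}; this is exactly the classical wave-packet estimate, already available in the cited references \cite{CaFe11,CR97,HJ00}. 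Integrating from $0$ then gives $\|w^\eps(t)\|_{L^2}\le C\sqrt\eps\, e^{C_1 t}$, hence the first displayed estimate of the lemma.

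Finally, the second statement is immediate: choosing $c<1/C_1$, for $0\le t\le c\ln(1/\eps)$ one has $\sqrt\eps\,e^{C_1 t}\le \eps^{(1-cC_1)/2}\to 0$. I expect the only genuinely delicate point to be the exponential-in-time control of the weighted norms $\||y|^3 u_{\rm lin}(t)\|_{L^2}$ together with the energy estimate for the propagator of $V^\eps$: both rely on the at-most-quadratic structure of $V$ and on the exponential growth bound \eqref{growthtraj}, and must be carried out so that all constants are independent of $\eps$. Everything else is bookkeeping, and indeed this lemma is classical; I would present the argument briefly and refer to \cite{BGP99,CaFe11,CR97,CR06,CR07,H80,HJ00,HJ01} for the detailed computations.
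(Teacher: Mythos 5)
The paper does not actually prove this lemma: it is stated as well-known with a list of references. Your reconstruction is correct, and it runs along exactly the lines the paper itself follows in the nonlinear cases. The change of unknown to the rescaled profile $u^\eps$ is $L^2$-isometric, the error $w^\eps=u^\eps-u_{\rm lin}$ solves a Schr\"odinger equation with a source equal to the third-order Taylor remainder of $V$ about $x(t)$ applied to $u_{\rm lin}$, and you close by energy estimate plus exponential control of $\||y|^3u_{\rm lin}(t)\|_{L^2}$. Compare the definition of $L^\eps$ and the bound \eqref{5.9} in the proof of Proposition~\ref{prop:bounded}: that is precisely your source-term estimate, transported to the critical nonlinear setting, and Lemma~\ref{lem:exphartree} establishes in that setting exactly the exponential growth of weighted norms you correctly identify as the key analytic input.

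One small imprecision worth tightening: for the $L^2$ estimate you do not actually need Lemma~\ref{lem:traj} to ``control the coefficients of $V^\eps$''. Since $V^\eps(t,y)$ is real-valued, the term $V^\eps w^\eps$ contributes nothing to $\tfrac{d}{dt}\|w^\eps(t)\|_{L^2}^2$ when you take the imaginary part of $\int\overline{w^\eps}\cdot(\text{equation})$; the evolution generated by $V^\eps$ alone is unitary on $L^2$ regardless of the size of its coefficients. What must be bounded uniformly in $\eps$ and $t$ is only the Taylor remainder $V^\eps-\tfrac12\<y,Q(t)y\>$, and this uses $\sup_{|\beta|=3}\|\d_x^\beta V\|_{L^\infty}<\infty$ from Assumption~\ref{hyp:V} directly, not the trajectory bound \eqref{growthtraj}. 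Lemma~\ref{lem:traj} and \eqref{growthtraj} become relevant only when one passes to the $\mathcal H$-norm estimate via the commutator $[\,i\eps\d_t+\tfrac{\eps^2}{2}\Delta-V,\,A^\eps]$ of Lemma~\ref{lem:operators}, which is not part of the present lemma.
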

\subsection{The nonlinear case $\lambda\neq 0$}
As in \cite{CaFe11}, we introduce two linear operators, which are
essentially $\nabla$ and $x$, up to the wave 
packet scaling, in the moving frame:
\begin{equation*}
  A^\eps(t) =  \sqrt \eps \nabla -i\frac{\xi(t)}{\sqrt
    \eps}\quad ;\quad
  B^\eps(t)=\frac{x-x(t)}{\sqrt\eps}.
\end{equation*}
For $f\in \Sigma:=\{f\in H^{1}(\R^{d}); xf\in L^{2}(\R^{d})\}$,
we define
\begin{equation*}
\|f\|_{\mathcal{H}}=\|f\|_{L^{2}(\R^{d})}+\|A^{\varepsilon}
f\|_{L^{2}(\R^{d})} +\|B^{\varepsilon}f\|_{L^{2}(\R^{d})}.
\end{equation*}
\subsubsection{The subcritical case $\alpha>\alpha_{c}$}
In this case, the solution of \eqref{eq:NLS0} is linearizable in
the sense of \cite{PG96}: $\varphi_{{\rm lin}}$
yields a good approximation to $\psi^{\eps}$, up to Ehrenfest time.
\begin{proposition}\label{prop:subhartree}
 Let $\lambda\in \R$, $0<\gamma<\min\(2, d\)$ and $\alpha>\alpha_c$. Suppose
that $a\in \mathcal{S}(\R^{d})$ and $V$ satisfies Assumption
~\ref{hyp:V}. Then there exist positive constants $C, C_{1}, C_{2}$
independent of $\varepsilon$, and $\varepsilon_{0}>0$ such that for any
$\varepsilon\in ]0, \varepsilon_{0}]$,
\begin{equation*}
\|\psi^{\varepsilon}(t)-\varphi_{{\rm
    lin}}^{\varepsilon}(t)\|_{\mathcal{H}}\le
C\varepsilon^{\kappa}e^{C_{1}t}, 
\quad 0\le t\le C_{2}\ln\frac{1}{\varepsilon},\quad
\kappa=\min\(\frac{1}{2}, \alpha-\alpha_{c}\).
\end{equation*}
In particular, there exists a positive constant $c$ independent of
$\varepsilon$ such that
\begin{equation*}
\sup_{0\le t\le c\ln\frac{1}{\varepsilon}}\|\psi^{\varepsilon}(t)-
\varphi_{\rm{lin}}^{\varepsilon}(t)\|_{\mathcal{H}}\Tend \eps 0 0. 
\end{equation*}
\end{proposition}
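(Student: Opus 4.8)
The strategy is the standard one for wave-packet analysis in the moving frame: write $\psi^\eps$ and $\varphi_{\rm lin}^\eps$ in terms of the rescaled profiles $u^\eps$ (solving \eqref{eq:hartree}) and $u_{\rm lin}$ (solving \eqref{eq:linearu}), so that by the isometry properties of the wave-packet change of unknown, controlling $\|\psi^\eps-\varphi_{\rm lin}^\eps\|_{\mathcal H}$ reduces to controlling $w^\eps:=u^\eps-u_{\rm lin}$ in the norm $\|\cdot\|_{L^2}+\|\nabla\cdot\|_{L^2}+\|y\,\cdot\|_{L^2}$, i.e.\ in $\Sigma$. (Recall $A^\eps$ and $B^\eps$ conjugate, up to the scaling, to $\sqrt\eps\nabla$ and multiplication by $y$ in the moving frame, so the $\mathcal H$-norm of the difference of the original functions equals the $\Sigma$-type norm of $w^\eps$.) First I would record the equation satisfied by $w^\eps$: subtracting \eqref{eq:linearu} from \eqref{eq:hartree},
\begin{equation*}
 i\partial_t w^\eps+\tfrac12\Delta w^\eps=\tfrac12\<y,Q(t)y\>w^\eps
 +\(V^\eps(t,y)-\tfrac12\<y,Q(t)y\>\)u^\eps
 +\lambda\eps^{\alpha-\alpha_c}\(|y|^{-\gamma}\ast|u^\eps|^2\)u^\eps,
\end{equation*}
with $w^\eps(0)=0$. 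There are thus two source terms: the potential-remainder term, which is $O(\sqrt\eps)$ thanks to Taylor's formula (the cubic remainder in $V^\eps$) combined with moment bounds on $u^\eps$, and the nonlinear term, which carries the prefactor $\eps^{\alpha-\alpha_c}$.

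Second, I would establish the needed a priori bounds on $u^\eps$ itself in $\Sigma$, uniformly for $t$ up to $C\ln(1/\eps)$. Mass conservation gives $\|u^\eps(t)\|_{L^2}=\|a\|_{L^2}$. For the $H^1$ and weighted-$L^2$ norms one uses the operators $\nabla$ and $J(t)$ (the Heisenberg-type operator naturally attached to \eqref{eq:linearu}, which acts nicely on the quadratic part); commuting them through \eqref{eq:hartree}, the potential commutators produce a linear-in-norm growth (hence the exponential $e^{C_1t}$ factor), and the nonlocal term is handled by a Hardy/Hardy–Littlewood–Sobolev estimate: for $0<\gamma<\min(2,d)$, $\||y|^{-\gamma}\ast|u|^2\|_{L^\infty}$ and the relevant derivatives are controlled by $\|u\|_\Sigma^2$ (using $\gamma<2$ so that $|y|^{-\gamma}$ is locally integrable and the long-range part is bounded). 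A Gronwall argument then yields $\|u^\eps(t)\|_\Sigma\le Ce^{C_1t}$, and since $\eps^{\alpha-\alpha_c}e^{Ct}$ stays small for $t\le C_2\ln(1/\eps)$ with $C_2$ small, the nonlinear term never destroys this bound on the Ehrenfest time scale; one should set this up as a bootstrap/continuity argument on the interval where $\|u^\eps(t)\|_\Sigma\le 2Ce^{C_1t}$.

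Third, with these bounds in hand, estimate $w^\eps$ directly via the energy method (or Duhamel against the linear propagator of \eqref{eq:linearu}, which is bounded on $L^2$ and almost commutes with $\nabla,J$). Apply $\mathrm{Id},\nabla,J(t)$ to the $w^\eps$-equation, take $L^2$ inner products, and use: (i) the $\tfrac12\<y,Q(t)y\>w^\eps$ term contributes $\lesssim \|Q(t)\|\,\|w^\eps\|_\Sigma^2\lesssim e^{C_0t}\|w^\eps\|_\Sigma^2$ after commutation with $J$ and $\nabla$ (here Lemma~\ref{lem:traj} controls $\|Q(t)\|$ through the at-most-quadratic assumption on $V$); (ii) the potential remainder contributes $\lesssim \sqrt\eps\,e^{C_1t}$, using that the cubic Taylor remainder of $V$, paired with three moments of $u^\eps$, is $O(\sqrt\eps)$ times an exponential; (iii) the nonlinear term contributes $\lesssim \eps^{\alpha-\alpha_c}e^{C_1t}$ by the same HLS/Hardy estimates. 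This gives
\begin{equation*}
 \frac{d}{dt}\|w^\eps(t)\|_\Sigma\lesssim e^{C_0t}\|w^\eps(t)\|_\Sigma
 +\(\sqrt\eps+\eps^{\alpha-\alpha_c}\)e^{C_1t},
\end{equation*}
and Gronwall — with the integrating factor $\exp(\!-\!\int_0^t e^{C_0s}ds)$, which is itself of size $\eps^{\text{small}}$ at the Ehrenfest time but only costs another power of $e^{C_1t}$ after enlarging constants — yields $\|w^\eps(t)\|_\Sigma\le C\eps^{\kappa}e^{C_1t}$ with $\kappa=\min(\tfrac12,\alpha-\alpha_c)$. Translating back through the wave-packet scaling gives the claimed bound on $\|\psi^\eps-\varphi_{\rm lin}^\eps\|_{\mathcal H}$, and choosing $c<C_2$ small enough makes $\eps^\kappa e^{C_1 c\ln(1/\eps)}=\eps^{\kappa-C_1c}\to0$, which is the second assertion.

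\textbf{Main obstacle.} The delicate point is the interplay between the exponential growth from the unbounded (quadratic) potential and the smallness $\eps^{\alpha-\alpha_c}$ of the nonlinearity: one must run the a priori estimate for $u^\eps$ and the difference estimate for $w^\eps$ as a single coupled bootstrap, choosing the time horizon $C_2\ln(1/\eps)$ precisely so that all the accumulated $e^{Ct}$ factors are beaten by the available powers of $\eps$. The secondary technical nuisance is making the nonlocal term fit the $\Sigma$-framework — i.e.\ proving the commutator estimates for $|y|^{-\gamma}\ast|u|^2$ against $\nabla$ and $J(t)$ uniformly in $\eps$ — which is exactly where the restriction $\gamma<\min(2,d)$ is used.
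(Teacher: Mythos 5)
The route you propose — passing to the rescaled profiles $u^\eps$, $u_{\rm lin}$ in the moving frame, running an energy estimate on $w^\eps = u^\eps - u_{\rm lin}$ in $\Sigma$ using HLS/Hardy/Sobolev bounds on the Hartree term, and closing with Gronwall — is genuinely different from the paper's, which works in the original variables, uses the scaled Strichartz estimates of Proposition~\ref{prop:strichartz} (Duhamel against the $\eps$-propagator, HLS in the dual Strichartz space), runs a bootstrap directly on $\|w^\eps(t)\|_{L^r}$ with $r = 4d/(2d-\gamma)$ (estimate \eqref{eq:solong}), then applies $A^\eps, B^\eps$ and closes with Gronwall. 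Your plan avoids Strichartz entirely and relies only on $L^\infty_t L^2$-type energy estimates, which is plausible for the $L^2$-subcritical range $\gamma<\min(2,d)$.

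However, there is a genuine gap in your step (i) and in the Gronwall conclusion that follows. You write that the potential term contributes $\lesssim \|Q(t)\|\,\|w^\eps\|_\Sigma^2 \lesssim e^{C_0 t}\|w^\eps\|_\Sigma^2$, attributing exponential growth to $\|Q(t)\|$ via Lemma~\ref{lem:traj}, and then assert that the resulting integrating factor $\exp\bigl(\int_0^t e^{C_0 s}ds\bigr)$ ``only costs another power of $e^{C_1 t}$ after enlarging constants.'' Both claims are wrong. First, $Q(t)=\nabla^2_x V(t,x(t))$ is \emph{bounded} uniformly in $t$ by Assumption~\ref{hyp:V} (second derivatives of $V$ are in $L^\infty$); the exponential growth of $x(t),\xi(t)$ in Lemma~\ref{lem:traj} does not propagate to $Q(t)$. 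Second, and more importantly, if the Gronwall coefficient really were $e^{C_0 t}$, the Gronwall bound would carry a factor $\exp\bigl((e^{C_0 t}-1)/C_0\bigr)$, which is doubly exponential in $t$ and cannot be absorbed into any $e^{C_1 t}$; at $t = C_2\ln(1/\eps)$ this factor is $\exp(\eps^{-C_0 C_2}/C_0)$, which diverges faster than any negative power of $\eps$ and would destroy the Ehrenfest-time conclusion, reducing the time of validity to $\O(\ln\ln(1/\eps))$ (this is in fact exactly the phenomenon that appears in the supercritical smooth-kernel case, Proposition~\ref{prop:smoothkernel}). The correct observation is that the commutator terms $[\nabla, V^\eps]$ and $[y, \Delta]$ produce coefficients bounded by a \emph{constant} (using $|\nabla_y V^\eps(t,y)|\lesssim |y|$ from the at-most-quadratic hypothesis), so the Gronwall coefficient is bounded, yielding genuine (single) exponential growth. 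As written, your argument does not establish the claimed $C_2\ln(1/\eps)$ time scale; the fix is to replace $e^{C_0 t}$ by a constant in (i), at which point the rest of the plan should go through.
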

\subsubsection{The critical case $\alpha=\alpha_{c}$}
By passing formally to the limit $\varepsilon\rightarrow 0,$
  ~\eqref{eq:hartree} can be written as
  \begin{equation}\label{eq:nlhartree}
   i\partial_{t}u+\frac{1}{2}\Delta u=\frac{1}{2} \<y, Q(t)y\> u+\lambda
\(|y|^{-\gamma}\ast |u|^{2}\) u\quad ;\quad 
u(0,y)=a(y).
\end{equation}
The Cauchy problem for \eqref{eq:nlhartree} is addressed in
\S\ref{sec:profile}. 
For $\alpha=\alpha_c$, the solution to \eqref{eq:NLS0} is not
linearizable: the nonlinearity affects the dynamics at leading
order. For $k\in \N$, define ($\Sigma^1=\Sigma$)
\begin{equation*}
  \Sigma^k = \left\{ f\in L^2(\R^d)\ ;\ \| f \| _{\Sigma^k}:=
    \sum_{|\alpha|+|\beta|\le  k}\left\lVert x^\alpha \d_x^\beta
      f\right\rVert_{L^2(\R^d)}<\infty\right\}. 
\end{equation*}
We prove:
\begin{theorem}\label{thm:criticalhartree}
Let $\lambda\in\R$, $0<\gamma<\min\(2, d\)$, $\alpha=\alpha_c$ and $a\in
\Sigma^3$. Suppose that $V$ satisfies Assumption~\ref{hyp:V}. Let $u\in
C(\R_+; \Sigma^3)$ be the solution to \eqref{eq:nlhartree} and
\begin{equation}\label{eq:varphi}
\varphi^{\varepsilon}(t,x)=\varepsilon^{-d/4}u\(t,
\frac{x-x(t)}{\sqrt{\varepsilon}}\)e^{i(S(t)+\xi(t)\cdot
(x-x(t)))/\varepsilon}.
\end{equation}
Then there exist
positive constants $C, C_{1}, C_{2}$ independent of $\varepsilon$,
and $\varepsilon_{0}>0$ such that for any $\varepsilon\in ]0,
\varepsilon_{0}]$,
\begin{equation*}
\|\psi^{\varepsilon}(t)-\varphi^{\varepsilon}(t)\|_{L^{2}(\R^{d})}\le
C\sqrt{\varepsilon}\exp(C_{1}t),\quad 
0\le t\le C_{2}\ln\frac{1}{\varepsilon}.
\end{equation*}
In particular, there exists a positive constant $c$ independent of
$\eps$ such that
\begin{equation*}
\sup_{0\le t\le c\ln\frac{1}{\varepsilon}}\|\psi^{\varepsilon}(t)-
\varphi^{\varepsilon}(t )\|_{L^{2}(\R^{d})}\Tend \eps 0 0.
\end{equation*}
Furthermore, if $a\in\Sigma^4$, then for the same
constants $C_1,C_2$ as above,
\begin{equation*}
\|\psi^{\varepsilon}(t)-\varphi^{\varepsilon}(t)\|_{\mathcal{H}}\le
C_3\sqrt{\varepsilon}\exp(C_{1}t), \quad 
0\le t\le C_{2}\ln\frac{1}{\varepsilon}.
\end{equation*}
\end{theorem}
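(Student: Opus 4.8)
\emph{Reduction to the profile.} The plan is to pass to the moving frame and to estimate $w^\eps:=u^\eps-u$, where $u^\eps$ is the exact profile of $\psi^\eps$, i.e.\ $\psi^\eps(t,x)=\eps^{-d/4}u^\eps(t,(x-x(t))/\sqrt\eps)e^{i(S(t)+\xi(t)\cdot(x-x(t)))/\eps}$, so that $u^\eps$ solves \eqref{eq:hartree} with $\alpha=\alpha_c$ (hence with nonlinear coefficient exactly $\lambda$) and data $a$, while $u$ solves the limit equation \eqref{eq:nlhartree} with data $a$. Since $\varphi^\eps$ and $\psi^\eps$ carry the \emph{same} phase and $f\mapsto\eps^{-d/4}f((\cdot-x(t))/\sqrt\eps)$ is an $L^2$-isometry, and since $A^\eps(t)$ and $B^\eps(t)$ act on such a wave packet by simply differentiating, resp.\ multiplying by $y$, the profile, one has the exact identities $\|\psi^\eps(t)-\varphi^\eps(t)\|_{L^2}=\|w^\eps(t)\|_{L^2}$ and $\|\psi^\eps(t)-\varphi^\eps(t)\|_{\mathcal H}=\|w^\eps(t)\|_{L^2}+\|\nabla w^\eps(t)\|_{L^2}+\|y\,w^\eps(t)\|_{L^2}$. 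Subtracting the two equations, $w^\eps(0)=0$ and
\begin{equation*}
 i\partial_t w^\eps+\tfrac12\Delta w^\eps = V^\eps w^\eps + \Big(V^\eps-\tfrac12\<y,Q(t)y\>\Big)u +\lambda\big(|y|^{-\gamma}\ast|u^\eps|^2\big)w^\eps + \lambda\big(|y|^{-\gamma}\ast(|u^\eps|^2-|u|^2)\big)u ,
\end{equation*}
where the Hartree difference has been split as $(|y|^{-\gamma}\ast|u^\eps|^2)u^\eps-(|y|^{-\gamma}\ast|u|^2)u=(|y|^{-\gamma}\ast|u^\eps|^2)w^\eps+(|y|^{-\gamma}\ast(|u^\eps|^2-|u|^2))u$.

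\emph{The $L^2$ estimate.} First I would invoke the a priori bounds from \S\ref{sec:profile}: $u\in C(\R_+;\Sigma^3)$ with $\|u(t)\|_{\Sigma^3}\lesssim e^{C_1t}$, and --- since $V^\eps$ has a Hessian bounded uniformly in $(\eps,t,y)$ by $\|\partial_x^2V\|_{L^\infty}$ and the Hartree nonlinearity is energy-subcritical --- $\|u^\eps(t)\|_{\Sigma^1}\lesssim e^{C_1t}$ uniformly in $\eps$. Then I take the $L^2$ energy estimate for $w^\eps$: multiplying the equation by $\overline{w^\eps}$ and integrating the imaginary part, the contributions of $V^\eps w^\eps$ and of $(|y|^{-\gamma}\ast|u^\eps|^2)w^\eps$ \emph{vanish} because $V^\eps$ and $|y|^{-\gamma}\ast|u^\eps|^2$ are real-valued; this is the structural point that makes the argument work. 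One is left with
\begin{equation*}
 \tfrac{d}{dt}\|w^\eps(t)\|_{L^2}\le \Big\|\big(V^\eps-\tfrac12\<y,Q(t)y\>\big)u(t)\Big\|_{L^2}+|\lambda|\,\Big\|\big(|y|^{-\gamma}\ast(|u^\eps|^2-|u|^2)\big)u(t)\Big\|_{L^2}.
\end{equation*}
For the first term I use that $\partial_x^3V\in L^\infty$ (Assumption~\ref{hyp:V}), so the Taylor remainder satisfies $|V^\eps(t,y)-\tfrac12\<y,Q(t)y\>|\le C\sqrt\eps\,|y|^3$ uniformly in $t$, whence this term is $\le C\sqrt\eps\,\|u(t)\|_{\Sigma^3}\lesssim\sqrt\eps\,e^{C_1t}$. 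For the second term I use the pointwise bound $\big||u^\eps|^2-|u|^2\big|\le|w^\eps|\,(|u^\eps|+|u|)$ together with the Hardy--Littlewood--Sobolev inequality, Hölder's inequality and the Sobolev embeddings $H^1(\R^d)\hookrightarrow L^s(\R^d)$; here $0<\gamma<\min(2,d)$ is precisely what makes the Lebesgue exponents admissible in every dimension, and the outcome is a bound $\le C\,\theta(t)\,\|w^\eps(t)\|_{L^2}$ with $\theta(t)$ a fixed polynomial of $\|u(t)\|_{\Sigma^1}$ and $\|u^\eps(t)\|_{\Sigma^1}$. Inserting these into the differential inequality and applying Gronwall's lemma --- the delicate point being to organise the accumulated loss so that it stays at the level of a single exponential in $t$ --- then yields $\|w^\eps(t)\|_{L^2}\le C\sqrt\eps\,e^{C_1t}$ for $0\le t\le C_2\ln(1/\eps)$, and the ``in particular'' assertion follows by taking $c<1/(2C_1)$.

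\emph{The $\mathcal H$ estimate.} When $a\in\Sigma^4$ I would differentiate the $w^\eps$-equation with the operators $A^\eps(t)$ and $B^\eps(t)$ and close a coupled energy estimate for $\big(\|w^\eps\|_{L^2},\|A^\eps w^\eps\|_{L^2},\|B^\eps w^\eps\|_{L^2}\big)$. Two ingredients are needed. (i) $A^\eps,B^\eps$ almost commute with $i\partial_t+\tfrac12\Delta-V^\eps$, the commutators being linear combinations of $A^\eps,B^\eps$ with coefficients bounded in terms of $\partial_x^2V\in L^\infty$ (this is where ``at most quadratic'' is used). (ii) The Leibniz rules for the Hartree term, $A^\eps\big((|y|^{-\gamma}\ast|f|^2)f\big)=(|y|^{-\gamma}\ast|f|^2)A^\eps f+\big(|y|^{-\gamma}\ast 2\,\RE(\bar f\,A^\eps f)\big)f$ and $B^\eps\big((|y|^{-\gamma}\ast|f|^2)f\big)=(|y|^{-\gamma}\ast|f|^2)B^\eps f$; note that $A^\eps$ never falls on the kernel $|y|^{-\gamma}$ --- only on $|u^\eps|^2$ --- so no derivative ever hits the singularity, and the real-valued potential $|y|^{-\gamma}\ast|u^\eps|^2$ again yields a vanishing contribution in the energy estimates for $A^\eps w^\eps$ and $B^\eps w^\eps$. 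Applying $A^\eps$ or $B^\eps$ to the Taylor remainder $(V^\eps-\tfrac12\<y,Q(t)y\>)u$ costs one extra derivative or moment on $u$, which is exactly why $a\in\Sigma^4$ is required; the remaining Hartree source terms are handled by the same Hardy--Littlewood--Sobolev/Hölder/Sobolev toolkit as above, now applied to $A^\eps w^\eps$ and $B^\eps w^\eps$. Gronwall on the coupled system gives the $\mathcal H$-bound with the same constants $C_1,C_2$.

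\emph{Main obstacle.} The heart of the matter, in both parts, is the control of the nonlocal term $\lambda(|y|^{-\gamma}\ast(|u^\eps|^2-|u|^2))u$ and its $A^\eps,B^\eps$-analogues, uniformly in $\eps$ and up to Ehrenfest time $C_2\ln(1/\eps)$: one must use at once the cancellation produced by the real-valuedness of $|y|^{-\gamma}\ast|u^\eps|^2$ (which removes the only term involving the large factor $\||y|^{-\gamma}\ast|u^\eps|^2\|_{L^\infty}$), the subcriticality $\gamma<\min(2,d)$ (which is what lets $|w^\eps|\,(|u^\eps|+|u|)$ be absorbed through Hardy--Littlewood--Sobolev and Sobolev embeddings in all dimensions), and the a priori growth estimates for $u$ and $u^\eps$ of \S\ref{sec:profile}. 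The commutator bookkeeping in the $\mathcal H$-part is longer but follows the local analysis of \cite{CaFe11} combined with the nonlocal estimates above.
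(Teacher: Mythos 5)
The overall reduction (moving frame, real-valued $V^\eps$ and $|y|^{-\gamma}\ast|u^\eps|^2$ killing the first-order terms, $A^\eps,B^\eps\leftrightarrow\nabla_y,y$) is correct and equivalent to the paper's setup, but your argument replaces the paper's Strichartz machinery by a direct $L^2$ energy estimate, and at that point there is a genuine gap. After HLS/H\"older/Sobolev, the remaining Hartree difference term gives a bound of the form $\theta(t)\|w^\eps(t)\|_{L^2}$, with
\begin{equation*}
\theta(t)\lesssim \(\|u^\eps(t)\|_{L^{2d/(d-\gamma)}}+\|u(t)\|_{L^{2d/(d-\gamma)}}\)\|u(t)\|_{L^{2d/(d-\gamma)}} .
\end{equation*}
The only control available on these pointwise-in-time $L^{2d/(d-\gamma)}$ norms is through Gagliardo--Nirenberg, $\|u(t)\|_{L^{2d/(d-\gamma)}}\lesssim\|u(t)\|_{L^2}^{1-\gamma/2}\|\nabla u(t)\|_{L^2}^{\gamma/2}$, and Proposition~\ref{prop:global} only gives $\|\nabla u(t)\|_{L^2}\lesssim e^{Ct}$ (same for $u^\eps$); so $\theta(t)$ grows exponentially. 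Gronwall then yields $\|w^\eps(t)\|_{L^2}\lesssim\sqrt\eps\,e^{C_1t}\exp\(e^{C't}\)$, a \emph{double} exponential, which is strictly weaker than the claimed $C\sqrt\eps\,e^{C_1t}$ and does not reach Ehrenfest time $C_2\ln(1/\eps)$. You acknowledge a ``delicate point'' about keeping the loss to a single exponential, but the energy estimate as set up does not achieve it, and no mechanism is supplied that would.

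The paper avoids exactly this loss. The key fact exploited in \S\ref{sec:largetime} is that, by mass conservation and the Strichartz bootstrap of Step 1 in Lemma~\ref{lem:exphartree}, the space--time norm $\|u\|_{L^{q}([t,t+1];L^{r}(\R^{d}))}$ with $(q,r)=(8/\gamma,4d/(2d-\gamma))$ is bounded \emph{uniformly in $t\ge0$}, in sharp contrast with the pointwise norm $\|u(t)\|_{L^r}$ which grows. Estimating on short intervals $[t,t+\tau]$ with scaled Strichartz (Proposition~\ref{prop:strichartz}), Hölder, and HLS, and combining with the bootstrap hypothesis \eqref{6.5} on $\|w^\eps\|_{L^q([t,t+1];L^r)}$, the paper gets an inhomogeneous Gronwall inequality whose coefficient in front of $\|w^\eps\|$ is a \emph{constant}, producing the single exponential $\sqrt\eps\,e^{C_1t}$. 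To make your approach correct you would have to replace the pointwise-in-time Sobolev bounds on $u,u^\eps$ by Strichartz-type space--time norms (so that the Gronwall coefficient stays bounded) and add the bootstrap on $\|w^\eps\|_{L^qL^r}$ to close the absorption --- at which point you have essentially reproduced the paper's proof. The same issue recurs in your $\mathcal H$-estimate, which would also accrue a double exponential from the Hartree terms.
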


\begin{remark}[Notion of criticality] We will see
  in Section~\ref{sec:smooth} that when $K$ is smooth near the origin,
  the critical value of $\alpha$ is $\alpha_c=1$, in sharp contrast
  with the homogeneous case. For more general Hartree kernels, the
  picture should be as follows.
Assume that there exists $\l\in \R\setminus\{0\}$, $\g\ge 0$,
$\delta>0$ such that 
  \begin{equation*}
    K(x)=\l |x|^{-\g} + \O\(|x|^{-\g +\delta}\) \text{ as }x\to 0,
  \end{equation*}
and that $K$ is smooth, bounded as well as its derivatives, away from
the origin. Then we expect $\alpha_c =1 +\g/2$, with critical
phenomena similar to the cases studied in this paper: like in the smooth
kernel case if
$\g=0$, and like in the homogeneous kernel case if $\g>0$ (since wave
packets are  
extremely localized, the behavior of $K$ near the origin should be the
only relevant one). 
\end{remark}
\subsection{Nonlinear superposition} In this paragraph, we consider
the critical case $\alpha=\alpha_{c}$. Suppose that initial data
have the form 
\begin{equation*}
\psi^{\varepsilon}(0,x)=\eps^{-d/4}a_{1}\(\frac{x-x_{1}}{\sqrt{\varepsilon}}\)
e^{i(x-x_{1})\cdot\xi_{1}/\varepsilon}+
\eps^{-d/4}a_{2}\(\frac{x-x_{2}}{\sqrt{\varepsilon}}\) 
e^{i(x-x_{2})\cdot\xi_{2}/\varepsilon},
\end{equation*}
where $a_{1}$, $a_{2}\in \mathcal{S}(\R^{d})$, and
$(x_{1}, \xi_{1})\neq (x_{2}, \xi_{2})$. For $j\in \{1, 2\}$,
$(x_{j}(t), \xi_{j}(t))$ are the trajectories solutions to
\eqref{eq:traj} with initial data $(x_{j}, \xi_{j})$. Let
$S_{j}(t)$ be the classical action associated with $(x_{j}(t),
\xi_{j}(t))$ given by \eqref{eq:classicalaction} and $u_{j}$ be the
solutions of \eqref{eq:nlhartree} with initial data $a_{j}$.
Assume the $\varphi^{\varepsilon}_{j}$'s are defined as in
\eqref{eq:varphi}, and $\psi^{\varepsilon}\in C(\R_{+}; \Sigma)$ is
the solution to \eqref{eq:NLS0}.
As in \cite{CaFe11}, for $f\in \Sigma$, define
\begin{equation*}
\|f\|_{\Sigma_{\varepsilon}}=\|f\|_{L^{2}(\R^{d})}+\|\varepsilon
\nabla f\|_{L^{2}(\R^{d})}+\|xf\|_{L^{2}(\R^{d})}.
\end{equation*}
 For
bounded time, we have
\begin{theorem}\label{thm:superpositionbounded}
Let $0<\gamma<\min (2, d)$  and $a_{1}, a_{2}\in
\mathcal{S}(\R^{d})$. For any $T>0$ independent of $\varepsilon$,
\begin{equation*}
\sup_{0\le t\le T}\left\|\psi^{\varepsilon}(t)-
\varphi^{\varepsilon}_{1}(t)-\varphi^{\varepsilon}_{2}(t)\right\|_{\Sigma_{\varepsilon}}
=\mathcal{O}\(\varepsilon^{\frac{\gamma}{2(1+\gamma)}}\).
\end{equation*}
\end{theorem}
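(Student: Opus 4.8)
The plan is to estimate $w^{\eps}:=\psi^{\eps}-\varphi_1^{\eps}-\varphi_2^{\eps}$ directly, through a Gronwall argument in $\Sigma_{\eps}$ on the fixed interval $[0,T]$. Since $u_j$ solves \eqref{eq:nlhartree} with $Q_j(t)=\mathrm{Hess}\,V(t,x_j(t))$, each $\varphi_j^{\eps}$ solves \eqref{eq:NLS0} \emph{exactly}, except that $V$ is replaced by its second order Taylor polynomial $\widetilde V_j(t,\cdot)$ at $x_j(t)$; because $a_j\in\mathcal S(\R^d)$ and $\d_x^{\beta}V\in L^{\infty}$ for $|\beta|\ge2$ (Assumption~\ref{hyp:V}), the discrepancy $(V-\widetilde V_j)\varphi_j^{\eps}$ is $O(\eps^{3/2})$ in $\Sigma_{\eps}$, uniformly on $[0,T]$ --- this is exactly the mechanism behind Theorem~\ref{thm:criticalhartree}. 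Plugging $\varphi_1^{\eps}+\varphi_2^{\eps}+w^{\eps}$ into \eqref{eq:NLS0}, using that the linear part is exact and that each $\varphi_j^{\eps}$ absorbs its own Hartree self-interaction, one finds that $w^{\eps}$ solves a Schr\"odinger equation with potential $V$ and a nonlinearity which is locally Lipschitz in $w^{\eps}$ on $\Sigma_{\eps}$ with $\eps$-uniform constants (here $0<\gamma<\min(2,d)$ enters, through the standard bound $\||x|^{-\gamma}\ast|f|^{2}\|_{L^{\infty}}\lesssim\|f\|_{\Sigma}^{2}$ and its weighted analogues, as in \cite{CaFe11}), plus a source term $\mathcal R^{\eps}+\eps^{\alpha}\mathcal I^{\eps}$, where $\mathcal R^{\eps}$ gathers the Taylor remainders above and
\[
\mathcal I^{\eps}=\bigl(K\ast|\varphi_1^{\eps}|^{2}\bigr)\varphi_2^{\eps}+\bigl(K\ast|\varphi_2^{\eps}|^{2}\bigr)\varphi_1^{\eps}+\bigl(K\ast 2\,\RE\bigl(\varphi_1^{\eps}\overline{\varphi_2^{\eps}}\bigr)\bigr)\bigl(\varphi_1^{\eps}+\varphi_2^{\eps}\bigr)
\]
collects the genuine cross terms of the Hartree nonlinearity. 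Since $w^{\eps}(0)=0$, Gronwall's lemma (the $\eps$-uniform Lipschitz constant producing a harmless factor $e^{Ct}$, $t\le T$) reduces the whole statement to the bound
\[
\int_{0}^{T}\eps^{\alpha-1}\|\mathcal I^{\eps}(t)\|_{\Sigma_{\eps}}\,dt=O\bigl(\eps^{\gamma/(2(1+\gamma))}\bigr),\qquad\alpha-1=\tfrac{\gamma}{2},
\]
the contribution of $\tfrac1\eps\mathcal R^{\eps}$ being $O(\sqrt\eps)$.

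The heart of the matter is this last estimate, and it relies on the fact that the two wave packets never coincide in phase space. By uniqueness for \eqref{eq:traj}, $(x_1(0),\xi_1(0))\neq(x_2(0),\xi_2(0))$ forces $(x_1(t),\xi_1(t))\neq(x_2(t),\xi_2(t))$ for every $t$; in particular, whenever $x_1(t^{*})=x_2(t^{*})$ one has $\xi_1(t^{*})\neq\xi_2(t^{*})$, i.e. $x_1-x_2$ has nonvanishing derivative at $t^{*}$. Hence on $[0,T]$ the collision set $\{t:x_1(t)=x_2(t)\}$ is finite and each collision is transverse: $|x_1(t)-x_2(t)|\gtrsim|t-t^{*}|$ near each $t^{*}$. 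I would split $[0,T]$ into an $O(\rho)$-neighbourhood of the collision set (with $\rho\ge\sqrt\eps$ a parameter) and its complement. On the complement $|x_1(t)-x_2(t)|\gtrsim\rho$, so, using the $\sqrt\eps$-scale concentration of $u_1,u_2$, the potential $K\ast|\varphi_1^{\eps}|^{2}$ is $O\bigl(|x_1(t)-x_2(t)|^{-\gamma}\bigr)$ --- hence bounded --- on the essential support of $\varphi_2^{\eps}$, while $\varphi_1^{\eps}\overline{\varphi_2^{\eps}}$ is negligible there; this gives $\eps^{\alpha-1}\|\mathcal I^{\eps}\|_{\Sigma_{\eps}}\lesssim\eps^{\gamma/2}|x_1(t)-x_2(t)|^{-\gamma}+\text{(lower order)}$, whose time integral is controlled (for $\gamma<1$ the transverse singularity is integrable, for $\gamma\ge1$ one truncates it at $|x_1(t)-x_2(t)|\sim\sqrt\eps$). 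Near a collision one has only the crude uniform bound $\|K\ast|\varphi_1^{\eps}|^{2}\|_{L^{\infty}}\lesssim\eps^{-\gamma/2}$, so $\eps^{\alpha-1}\|\mathcal I^{\eps}\|_{\Sigma_{\eps}}=O(1)$ there; integrating over the collision neighbourhood costs $O(\rho)$, which must be balanced against the error made when discarding the tails of the wave packets outside scale $\rho$. Optimizing in $\rho$ yields the exponent $\gamma/(2(1+\gamma))$.

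I expect the main obstacle to be precisely this near-collision analysis, carried out in the \emph{full} $\Sigma_{\eps}$ norm and not merely in $L^{2}$: applying the weights $\eps\nabla$ and $x$ to $\bigl(K\ast|\varphi_1^{\eps}|^{2}\bigr)\varphi_2^{\eps}$ produces terms such as $\eps\,(\nabla K)\ast|\varphi_1^{\eps}|^{2}$ and $(x-x_2(t))\bigl(K\ast|\varphi_1^{\eps}|^{2}\bigr)$, which have to be re-expressed in terms of $\eps\nabla$ and $x$ acting on $\varphi_1^{\eps},\varphi_2^{\eps}$ --- whose $\Sigma_{\eps}$-norms remain $\eps$-uniformly bounded on $[0,T]$ --- and then estimated uniformly, while still keeping, away from collisions, the decay in $|x_1(t)-x_2(t)|$. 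A secondary technical point is the persistence of regularity for \eqref{eq:nlhartree}, $a_j\in\mathcal S(\R^d)\Rightarrow u_j(t)\in\mathcal S(\R^d)$, needed to make the ``$\varphi_1^{\eps}\overline{\varphi_2^{\eps}}$ negligible away from collisions'' claim rigorous; this is part of the Cauchy theory of \S\ref{sec:profile}. The Gronwall closure, the factor $e^{Ct}$ and the treatment of $\mathcal R^{\eps}$ are routine for bounded time $T$.
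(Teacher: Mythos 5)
Your overall plan matches the paper's: write $w^\eps=\psi^\eps-\varphi_1^\eps-\varphi_2^\eps$, identify as the genuinely new source term the cross-interaction $\mathcal I^\eps$ (the paper's $N_I^\eps$), and control it by splitting $[0,T]$ according to the distance $|x_1(t)-x_2(t)|$, using decay when the packets are separated and a crude bound plus a measure estimate near collisions. Your transversality argument --- $(x_1,\xi_1)\neq(x_2,\xi_2)$ propagates by ODE uniqueness, so at a spatial collision $\dot x_1\neq\dot x_2$, whence the collision set is finite and $|x_1(t)-x_2(t)|\gtrsim|t-t^*|$ with uniform constants on $[0,T]$ --- is sound and is precisely the content of Lemma~6.2 of \cite{CaFe11} which the paper invokes for $|I^\eps(T)|=\O(\eps^\si)$. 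The paper's estimate of the interaction term (Lemma~\ref{lem:7.1}) carries out in the $\Sigma_\eps$ norm what you anticipate, using Peetre's inequality together with arbitrarily high weighted decay of $u_1,u_2$; your informal ``$K\ast|\varphi_1^\eps|^2=\O(|x_1-x_2|^{-\g})$ on the essential support of $\varphi_2^\eps$'' is the same mechanism.

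Two points, however, need to be flagged. First, the step ``one finds that $w^\eps$ solves a Schr\"odinger equation with a nonlinearity locally Lipschitz on $\Sigma_\eps$ with $\eps$-uniform constants, and Gronwall closes'' glosses over a real obstacle that the paper addresses explicitly. To make that Lipschitz estimate effective one needs $\|\,|x|^{-\g}\ast|\psi^\eps|^2\|_{L^\infty}$ to be $\O(\eps^{-\g/2})$, hence some $L^r$-type control of $\psi^\eps$ (equivalently of $w^\eps$); mass conservation alone only gives $L^2$. For a single wave packet this control is an \emph{a priori} estimate on $u^\eps$; for two packets the paper notes that the operators $A^\eps,B^\eps$ lose their geometric meaning, and it therefore runs a bootstrap on $\|w^\eps(t)\|_{L^r}\le C(T)\eps^{-\g/8}$ inside a scaled Strichartz argument (absorbing $N_S^\eps$ on small time slices) and closes the loop at the end with Gagliardo--Nirenberg, using that $\frac{\g}{2(1+\g)}-\frac\g4>-\frac\g8$. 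This is not merely ``routine Gronwall''; it is the part of the proof that keeps the error estimate self-consistent. Second, your optimization is internally inconsistent. With your sharper $L^1$-in-time estimate and the transversality bound $|x_1(t)-x_2(t)|\gtrsim|t-t^*|$, the away-from-collision contribution would be $\O(\eps^{\g/2})$ for $\g<1$ (and $\O(\sqrt\eps)$ for $\g\ge1$), and the near-collision contribution $\O(\sqrt\eps)$, giving an overall bound \emph{better} than $\eps^{\g/2(1+\g)}$; the ``tail error'' you invoke to balance against $\rho$ is not there, since the only input used is the $|\cdot|^{-\g}$ decay of the convolved potential, not Schwartz decay. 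The paper's exponent $\g/2(1+\g)$ comes instead from replacing the $L^1_t$ bound by $T\cdot\sup_t|\eta^\eps|^{-\g}\lesssim T\,\eps^{\g(1/2-\si)}$, which it prefers because the same Lemma~\ref{lem:7.1} must also serve in the Ehrenfest-time Theorem~\ref{thm:superpositionlargetime}, where the transversality constants degrade exponentially. Your claimed exponent does not follow from your own argument; stating it without the paper's $\sup\times\text{measure}$ device (or, alternatively, acknowledging that your $L^1_t$ variant actually produces a stronger estimate for bounded $T$) leaves a gap.
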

When time becomes large, we can establish a superposition property
for $d=1$, like in \cite{CaFe11}, where the condition $\g<\min (2,d)$
boils down to $\g<1$. 
\begin{theorem}\label{thm:superpositionlargetime}
Let $d=1$, $0<\gamma<1$ and $a_{1}, a_{2}\in
\mathcal{S}(\R^{d})$. Assume that $V$ does not depend on time, and
define
\begin{equation*}
E_{j}=\frac{\xi_{j}^{2}}{2}+V(x_{j}).
\end{equation*}
Suppose $E_{1}\neq E_{2}$. There exist positive constants $C, C_{1}, C_{2}$
independent of $\varepsilon$, and $\varepsilon_{0}>0$ such that for all
$\varepsilon\in ]0, \varepsilon_{0}]$,
\begin{equation*}
\|\psi^{\varepsilon}(t)-\varphi^{\varepsilon}_{1}(t)-
\varphi^{\varepsilon}_{2}(t)\|_{\Sigma_{\varepsilon}}\le
C\varepsilon^{\frac{\gamma}{2(1+\gamma)}}e^{C_{1}t}, \quad 
0\le t\le C_{2}\ln\frac{1}{\varepsilon}.
\end{equation*}
In particular, there exists a positive constant $c$ independent of
$\varepsilon$ such that
\begin{equation*}
\sup_{0\le t\le c\ln\frac{1}{\varepsilon}}\|\psi^{\varepsilon}(t)
-\varphi^{\varepsilon}_{1}(t)
-\varphi^{\varepsilon}_{2}(t)\|_{\Sigma_{\varepsilon}}\Tend \eps 0 0 .
\end{equation*}
\end{theorem}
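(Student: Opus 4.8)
The plan is to reduce Theorem~\ref{thm:superpositionlargetime} to Theorem~\ref{thm:criticalhartree} (applied separately to each wave packet) plus a decoupling estimate showing that the nonlinear interaction between the two packets is negligible for times up to $C_2\ln\frac1\eps$. First I would set $w^\eps = \psi^\eps - \varphi_1^\eps - \varphi_2^\eps$ and write the equation satisfied by $w^\eps$. Since each $\varphi_j^\eps$ is, by Theorem~\ref{thm:criticalhartree}, an approximate solution of \eqref{eq:NLS0} up to an error $\O(\sqrt\eps\, e^{C_1 t})$ in $\Sigma_\eps$-type norms (here one uses the $d=1$, $\gamma<1$ hypothesis so that $\mathcal H$ and $\Sigma_\eps$ control what is needed, and $a_j\in\mathcal S\subset\Sigma^4$), the source term in the equation for $w^\eps$ consists of: (i) two individual consistency errors, each $\O(\sqrt\eps e^{C_1t})$; (ii) the genuinely nonlinear cross terms coming from expanding $\bigl(K\ast|\varphi_1^\eps+\varphi_2^\eps+w^\eps|^2\bigr)(\varphi_1^\eps+\varphi_2^\eps+w^\eps) - \sum_j \bigl(K\ast|\varphi_j^\eps|^2\bigr)\varphi_j^\eps$. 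The terms in (ii) that do not involve $w^\eps$ are the crucial ones: they are products of the form $\bigl(K\ast(\varphi_1^\eps\overline{\varphi_2^\eps})\bigr)\varphi_k^\eps$ and $\bigl(K\ast|\varphi_j^\eps|^2\bigr)\varphi_k^\eps$ with $j\neq k$.

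The heart of the matter is to show these cross terms are small in $L^1_{loc}(dt;\Sigma_\eps)$ after the $\eps$-scaling, gaining a power $\eps^{\gamma/(2(1+\gamma))}$. This is where the hypothesis $E_1\neq E_2$ and the one-dimensionality enter decisively, exactly as in \cite{CaFe11}: by conservation of the classical energy $E_j=\frac{\xi_j(t)^2}2+V(x_j(t))$ along the time-independent Hamiltonian flow, the two classical trajectories $(x_1(t),\xi_1(t))$ and $(x_2(t),\xi_2(t))$ in the phase plane lie on distinct energy curves, hence are separated in phase space; in position space the packets, each concentrated in a $\sqrt\eps$-neighborhood of $x_j(t)$ with frequency $\xi_j(t)/\eps$, either separate spatially or, if they cross, do so with distinct momenta producing a non-stationary phase. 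One then splits time into the regime where $|x_1(t)-x_2(t)|\gtrsim\eps^\theta$ for a suitable $\theta\in(0,1/2)$ — where the rapid decay of $a_j$ forces $\|\varphi_1^\eps\varphi_2^\eps\|$ to be $\O(\eps^\infty)$ and the convolution with $|x|^{-\gamma}$ costs only a controlled negative power — and the regime of an "encounter," of total duration $\O(\eps^{?})$ because the relative velocity $|\xi_1(t)-\xi_2(t)|$ is bounded below there by $E_1\neq E_2$; on the latter, crude bounds (Hardy–Littlewood–Sobolev for the $|y|^{-\gamma}$ convolution in the scaled variable, which is exactly where $\gamma<1=\min(2,d)$ is used) times the short duration give the exponent $\gamma/(2(1+\gamma))$. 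Summing the at most $\O(e^{C_1 t})$ encounters occurring up to time $t$ reproduces the $e^{C_1 t}$ factor.

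With the cross terms controlled, I would close the argument by a Gronwall / bootstrap estimate on $\|w^\eps(t)\|_{\Sigma_\eps}$: the remaining $w^\eps$-dependent nonlinear terms are handled by the same energy estimates that prove Theorem~\ref{thm:criticalhartree} (using that $A^\eps,B^\eps$ and $\eps\nabla,\,x$ almost commute with the linear flow, with commutators absorbed into the exponential), the external potential contributes the $C_1$ in the exponent via Assumption~\ref{hyp:V} and Lemma~\ref{lem:traj}, and one obtains $\|w^\eps(t)\|_{\Sigma_\eps}\le C\bigl(\eps^{\gamma/(2(1+\gamma))}+\sqrt\eps\bigr)e^{C_1 t}$ as long as the right-hand side stays $\O(1)$, i.e.\ for $t\le C_2\ln\frac1\eps$; since $\gamma/(2(1+\gamma))<\frac12$, the first term dominates. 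The bounded-time statement Theorem~\ref{thm:superpositionbounded} is the special case where the Gronwall factor is just a constant, so in practice I would prove the large-time estimate directly.

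I expect the main obstacle to be the decoupling estimate for the cross terms, specifically making rigorous the "short encounter" analysis for the \emph{nonlocal} nonlinearity: unlike the local case of \cite{CaFe11}, the convolution $|x|^{-\gamma}\ast(\varphi_1^\eps\overline{\varphi_2^\eps})$ does not vanish merely because $\varphi_1^\eps$ and $\varphi_2^\eps$ have disjoint (essential) supports — it only decays polynomially in the separation. Quantifying this decay, i.e.\ bounding $\|\,|x|^{-\gamma}\ast(\varphi_1^\eps\overline{\varphi_2^\eps})\,\|_{L^\infty}$ in terms of $|x_1(t)-x_2(t)|$ and of the Schwartz seminorms of the $u_j$'s (with the needed uniformity in $t$ supplied by the $\Sigma^k$-regularity of $u_j(t)$ growing at most exponentially), and balancing the resulting negative power of the separation against the positive power gained off the encounter set, is the delicate point; everything else is a by-now standard adaptation of the energy estimates underlying Theorem~\ref{thm:criticalhartree}.
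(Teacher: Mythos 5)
Your proposal follows the paper's Section~7 strategy essentially step for step: setting $w^\eps=\psi^\eps-\varphi_1^\eps-\varphi_2^\eps$, decomposing the nonlinear source into a semilinear part $N_S^\eps$ (absorbed by the single-packet energy estimates) and an interaction term $N_I^\eps$, splitting time into the encounter set $I^\eps(T)=\{t:|x_1(t)-x_2(t)|\le\eps^\si\}$ and its complement, using Peetre-type weight decay together with Hardy--Littlewood--Sobolev to control the nonlocal convolution off $I^\eps(T)$, invoking $d=1$ and $E_1\neq E_2$ (via Lemma~6.3 of \cite{CaFe11}) to get $|I^\eps(t)|\lesssim\eps^\si e^{C_1t}|E_1-E_2|^{-2}$, optimizing $\si=\gamma/(2(1+\gamma))$, and closing with a bootstrap/Gronwall argument under the assumption \eqref{7.4}. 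One minor remark: your aside about a ``non-stationary phase'' at crossings plays no role in the actual proof -- as you yourself go on to describe, the argument is purely positional, trading the polynomial separation decay of the convolution against the short total duration of encounters.
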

\begin{notation}
Throughout the paper, $C$ denotes a constant independent of $\eps$
and $t$, whose value may change from one 
line to the other. For two positive numbers $a^\eps$ and $b^\eps$,
the notation $ a^\eps\lesssim b^\eps$ means that there exists $C>0$
\emph{independent of} $\eps$ such that for all $\eps\in ]0,1]$,
$a^\eps\le Cb^\eps$.
\end{notation}

\section{Smooth kernel} \label{sec:smooth}
  In this section, we shall assume that the kernel $K$ satisfies:
 \begin{hyp}\label{hyp:smooth}
  The kernel is bounded as well as  its first three derivatives, and smooth
  near the origin: for some neighborhood $\omega$ of the origin in $\R^d$,
  \begin{equation*}
  K\in W^{3,\infty}(\R^d)\cap C^3(\omega).
  \end{equation*}
\end{hyp}
This assumption is similar to the one made in \cite{APPP-p}. We
first establish the local and global existence for the solution for
\eqref{eq:NLS0} at the $L^{2}$ level.

\subsection{Construction of the exact solution}
We prove that for fixed $\eps>0$, \eqref{eq:hartreegen} has a unique,
global in time solution under Assumption~\ref{hyp:V}. Since for such a
result, $\eps$ is irrelevant, we shall consider the case $\eps=1$. 

 \begin{lemma}\label{lem:smoothglobal}
  Let $V$ satisfy Assumption~\ref{hyp:V}, $K\in L^\infty(\R^d)$ and $\psi_0\in
  L^2(\R^d)$. There exists 
  a unique solution  $\psi \in C\(\R_+;L^2(\R^d)\)$ to
  \begin{equation*}
    i\d_t \psi +\frac{1}{2}\Delta \psi = V(t,x)\psi + \(K\ast
    |\psi|^2\)\psi\quad ;\quad \psi_{\mid t=0}=\psi_0.
  \end{equation*}
 In addition, it satisfies
  $  \|\psi(t)\|_{L^2(\R^d)} = \|\psi_0\|_{L^2(\R^d)}$ for all time
  $t\ge 0$.
\end{lemma}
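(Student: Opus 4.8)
The plan is to combine the unitary propagator generated by the linear part with a contraction argument, exploiting the fact that a bounded kernel makes the Hartree nonlinearity locally Lipschitz on $L^2(\R^d)$, and then to upgrade the local solution to a global one via conservation of mass.

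First, under Assumption~\ref{hyp:V} the linear operator associated with $i\d_t +\frac12\Delta - V(t,x)$ — with $V$ real-valued, smooth, and at most quadratic in $x$ — generates a strongly continuous two-parameter family of unitary operators $U(t,s)$ on $L^2(\R^d)$; this is classical for (possibly time-dependent) at most quadratic potentials. I would record this and reformulate the equation in mild (Duhamel) form:
\begin{equation*}
  \psi(t) = U(t,0)\psi_0 - i\int_0^t U(t,s)\bigl(K\ast|\psi(s)|^2\bigr)\psi(s)\,ds .
\end{equation*}
Note that treating $V\psi$ perturbatively relative to the free propagator is not an option here, since $V\psi\notin L^2$ for general $L^2$ data; the propagator $U(t,s)$ must absorb the quadratic potential.

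Second, the key elementary observation: for $K\in L^\infty(\R^d)$, Young's inequality gives $\|K\ast|f|^2\|_{L^\infty}\le\|K\|_{L^\infty}\|f\|_{L^2}^2$, hence $\|(K\ast|f|^2)f\|_{L^2}\le\|K\|_{L^\infty}\|f\|_{L^2}^3$, and more generally
\begin{equation*}
  \bigl\|(K\ast|f|^2)f-(K\ast|g|^2)g\bigr\|_{L^2}\le C\|K\|_{L^\infty}\bigl(\|f\|_{L^2}^2+\|g\|_{L^2}^2\bigr)\|f-g\|_{L^2} .
\end{equation*}
Thus $F(\psi):=(K\ast|\psi|^2)\psi$ is locally Lipschitz from $L^2$ to $L^2$. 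Since each $U(t,s)$ is an isometry of $L^2(\R^d)$, the right-hand side of the Duhamel formula maps a suitable ball of $C([0,T];L^2(\R^d))$ into itself and is a contraction there, provided $T=T(\|\psi_0\|_{L^2})>0$ is chosen small enough. The Banach fixed point theorem then yields a unique local solution, and the usual continuation argument shows it persists as long as the $L^2$ norm remains bounded.

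Third, mass conservation. Formally, pairing the equation with $\psi$ and taking the imaginary part kills the kinetic term, the potential term (because $V$ is real), and the Hartree term (because $K$ is real-valued, so $(K\ast|\psi|^2)|\psi|^2$ is real), giving $\frac{d}{dt}\|\psi(t)\|_{L^2}^2=0$. To make this rigorous at the $L^2$ level, I would argue by approximation: pick $\psi_0^n\in\Sigma$ with $\psi_0^n\to\psi_0$ in $L^2$, run the same fixed-point scheme in $\Sigma$ to obtain solutions $\psi^n$ for which the integration by parts above is licit (hence $\|\psi^n(t)\|_{L^2}=\|\psi_0^n\|_{L^2}$), and pass to the limit using the Lipschitz estimate, which provides continuous dependence in $C([0,T];L^2)$ on a common time interval. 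This gives $\|\psi(t)\|_{L^2}=\|\psi_0\|_{L^2}$, and since the local existence time depends only on $\|\psi_0\|_{L^2}$, the solution extends to all of $\R_+$, i.e. $\psi\in C(\R_+;L^2(\R^d))$. The one genuinely delicate point — and the step I expect to be the main obstacle — is precisely this last one: justifying the conservation law with merely $L^2$ data, where $\langle\Delta\psi,\psi\rangle$ and $\langle V\psi,\psi\rangle$ are not a priori defined. The approximation argument just sketched is the standard remedy (a Kato-type regularization by a frequency cutoff, commuted through the equation with control of the commutator, would serve equally well); everything else is routine.
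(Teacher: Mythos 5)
Your proposal is correct and follows essentially the same route as the paper: the Duhamel formulation with the Fujiwara-type propagator $U(t,s)$ for the at-most-quadratic potential, a contraction in $C([0,T];L^2)$ using the elementary bound $\|(K\ast|f|^2)f\|_{L^2}\le\|K\|_{L^\infty}\|f\|_{L^2}^3$, and globalization via mass conservation together with the fact that the local time depends only on $\|\psi_0\|_{L^2}$. The only difference is one of emphasis — the paper dispatches mass conservation with "by classical arguments," whereas you spell out the standard $\Sigma$-regularization argument that justifies it; this is precisely what those "classical arguments" refer to.
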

 \begin{proof}
Since $V$ may depend on time, we consider the more general Cauchy
problem with a varying initial time:
 \begin{equation}\label{eq:NLS1}
i\d_t \psi +\frac{1}{2}\Delta \psi = V(t,x)\psi + \(K\ast
    |\psi|^2\)\psi\quad ;\quad \psi_{\mid t=s}=\psi_s,
 \end{equation}
with $\psi_s\in L^2(\R^d)$. In view of Assumption~\ref{hyp:V}, the
linear case generates a unitary semigroup (\cite{Fujiwara79,Fujiwara}),
which we denote by $U(t,s)$: $\psi(t)= U(t,s)\psi_s$ when
$K=0$.
  
In the nonlinear case, using  Duhamel's formula, we can write
~\eqref{eq:NLS1} as 
\begin{equation*}
\psi(t)=U(t,s)\psi_s-i\int_{s}^{t}
U(t,\tau)\(K\ast |\psi|^{2}\)\psi(\tau)d\tau:=\Phi^{s}(\psi)(t).
\end{equation*}
For $s\ge 0$ and $T>0$, denote $I_{s,T}=[s, s+T]$ and introduce the space
\begin{equation*}
X_{s,T}=\left\{\psi\in C\(I_{s,T}, L^{2}(\R^{d})\);
\|\psi\|_{L^{\infty}(I_{s,T}; L^{2}(\R^{d}))}\le
2\|\psi_s\|_{L^{2}(\R^{d})}\right\}.
\end{equation*}
Let $\psi,
\psi_{1},\psi_{2}\in X_{s,T}$. Using  H\"older
inequality, we have
\begin{align*}
\|\Phi^{s}(\psi)\|_{L^\infty(I_{s,T}; L^{2}(\R^{d}))}&\le
\|\psi_s\|_{L^{2}(\R^{d})}+\|K*|\psi|^{2}\psi\|_{L^{1}(I_{s,T};
L^{2}(\R^{d}))}\\
&\le
\|\psi_s\|_{L^{2}(\R^{d})}+T\|K\|_{L^\infty(\R^d)}\|\psi\|_{L^{\infty}(I_{T};
L^{2}(\R^{d}))}^{3}\\
&\le 
\|\psi_s\|_{L^{2}(\R^{d})}+8T\|K\|_{L^\infty(\R^d)}\|\psi_s\|_{L^{2}(\R^{d})}^{3}.
\end{align*}
Observe that
\begin{align*}
&K\ast |\psi_{1}|^{2}\psi_{1}-K\ast |\psi_{2}|^2\psi_{2}\\
=&\frac{1}{2}K*(|\psi_{1}|^{2}+|\psi_{2}|^{2})(\psi_{1}-\psi_{2})+\frac{1}{2}K*
(|\psi_{1}|^{2}-|\psi_{2}|^{2})(\psi_{1}+\psi_{2}).
\end{align*}
Then by similar arguments as above, we have
\begin{equation*}
\|\Phi^{s}(\psi_{1})-\Phi^{s}(\psi_{2})\|_{L^\infty(I_{s,T};
 L^{2})}\le C
T\|K\|_{L^\infty}\|\psi_s\|_{L^{2}}^2
\|\psi_1-\psi_2\|_{L^\infty(I_{s,T};  L^{2})}.
\end{equation*}
Taking $T$ small enough,
we conclude $\Phi^{s}$ is a contraction from $X_{s,T}$ into itself, and
there exists a unique local solution $\psi\in C(I_{s,T};
L^{2}(\R^{d}))$ to \eqref{eq:NLS1}. By classical arguments, the
$L^2$-norm of $\psi$ does not depend on time, and since $T$ depends
only on $\|\psi_s\|_{L^2}$, the solution is global in time. 
 \end{proof}

 \subsection{The general strategy}

 As in \cite{CaFe11}, seek an approximate solution of the form
\begin{equation}
  \label{eq:approx}
  \varphi^\eps(t,x)=\eps^{-d/4} u 
\left(t,\frac{x-x(t)}{\sqrt\eps}\right)e^{i\left(S(t)+\xi(t)\cdot
    (x-x(t))\right)/\eps},
\end{equation}
for some profile $u$ independent of $\eps$, and some function $S(t)$
to be determined. When $K=0$, $S$ is the 
\emph{classical action} defined in \eqref{eq:classicalaction}.
We will see that according to the value $\alpha$ in \eqref{eq:NLS0},
the expression of $S$ may vary, accounting for nonlinear effects due to the
presence of the Hartree nonlinearity. 

In the cases $\alpha=0,1/2,1$ and $\alpha>1$, we will see that we can write
\begin{equation}\label{eq:generalstrategy}
  \begin{aligned}
   i\eps \d_t \varphi^\eps + \frac{\eps^2}{2}\Delta \varphi^\eps &-
   V\varphi^\eps -\eps^\alpha\(K\ast 
  |\varphi^\eps|^2\)\varphi^\eps  = \\
&\eps^{-d/4}e^{i\left(S(t)+\xi(t)\cdot
    (x-x(t))\right)/\eps}\(b_0+\sqrt \eps b_1 + \eps b_2 + \eps r^\eps\),
  \end{aligned}
\end{equation}
for $b_0, b_1,b_2$ independent of $\eps$. The approximate solution
$\varphi^\eps$ will be determined by the conditions
\begin{equation*}
  b_0=b_1=b_2=0.
\end{equation*}
The remaining factor $r^\eps$ will account for the error between the
exact solution $\psi^\eps$ and the approximate solution
$\varphi^\eps$. Denote 
\begin{equation*}
  \phi(t,x) = S(t)+\xi(t)\cdot    (x-x(t)).
\end{equation*}
The linear terms are computed as follows:
\begin{align*}
  i\eps\d_t \varphi^\eps &= \eps^{-d/4}e^{i\phi(t,x)/\eps}\(i\eps \d_t
  u - i\sqrt \eps \dot x(t)\cdot \nabla u - u\d_t \phi \).\\
\frac{\eps^2}{2}\Delta \varphi^\eps &=
\eps^{-d/4}e^{i\phi(t,x)/\eps}\( \frac{\eps}{2}\Delta u +i\sqrt\eps
\xi(t)\cdot \nabla u -\frac{|\xi(t)|^2}{2}u\). 
\end{align*}
Here, as well as below, one should remember that the functions are
assessed as in \eqref{eq:approx}. Recalling that the relevant space
variable for $u$ is  
\begin{equation*}
  y=\frac{x-x(t)}{\sqrt\eps},
\end{equation*}
we have:
\begin{equation*}
  \d_t \phi = \dot S(t) + \frac{d}{dt}\(\xi(t)\cdot (x-x(t))\) =
  \dot S(t) +\sqrt \eps \dot \xi(t)\cdot y-\xi(t)\cdot \dot x(t).
\end{equation*}
For the linear potential term, we compute, in terms of the variable $y$,
\begin{equation*}
  V\varphi^\eps =
  V(t,x)\eps^{-d/4}e^{i\phi(t,x)/\eps}u\(t,y\)=
\eps^{-d/4}e^{i\phi(t,x)/\eps} V\(t,x(t)+y\sqrt\eps\) u\(t,y\),
\end{equation*}
and we perform a Taylor expansion for $V$ about $(t,x(t))$:
\begin{align*}
  V\(t,x(t)+y\sqrt\eps\)u(t,y) &= V\(t,x(t)\) u(t,y)+ \sqrt \eps 
y\cdot \nabla V\(t,x(t)\)u(t,y)\\
&\quad  + \frac{\eps}{2}\<y,\nabla^2
V\(t,x(t)\)y\>u(t,y) 
+ \eps^{3/2} r_V^\eps(t,y),
\end{align*}
with
\begin{equation*}
  |r_V^\eps(t,y)|\le C\<y\>^3|u(t,y)|,
\end{equation*}
for some $C$ independent of $\eps$, $t$ and $y$, in view of
Assumption~\ref{hyp:V}. 
In the case $K=0$, we come up with the relations:
\begin{align*}
  b_0^{\rm lin} &= -u\( \dot S(t)-\xi(t)\cdot \dot x(t)
  +\frac{|\xi(t)|^2}{2}+V\(t,x(t)\)\) .\\
b_1^{\rm lin}&= -i\( \dot x(t)-\xi(t)\)\cdot \nabla u-y\cdot \(\dot \xi(t)+
\nabla V\(t,x(t)\)\) u.\\
b_2^{\rm lin} &=  i\d_t u +\frac{1}{2}\Delta u -
\frac{1}{2}\<y,\nabla^2 V\(t,x(t)\)y\>u. 
\end{align*}
For the nonlinear term, we compute similarly
\begin{align*}
  \(K\ast
  |\varphi^\eps|^2\)(t,x)&= \int K(x-z)|\varphi^\eps(t,z)|^2dz\\
  &=\eps^{-d/2} \int K(x-z)\left\lvert
    u\(t,\frac{z-x(t)}{\sqrt\eps}\)\right\rvert^2dz \\
& = \int K\(x-x(t)-z\sqrt\eps\)\left\lvert
    u\(t,z\)\right\rvert^2dz.
\end{align*}
We have, in terms of the variable $y$:
\begin{equation*}
  \(K\ast
  |\varphi^\eps|^2\)(t,x)=\int K\((y-z)\sqrt\eps\)\left\lvert
    u\(t,z\)\right\rvert^2dz.
\end{equation*}
This is where the smoothness of $K$ near the origin becomes important:
performing a Taylor expansion, we write
\begin{align*}
  K\((y-z)\sqrt\eps\)& = K(0) + \sqrt\eps (y-z)\cdot \nabla K(0) +
  \frac{\eps}{2} \<y-z,\nabla^2 K(0)(y-z)\> \\
&\quad + \eps^{3/2} r_K^\eps(y-z),
\end{align*}
with
\begin{equation*}
  |r_K^\eps(y-z)|\le C \<y-z\>^3,
\end{equation*}
for some $C$ independent of $\eps$, $y$ and $z$.  Therefore, using the
conservation of mass,
\begin{align*}
  \eps^{d/4}e^{-i\phi/\eps}&\(K\ast
  |\varphi^\eps|^2\)\varphi^\eps(t,x) = K(0)\|a\|_{L^2}^2 u(t,y) +
  \sqrt\eps \|a\|_{L^2}^2 y\cdot \nabla K(0)u(t,y)\\
&-\sqrt\eps \nabla K(0)\cdot G(t) u(t,y)
+\frac{\eps}{2}\<y,\nabla^2K(0)y\>\|a\|_{L^2}^2 u(t,y) \\
&+\frac{\eps}{2} \int \<z,\nabla^2K(0)z\>|u(t,z)|^2dz\times u(t,y)\\
&- \eps \<\nabla^2K(0)G(t),y\>u(t,y),
\end{align*}
where the notation $G(t)$ stands for
\begin{equation*}
  G(t)= \int_{\R^d}z|u(t,z)|^2dz.
\end{equation*}
We then discuss the outcome in \eqref{eq:generalstrategy} according to
the value of $\alpha$, on a formal level. We present the strategy to
justify the approximation in the case $\alpha=0$ only, since this case
contains all the arguments needed to treat the other cases. 

\subsection{Subcritical case: $\alpha >1$}
\label{sec:subsmooth}

When $\alpha>1$, by have $b_j=b_j^{\rm lin}$ for $j=0,1,2$, and 
\begin{equation*}
  r^\eps = \sqrt\eps r_V^\eps + \eps^{\alpha-1}\(K\ast
  |\varphi^\eps|^2\)\varphi^\eps.
\end{equation*}
Solving the equations $b_0^{\rm lin}=b_1^{\rm lin}=b_2^{\rm lin}$
leads to the approximate solution $\varphi^\eps_{\rm lin}$ defined in
Section~\ref{sec:linear}. 
\subsection{The critical case: $\alpha=1$}
\label{sec:alpha1}
When $\alpha=1$, we still have $b_j=b_j^{\rm lin}$ for $j=0,1$, but
the expression for $b_2$ is altered:
\begin{equation*}
  b_2 = i\d_t u +\frac{1}{2}\Delta u -
\frac{1}{2}\<y,\nabla^2 V\(t,x(t)\)y\>u-K(0)\|a\|_{L^2}^2 u.
\end{equation*}
The equation $b_2=0$ is the linear envelope equation \eqref{eq:linearu}, plus a
\emph{constant} potential, $K(0)\|a\|_{L^2}^2$. We infer
\begin{equation*}
  u(t,y)=u_{\rm lin}(t,y)e^{-i t K(0)\|a\|_{L^2}^2}.
\end{equation*}
The presence of this phase shift accounts for 
nonlinear effects at leading order in the approximate wave packet
$\varphi^\eps$. For the remainder term, we have:
\begin{align*}
  r^\eps(t,y) &= \sqrt\eps r_V^\eps(t,y) + \sqrt\eps \|a\|_{L^2}^2 
y\cdot \nabla K(0)u(t,y)\\
&-\sqrt\eps \nabla K(0)\cdot G(t) u(t,y)
+\frac{\eps}{2}\<y,\nabla^2K(0)y\>\|a\|_{L^2}^2 u(t,y) \\
&+\frac{\eps}{2} \int \<z,\nabla^2K(0)z\>|u(t,z)|^2dz\times u(t,y)\\
&- \eps \<\nabla^2K(0)G(t),y\>u(t,y),
\end{align*}
and we infer the (rough) pointwise estimate
\begin{equation}\label{eq:reste}
  |r^\eps(t,y)|\le C\sqrt\eps \<y\>^3|u(t,y)|\(1+ \|u(t)\|_{\Sigma}^2\).
\end{equation}

\subsection{A supercritical case: $\alpha=1/2$} 
\label{sec:alpha12}
For $\alpha<1$, we have to assume either $\alpha=1/2$ or $\alpha=0$ in
order to derive functions $b_j$ which do not depend on $\eps$. For the
simplicity of the presentation, we therefore stick to these cases, but
essentially, the case $1/2<\alpha<1$ is treated like the case
$\alpha=1/2$, and the case $0<\alpha<1/2$ like the case $\alpha=0$. 
\smallbreak

In the case $\alpha=1/2$, we still have $b_0=b_0^{\rm lin}$, but now
with
\begin{align*}
  b_1&= -i\( \dot x(t)-\xi(t)\)\cdot \nabla u-y\cdot \(\dot \xi(t)+
\nabla V\(t,x(t)\)\) u -K(0)\|a\|_{L^2}^2 u. \\
b_2 & = i\d_t u +\frac{1}{2}\Delta u -
\frac{1}{2}\<y,\nabla^2 V\(t,x(t)\)y\>u- \|a\|_{L^2}^2 y\cdot \nabla K(0) u+
\nabla K(0)\cdot G(t) u. 
\end{align*}
At this stage, it is easy to convince oneself that 
the equations $b_0=b_1=b_2=0$ are not compatible in general (if one
wants to consider a non-zero solution $u$). Therefore, we modify our
strategy, in order to allow $b_0$ to depend on $\eps$, so we can
upgrade the last factor in $b_1$ to $b_0$. This leads to:
\begin{align*}
  b_0^\eps&= -u\( \dot S(t)-\xi(t)\cdot \dot x(t)
  +\frac{|\xi(t)|^2}{2}+V\(t,x(t)\)+\sqrt\eps K(0)\|a\|_{L^2}^2\).\\
b_1&= -i\( \dot x(t)-\xi(t)\)\cdot \nabla u-y\cdot \(\dot \xi(t)+
\nabla V\(t,x(t)\)\) u .\\
b_2&= i\d_t u +\frac{1}{2}\Delta u -
\frac{1}{2}\<y,\nabla^2 V\(t,x(t)\)y\>u- \|a\|_{L^2}^2 y\cdot \nabla K(0) u+
\nabla K(0)\cdot G(t) u. 
\end{align*}
Keeping $(x(t),\xi(t))$ solution to the Hamiltonian flow
\eqref{eq:traj} leads to the $\eps$-dependent action:
\begin{equation*}
S^\eps(t) = \int_{0}^{t}\(\frac{1}{2}
    |\xi(s)|^2-V(s,x(s))\)ds-t\sqrt{\eps} K(0)\|a\|_{L^2(\R^d)}^2  .
\end{equation*}
The equation $b_1=0$ is then fulfilled as soon as we consider the
standard Hamiltonian flow. The equation $b_2=0$ is an envelope
equation, which is nonlinear since $G$ is a nonlinear function of
$u$. Note however that this yields a purely time-dependent potential.
Up to the time-dependent gauge
transform
\begin{equation*}
u(t, y)\mapsto u(t, y)\exp\(i\int_{0}^{t}
\nabla K(0)\cdot G(s)ds\),
\end{equation*}
which preserves the modulus of the unknown, hence $G$, the equation for
$u$ becomes a linear profile equation. Finally, we still have a
remainder term satisfying \eqref{eq:reste}. 

\subsection{Another supercritical case: $\alpha=0$} 
This case corresponds to the one studied in \cite{APPP-p}. We 
consider a more general framework though, since for instance we do not
assume that the kernel $K$ is radially symmetric. 
We now have
\begin{align*}
  b_0&= -u\( \dot S(t)-\xi(t)\cdot \dot x(t)
  +\frac{|\xi(t)|^2}{2}+V\(t,x(t)\)+K(0)\|a\|_{L^2}^2\), \\
b_1&=-i\( \dot x(t)-\xi(t)\)\cdot \nabla u-y\cdot \(\dot \xi(t)+
\nabla V\(t,x(t)\)\) u -
   \|a\|_{L^2}^2 y\cdot \nabla K(0)u\\
&\quad + \nabla K(0)\cdot G(t) u,\\
b_2&=  i\d_t u +\frac{1}{2}\Delta u- \frac{1}{2}\<y, M(t)y\>u +
\<\nabla^2K(0) G(t),y\>u \\
&\quad -\frac{1}{2} \int \<z,\nabla^2K(0)z\>|u(t,z)|^2dz\times u,
\end{align*}
where we have denoted 
\begin{equation*}\label{eq:M}
M(t)=\|a\|_{L^{2}(\R^{d})}^{2}\nabla^2 K(0)+\nabla^2_xV\(t,x(t)\).
\end{equation*}
Note that $M\in L^\infty_t(\R_+)$. We will assume
\begin{equation*}
  \nabla K(0)=0,
\end{equation*}
so $b_1=0$ as soon as $(x(t),\xi(t))$ satisfies \eqref{eq:traj}. 
 This assumption is a consequence of the framework in \cite{APPP-p},
 since the authors  
 suppose $K(x)=F(|x|)$ with $F$ even. Note that the slightly more
 general assumption $K(x)=K(-x)$ is physically relevant, in the sense
 that in that case, an energy can be associated to the Hartree
 nonlinearity (see e.g. \cite{CazCourant}):
 \begin{equation*}
   \iint K(x-y)\lvert \psi^\eps(t,x)\rvert^2\lvert\psi^\eps(t,y)\rvert^2dxdy.
 \end{equation*}
In the  case where $K$ is even, we obviously have $\nabla K(0)=0$. 
We then consider
 the Hamiltonian flow \eqref{eq:traj},  the modified  action
 \begin{equation}\label{eq:actionmodifiee}
   S(t) = \int_{0}^{t}\(\frac{1}{2}
    |\xi(s)|^2-V(s,x(s))\)ds-t K(0)\|a\|_{L^2(\R^d)}^2  ,
 \end{equation}
and the envelope equation $b_2=0$.
The remainder term still satisfies \eqref{eq:reste}.
\begin{remark}
Note that the Wigner measure of $\psi^{\eps}$ is not affected by the
nonlinearity:
\begin{equation*}
w(t, x, \xi)=\|a\|_{L^{2}(\R^{d})}^2\delta\(x-x(t)\)\otimes \delta
\(\xi-\xi(t)\),
\end{equation*}
in the four cases $\alpha>1$, $\alpha=1$, $\alpha=1/2$ and
$\alpha=0$, even though we have seen that the Hartree nonlinearity
does affect the leading order behavior of the wave function.
\end{remark}

\subsection{Sketch of the proof in the case $\alpha=0$}

We want to construct a solution to 
\begin{equation}
  \label{eq:usmooth}
  \begin{aligned}
     i\d_t u +\frac{1}{2}\Delta u&= \frac{1}{2}\<y, M(t)y\>u-
\<\nabla^2K(0) G(t),y\>u \\
&\quad + \frac{1}{2} \int \<z,\nabla^2K(0)z\>|u(t,z)|^2dz\times u, 
  \end{aligned}
\end{equation}
with initial datum $a$.
Introduce the solution to
\begin{equation}\label{eq:v}
  i\d_t v +\frac{1}{2}\Delta v = \frac{1}{2}\<y, M(t)y\>v-
\<\nabla^2K(0) {G}(t),y\>v \quad ;\quad v(0,y)=a(y).
\end{equation}
The functions $u$ and $v$ solve the same equation,
up to one term which can be absorbed by the 
gauge transform $u(t,y) =v(t,y) \exp\(i\theta (t)\)$, where
\begin{equation*}
\theta(t)= -\frac{1}{2}\int_{0}^{t}\int_{\R^{d}}
\<z,\nabla^2K(0)z\>|u(s,z)|^2dz ds.
\end{equation*}
Since this gauge transform does not affect the modulus of the
solution, one should consider that in \eqref{eq:v}, 
\begin{equation*}
  G(t) = \int_{\R^d}z|v(t,z)|^2dz.
\end{equation*}
\begin{lemma}\label{lem:alpha0}
  Let $k\ge 1$ and assume that $a\in \Sigma^k$. Then \eqref{eq:v} has a
  unique solution $v\in C(\R_+;\Sigma^k)$, and there exists $C$ such
  that
  \begin{equation*}
    \|v(t)\|_{\Sigma^k}\le Ce^{Ct},\quad t\ge 0. 
  \end{equation*}
As a consequence, \eqref{eq:usmooth} has a unique solution, which
possesses the same properties.
\end{lemma}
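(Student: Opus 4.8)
The plan is to treat \eqref{eq:v} as a linear Schr\"odinger equation with a time-dependent, at most quadratic (in fact affine-perturbed quadratic) potential, and then recover \eqref{eq:usmooth} by the gauge transform already described. The key observation is that $M(t)$ is symmetric, real and bounded (as noted in the excerpt, $M\in L^\infty_t$), and the extra term $-\langle \nabla^2K(0)G(t),y\rangle v$ is linear in $y$ with a bounded, continuous coefficient, since $G(t)$ is controlled by $\|v(t)\|_{\Sigma^1}^2$ via Cauchy--Schwarz. Hence \eqref{eq:v} falls into the classical framework of Schr\"odinger equations with subquadratic time-dependent potentials, for which a unitary propagator on $L^2(\R^d)$ exists and which propagates $\Sigma^k$ regularity; I would invoke the same semigroup theory already cited in the paper (Fujiwara, and the $\Sigma^k$ propagation estimates in the style of \cite{CR97,CaFe11}).

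The concrete steps I would carry out are: (i) Local existence in $\Sigma^1=\Sigma$. Write \eqref{eq:v} in Duhamel form against the free (or the quadratic-potential) propagator $U(t,s)$, treat $-\langle \nabla^2K(0)G(t),y\rangle v$ as the ``nonlinearity''; since $|G(t)|\lesssim \|v(t)\|_\Sigma^2$ and multiplication by $y$ is exactly the operator $B$ (up to scaling) whose action is controlled in $\Sigma$, a fixed-point argument in $C([0,T];\Sigma)$ closes for $T$ small depending on $\|a\|_\Sigma$. (ii) Global existence via an a priori bound: because the equation is $L^2$-unitary, $\|v(t)\|_{L^2}=\|a\|_{L^2}$ is conserved, so $G(t)$ is bounded uniformly in $t$ in terms of $\|v(t)\|_\Sigma$; then differentiating the commutators of $A^\eps,B^\eps$-type operators (here simply $\nabla$ and $y$) with the equation and using the boundedness of $M$ and of the affine coefficient gives a closed Gronwall inequality $\frac{d}{dt}\|v(t)\|_\Sigma \lesssim \|v(t)\|_\Sigma$, hence $\|v(t)\|_\Sigma\le Ce^{Ct}$ and global existence. (iii) Higher regularity: for $k\ge 2$, commute $x^\alpha\partial^\beta$ with the equation; the quadratic potential contributes only lower-order terms, the affine term is harmless, and again conservation of mass plus the already-established $\Sigma^1$-bound feeds a Gronwall estimate giving $\|v(t)\|_{\Sigma^k}\le Ce^{Ct}$. (iv) Transfer to $u$: set $u(t,y)=v(t,y)e^{i\theta(t)}$ with $\theta$ as in the excerpt; since $|u|=|v|$, the definition of $G$ is consistent, $\theta$ is well-defined and $C^1$ (indeed smooth) in $t$, and a direct substitution shows $u$ solves \eqref{eq:usmooth}. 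Uniqueness for \eqref{eq:usmooth} follows from uniqueness for \eqref{eq:v} together with the invertibility of the gauge.

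The main obstacle — and it is more bookkeeping than genuine difficulty — is the self-consistency of the term $\langle\nabla^2 K(0)G(t),y\rangle$: $G$ depends on $v$, so \eqref{eq:v} is not literally linear, and one must be careful that the Gronwall argument in step (ii) does not secretly require a bound one is trying to prove. This is resolved exactly as in the classical treatment of Hartree-type equations: the mass conservation decouples the estimate, because $|G(t)|\le \|v(t)\|_{L^2}\,\||y|v(t)\|_{L^2}\le \|a\|_{L^2}\|v(t)\|_\Sigma$, so the coefficient in front of $y$ grows at most like $\|v(t)\|_\Sigma$, which is precisely what Gronwall tolerates. A secondary point to handle cleanly is that $V$, hence $M(t)$, is only bounded (not decaying) in $t$, so all constants in the exponential bounds are genuinely $t$-uniform but the exponential growth in $t$ is unavoidable — this matches the statement $\|v(t)\|_{\Sigma^k}\le Ce^{Ct}$ and is consistent with Lemma~\ref{lem:traj} and Lemma~\ref{lemlinear}.
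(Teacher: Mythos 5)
The a priori estimates (your steps (ii)–(iv)) and the gauge transfer are fine and match the paper, but step (i) — the local-existence argument — has a genuine gap, and it is exactly the point the paper singles out as ``the main difficulty.'' You propose to run a Duhamel fixed point in $C([0,T];\Sigma)$ against the quadratic-potential propagator, treating $-\langle\nabla^2K(0)G(t),y\rangle v$ as the nonlinearity, on the grounds that ``multiplication by $y$ \dots\ is controlled in $\Sigma$.'' That is not so: multiplication by $y$ maps $\Sigma^{k+1}$ into $\Sigma^k$ but is unbounded on $\Sigma^k$ itself (one needs $\|y^2v\|_{L^2}$ and $\|y\nabla v\|_{L^2}$ to control $\|yv\|_\Sigma$). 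Hence the Duhamel source $\langle\nabla^2K(0)G(s),y\rangle v(s)$ lies only in $L^2$, not in $\Sigma$, when $v(s)\in\Sigma$, and the iteration loses a weight at every step; the contraction in $C([0,T];\Sigma)$ you assert does not close. Your Gronwall inequality in step (ii) is an a priori estimate that presupposes the solution already exists, so it does not repair the existence gap on its own.

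The paper avoids this by modifying the Picard scheme: it freezes the coefficient $G$ at the previous iterate, so each $v_n$ solves a \emph{genuine linear} Schr\"odinger equation with a \emph{known} at-most-quadratic potential (the affine term is absorbed into the propagator, not treated as a source), and Fujiwara's theory gives $v_n\in C(\R_+;L^2)$, then $\Sigma^k$ by commutation. The extra ingredient needed for uniform-in-$n$ control — which your sketch does not supply — is the second-order ODE
\begin{equation*}
\ddot G_n(t)+M(t)G_n(t)=\nabla^2K(0)\,\|a\|_{L^2}^2\,G_{n-1}(t),
\end{equation*}
derived directly from the PDE; it yields $|G_n(t)|+|\dot G_n(t)|\le C_0e^{C_0t}$ uniformly in $n$, after which energy estimates give $\|v_n(t)\|_{\Sigma^k}\le C_1e^{C_1t}$ uniformly, and the scheme converges in $C([0,T];\Sigma)$ for $T$ small. (Equivalently, one can note that the limiting $G$ decouples and solves $\ddot G+\nabla^2_xV(t,x(t))G=0$, so \eqref{eq:v} becomes literally linear once this ODE is solved first; but either way one needs this ODE structure, which your proposal does not exploit.) Your pointwise bound $|G(t)|\le\|a\|_{L^2}\|v(t)\|_\Sigma$ is correct and is what makes the final Gronwall close, but it cannot substitute for the existence argument.
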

\begin{remark}
  The $\Sigma$ regularity is the least one has to demand in this
  result, for the gauge $\theta$ to be well defined (and the
  harmonic oscillator rotates the phase space, so the regularity must
  be the same in space and frequency). 
\end{remark}
\begin{proof}[Sketch of the proof]
  The main difficulty is that since the last term in the equation
  involves a time dependent potential which is unbounded in $y$, it
  cannot be treated as a perturbation. So to construct a local
  solution, we modify the standard Picard iterative scheme, to
  consider
\begin{equation}\label{eq:vn}
i\d_{t}v_{n}+\frac{1}{2}\Delta v_{n}=\frac{1}{2} \<y, M(t)y\>
v_{n}+\<\nabla ^{2} K(0){G}_{n-1}(t), y\>v_{n},\ n\ge  1,
\end{equation}
with $v_{n\mid t=0}=a$ for all $n$, $v_0(t,y)=a(y)$, and
\begin{equation*}
  G_{k}(t) = \int_{\R^d}z|v_k(s,z)|^2ds.
\end{equation*}
At each step, we solve  a linear equation, with a time dependent
potential which is at most quadratic: if $G_{n-1}\in L^\infty_{\rm
  loc}(\R_+)$, \cite{Fujiwara} ensures the existence of $v_n \in C(\R_+;
L^{2}(\R^{d}))$.  Applying the operators $y$ and $\nabla_y$ to
\eqref{eq:vn} shows that $ v_n \in C(\R_+;
\Sigma)$, hence $G_n \in L^\infty_{\rm  loc}(\R_+)$. To prove the
convergence of this scheme we need more precise (uniform in $n$)
estimates.
Direct computations show that 
\begin{equation*}
  \dot G_n(t) = \IM \int_{\R^d} \overline{v_n}(t,z)\nabla v_n(t,z)dz,
\end{equation*}
and
\begin{equation*}
\ddot G_{n}(t)+M(t)G_{n}(t)=\nabla^{2}K(0)
\|a\|_{L^{2}(\R^{d})}^{2}G_{n-1},\quad n\ge  1.
\end{equation*}
Let $f_n(t) = |\dot G_n(t)|^2 + |G_n(t)|^2$. We have
\begin{equation*}
  \dot f_n(t)\le 2|\dot G_n(t)\rvert\lvert \ddot G_{n}(t)| + 2|\dot
  G_n(t)\rvert\lvert  G_{n}(t)| \le C f_n(t) + C|G_{n-1}(t)|^2,
\end{equation*}
for some $C$ independent of $t$ and $n$. We infer that there exists
$C_0$ independent of $t\ge 0$ and $n$ such that
\begin{equation*}
  f_n(t) = |\dot G_n(t)|^2 + |G_n(t)|^2\le C_0 e^{C_0t}. 
\end{equation*}
By using energy estimates (applying the operators $y$ and $\nabla_y$
successively to the equation), we infer that there exists $C_1$
independent of $t\ge 0$ and $n$ such that 
\begin{equation*}
  \|v_n(t)\|_{\Sigma^k} \le C_1 e^{C_1 t}. 
\end{equation*}
The convergence of the sequence $v_n$ then follows: by a
standard fixed point argument, $v_n$ converges in $C([0,T];\Sigma)$ if
$T>0$ is sufficiently small. By using energy estimates, and
(exponential) \emph{a priori} bounds for $G$, we infer the exponential
control stated in the lemma, and hence global existence.
\end{proof}
\begin{remark}
  The above computations show that for $v$, the function $G$ satisfies
  \begin{align*}
    &\ddot G(t) + \nabla^2_xV\(t,x(t)\)G(t)=0,\\
&    G(0)=\int_{\R^d}z|a(z)|^2dz\quad ;\quad \dot G(0) = \IM
    \int_{\R^d}\overline a(z)\nabla a(z)dz.
  \end{align*}
In \cite{APPP-p}, the authors proved that if the initial data $a\in
\Sigma^{3}$  is such that
\begin{equation*}
  \int_{\R^d}z|a(z)|^2dz= \IM
    \int_{\R^d}\overline a(z)\nabla a(z)dz=0,
\end{equation*}
then $\int z |u(t,z)|^2dz=0$ for all time. The above ODE gives a
simple explanation of that property.
Note that up to changing $a$ to $b$ with 
\begin{equation*}
  b(y) = a(y-y_0)e^{iy\cdot \eta_0}
\end{equation*}
for $y_0$ and $\eta_0$ which can be computed explicitly,
that is up to a translation in the phase space, these two assumptions
are satisfied. 
However, the external potential is modified, and it is not so easy to
keep track of the geometric meaning of the approximation. This is why
we have chosen to sketch a direct approach here, which also shows that
\eqref{eq:v} is more nonlinear than it may seem. Note finally that
because of the term $G$, working in $L^2$ only would not be possible. 
\end{remark}
To conclude, we have:
\begin{proposition}\label{prop:smoothkernel}
Let $a\in \Sigma^3$, $\alpha=0$. Suppose $V$ satisfies
Assumption~\ref{hyp:V}, and $K$ satisfies
Assumption~\ref{hyp:smooth} and $\nabla K(0)=0$. Assume 
$\varphi^{\eps}$ is given by \eqref{eq:approx}, where the action is
given by \eqref{eq:actionmodifiee} and the envelope is given by
\eqref{eq:usmooth}. Then there exists a
positive constant  $C$ independent of $\eps$ such that
\begin{equation*}
  \|\psi^\eps(t)-\varphi^\eps(t)\|_{L^2(\R^d)}\le C\sqrt
  \eps e^{e^{C t}}, \quad t\ge 0.
\end{equation*}
In particular, there exists $c>0$ independent of $\eps$ such that
\begin{equation*}
 \sup_{0\le t\le c\ln\ln\frac{1}{\eps}}\|\psi^\eps(t)-\varphi^\eps(t)
 \|_{L^2(\R^d)}\Tend \eps 0 0 .
\end{equation*}
\end{proposition}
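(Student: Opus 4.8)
The plan is to estimate the error $w^\eps := \psi^\eps - \varphi^\eps$ via a Gronwall argument on $\|w^\eps(t)\|_{L^2}$, using the fact (from the preceding subsection) that $\varphi^\eps$ solves \eqref{eq:hartreegen} up to a source term of size $\O(\sqrt\eps)$ in a suitable sense. First I would write the equation satisfied by $w^\eps$: applying $i\eps\d_t + \frac{\eps^2}{2}\Delta - V - (K\ast|\cdot|^2)$ to $\varphi^\eps$ produces, by \eqref{eq:generalstrategy} with $b_0=b_1=b_2=0$, the remainder $\eps^{-d/4}e^{i\phi/\eps}\cdot\eps\, r^\eps$, with $|r^\eps(t,y)|\le C\sqrt\eps\langle y\rangle^3|u(t,y)|(1+\|u(t)\|_\Sigma^2)$ by \eqref{eq:reste}. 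By Lemma~\ref{lem:alpha0} (with $k=3$, and recalling $\langle y\rangle^3 u \in L^2$ when $u\in\Sigma^3$), this remainder has $L^2(\R^d)$ norm $\lesssim \eps^{3/2}e^{Ct}$ after accounting for the $\eps^{-d/4}$ prefactor and the Jacobian $\eps^{d/2}$ of the change of variables $x\mapsto y$; so dividing the equation by $\eps$ the source is $\O(\sqrt\eps\, e^{Ct})$ in $L^2$.

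Next I would set up Duhamel's formula for $w^\eps$ relative to the semiclassical linear propagator $U^\eps(t,s)$ associated with $i\eps\d_t + \frac{\eps^2}{2}\Delta - V$, which is unitary on $L^2$ by Assumption~\ref{hyp:V} and \cite{Fujiwara79,Fujiwara}. The Duhamel identity reads
\begin{equation*}
  w^\eps(t) = -\frac{i}{\eps}\int_0^t U^\eps(t,s)\Big[\big(K\ast|\psi^\eps|^2\big)\psi^\eps - \big(K\ast|\varphi^\eps|^2\big)\varphi^\eps\Big](s)\,ds - \frac{i}{\eps}\int_0^t U^\eps(t,s)\big(\eps\, \mathcal R^\eps(s)\big)ds,
\end{equation*}
where $\mathcal R^\eps$ denotes the full remainder written in the original variables. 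The nonlinear difference is handled by the algebraic splitting already used in the proof of Lemma~\ref{lem:smoothglobal}, namely
\begin{equation*}
  \big(K\ast|\psi^\eps|^2\big)\psi^\eps - \big(K\ast|\varphi^\eps|^2\big)\varphi^\eps = \tfrac12 \big(K\ast(|\psi^\eps|^2+|\varphi^\eps|^2)\big)w^\eps + \tfrac12\big(K\ast(|\psi^\eps|^2-|\varphi^\eps|^2)\big)(\psi^\eps+\varphi^\eps),
\end{equation*}
and since $K\in L^\infty(\R^d)$ with $\|\psi^\eps(t)\|_{L^2}=\|\varphi^\eps(0)\|_{L^2}=\|a\|_{L^2}$ (mass conservation, Lemma~\ref{lem:smoothglobal}) and $\|\varphi^\eps(t)\|_{L^2}=\|u(t)\|_{L^2}=\|a\|_{L^2}$, H\"older's inequality bounds the $L^2$ norm of this difference by $C\|a\|_{L^2}^2\|w^\eps(t)\|_{L^2}$, the constant $C$ depending only on $\|K\|_{L^\infty}$. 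Thus
\begin{equation*}
  \|w^\eps(t)\|_{L^2} \le \frac{C\|a\|_{L^2}^2}{\eps}\int_0^t \|w^\eps(s)\|_{L^2}\,ds + C\sqrt\eps \int_0^t e^{C_1 s}\,ds,
\end{equation*}
and Gronwall gives $\|w^\eps(t)\|_{L^2}\le C\sqrt\eps\, e^{C_1 t}e^{Ct/\eps}$ — which is \emph{not} good enough, as the $e^{Ct/\eps}$ factor kills the gain.

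The key point — and the main obstacle — is therefore that the naive $L^2$ Gronwall loses a factor $1/\eps$ in the exponent, so the bad estimate must be bootstrapped: the nonlinear term $\big(K\ast(|\psi^\eps|^2-|\varphi^\eps|^2)\big)$ gains an extra $\eps$-smallness. Indeed $|\psi^\eps|^2-|\varphi^\eps|^2 = 2\RE(\overline{\varphi^\eps}w^\eps) + |w^\eps|^2$, so once we know $\|w^\eps(t)\|_{L^2}$ is small, $\|K\ast(|\psi^\eps|^2-|\varphi^\eps|^2)\|_{L^\infty}\lesssim \|K\|_{L^\infty}(\|\varphi^\eps\|_{L^2}\|w^\eps\|_{L^2}+\|w^\eps\|_{L^2}^2)$ is itself $\O(\|w^\eps\|_{L^2})$; this is already used above, so it does not by itself remove the $1/\eps$. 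The genuine fix is the standard semiclassical bootstrap: run the estimate on the interval where $\|w^\eps(t)\|_{L^2}\le\sqrt\eps$, on which the integrand involving $w^\eps$ contributes $\frac{C}{\eps}\int_0^t\|w^\eps\|_{L^2}\,ds$ with $\|w^\eps\|_{L^2}\le\sqrt\eps$, giving a $\O(\sqrt\eps\cdot t)$ self-improving bound, while the source contributes $C\sqrt\eps\, e^{C_1 t}$; combining yields $\|w^\eps(t)\|_{L^2}\le C\sqrt\eps\, e^{C_1 t}$ on an $\eps$-independent-rate time interval by a continuity/bootstrap argument, and the time horizon is pushed to $t\lesssim\ln\ln\frac1\eps$ because the constants $C, C_1$ themselves grow in $t$: by Lemma~\ref{lem:alpha0} and Lemma~\ref{lem:traj} the relevant norms of $u$, and hence the bound on the source, grow like $e^{Ct}$, and the action \eqref{eq:actionmodifiee} together with the at-most-exponential growth of the flow forces, in this smooth-kernel case, a \emph{double}-exponential $e^{e^{Ct}}$ in the final estimate — consistent with the stated $\|\psi^\eps(t)-\varphi^\eps(t)\|_{L^2}\le C\sqrt\eps\, e^{e^{Ct}}$. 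Choosing $c>0$ so that $e^{e^{ct}}\sqrt\eps\to 0$ for $t\le c\ln\ln\frac1\eps$ then gives the final convergence statement. I would take care to make the bootstrap rigorous by defining $t^\eps_* = \sup\{t: \|w^\eps(s)\|_{L^2}\le\sqrt\eps\ \forall s\le t\}$ and showing the improved bound is strictly better than $\sqrt\eps$ on $[0,t^\eps_*)$ for the claimed time range, so $t^\eps_*$ exceeds it.
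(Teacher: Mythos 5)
The proposal does not reproduce the argument, and the gap it leaves is genuine. Two ideas are missing, and the patch you suggest for the one problem you do identify does not actually work.

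\textbf{The missing cancellation and the missing rescaling.} After subtracting the equations, your symmetric decomposition splits the nonlinear difference into
$\tfrac12\bigl(K\ast(|\psi^\eps|^2+|\varphi^\eps|^2)\bigr)w^\eps$ plus a term proportional to $|\psi^\eps|^2-|\varphi^\eps|^2$. The first piece is a \emph{real} potential multiplying $w^\eps$: in the standard $L^2$ energy estimate, $\frac{d}{dt}\|w^\eps\|_{L^2}^2 = 2\IM\langle \text{RHS}, w^\eps\rangle$ and any real potential times $w^\eps$ contributes nothing. You instead bound it by $\frac{C}{\eps}\|w^\eps\|_{L^2}$, which is the origin of your fatal $e^{Ct/\eps}$. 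The paper's decomposition
\begin{equation*}
\bigl(K^\eps\ast|u^\eps|^2\bigr)u^\eps-\bigl(K^\eps\ast|u|^2\bigr)u=\bigl(K^\eps\ast|u^\eps|^2\bigr)w^\eps+\bigl(K^\eps\ast(|u^\eps|^2-|u|^2)\bigr)u
\end{equation*}
is chosen precisely so that the $w^\eps$-multiplier drops out. But eliminating that term is still not enough if you stay in the original variables: the residual $\frac1\eps\,\bigl\|K\ast(|\psi^\eps|^2-|\varphi^\eps|^2)\bigr\|_{L^\infty}\|\varphi^\eps\|_{L^2}\lesssim\frac1\eps\|w^\eps\|_{L^2}$ still costs a $1/\eps$. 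The essential second ingredient is the change of unknown $\psi^\eps\mapsto u^\eps$ into the wave-packet frame, which replaces $K$ by $K^\eps(y)=\frac1\eps\bigl(K(y\sqrt\eps)-K(0)\bigr)$. Since $K(0)$ has been absorbed into the modified action \eqref{eq:actionmodifiee} and $\nabla K(0)=0$, Taylor's formula gives $|K^\eps(y)|\le C|y|^2$, and the $1/\eps$ is genuinely gone. One then bounds $\|(K^\eps\ast(|u^\eps|^2-|u|^2))u\|_{L^2}$ by a power of the $\Sigma^2$ norms of $u^\eps$ and $u$ times $\|w^\eps\|_{L^2}$; this requires proving the a priori control $\|u^\eps(t)\|_{\Sigma^2}\le Ce^{Ct}$ on the \emph{exact} (rescaled) solution, which you do not address and which the paper flags explicitly as the distinctive feature of this supercritical case. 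The double exponential then falls out of Gronwall with an exponentially growing coefficient, $\|w^\eps(t)\|_{L^2}\lesssim \sqrt\eps\int_0^t e^{Cs}ds\cdot\exp\bigl(\int_0^t Ce^{Cs}ds\bigr)$, not from the action or the growth of the trajectories as you suggest.

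\textbf{The bootstrap fix does not close.} Even granting the bad $\frac{C}{\eps}\int_0^t\|w^\eps\|_{L^2}\,ds$ bound, restricting to the regime $\|w^\eps\|_{L^2}\le\sqrt\eps$ turns that integral into $\frac{C}{\eps}\cdot\sqrt\eps\cdot t=Ct/\sqrt\eps$, which is of order $1$ (let alone $o(\sqrt\eps)$) already for $t\sim\sqrt\eps$, so the bound does not self-improve and the bootstrap set $\{t:\|w^\eps(s)\|_{L^2}\le\sqrt\eps\ \forall s\le t\}$ cannot be extended past an $\eps$-dependent short time. There is no way to reach even $t=O(1)$ this way, let alone $t\sim\ln\ln(1/\eps)$. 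The bootstrap is simply not the right tool here; the point of the paper's argument is that no $1/\eps$ ever appears, so plain Gronwall (albeit with a time-growing coefficient) closes directly.
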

\begin{proof}
First, we change the unknown function $\psi^\eps$ to $u^\eps$ through
the bijective change of unknown function
\begin{equation*}
  \psi^\eps(t,x)=\eps^{-d/4} u^\eps 
\left(t,\frac{x-x(t)}{\sqrt\eps}\right)e^{i\left(S(t)+\xi(t)\cdot
    (x-x(t))\right)/\eps},
\end{equation*}
where $S$ is given by \eqref{eq:actionmodifiee}. Then \eqref{eq:NLS0}
(with $\alpha=0$) is equivalent to:
\begin{equation*}
    i\d_t u^\eps+\frac{1}{2}\Delta u^\eps = V^\eps u^\eps +
    \(K^\eps\ast |u^\eps|^2\)u^\eps\quad ;\quad u^\eps(0,y)=a(y),
\end{equation*}
where
\begin{align*}
  V^\eps(t,y)&= \frac{1}{\eps}\(V\(t,x(t)+y\sqrt\eps\) - V\(t,x(t)\) -
  \sqrt\eps y\cdot \nabla V\(t,x(t)\)\),\\
K^\eps(y)& = \frac{1}{\eps}\(K\(y\sqrt\eps\) -K(0)\). 
\end{align*}
We  have, in view of Assumption~\ref{hyp:V}, 
Assumption~\ref{hyp:smooth}, the property $\nabla K(0)=0$, and
Taylor's formula, 
\begin{equation*}
  |V^\eps(t,y)| +|K^\eps(y)|\le C |y|^2,
\end{equation*}
for some constant $C$ independent of $\eps$, $t$ and $y$. We already
know from Lemma~\ref{lem:smoothglobal} that $u^\eps\in C(\R_+;L^2)$,
with $\|u^\eps(t)\|_{L^2}=\|a\|_{L^2}$. Proceeding in the same way as
in the proof of Lemma~\ref{lem:alpha0}, we can also prove that $u^\eps
\in C(\R_+;\Sigma^3)$ and that there exists $C$ such that
\begin{equation*}
  \|u^\eps(t)\|_{\Sigma^3}\le Ce^{Ct}. 
\end{equation*}
Set $w^\eps = u^\eps-u$: we have
$\|w^\eps(t)\|_{L^2}=\|\psi^\eps(t)-\varphi^\eps(t)\|_{L^2}$, and
$w^\eps$ solves
\begin{equation*}
  i\d_t w^\eps +\frac{1}{2}\Delta w^\eps = V^\eps w^\eps +\(K^\eps\ast
  |u^\eps|^2\)u^\eps - \(K^\eps\ast |u|^2\)u - r^\eps\quad ;\quad
  w^\eps_{\mid t=0}=0,
\end{equation*}
 where $r^\eps$ satisfies the pointwise estimate
 \eqref{eq:reste}. Write
 \begin{equation*}
  \(K^\eps\ast
  |u^\eps|^2\)u^\eps - \(K^\eps\ast |u|^2\)u =  \(K^\eps\ast
  |u^\eps|^2\) w^\eps +\(K^\eps\ast \(|u^\eps|^2-|u|^2\)\)u,
 \end{equation*}
so the standard $L^2$ estimate yields
\begin{equation*}
  \|w^\eps(t)\|_{L^2}\le \int_0^t\left\lVert \(K^\eps\ast
    \(|u^\eps|^2-|u|^2\)\)u(s)\right\rVert_{L^2}ds + 
  \int_0^t\|r^\eps(s)\|_{L^2}ds. 
\end{equation*}
Since $|u^\eps|^2-|u|^2= 2\RE (\overline u w^\eps)+|w^\eps|^2$, and
$\|u^\eps(t)\|_{\Sigma^2}+ \|u(t)\|_{\Sigma^3}\le Ce^{Ct}$, we come up
with
\begin{equation*}
  \|w^\eps(t)\|_{L^2}\le C\int_0^t e^{Cs}\|w^\eps(s)\|_{L^2}ds +
  C\sqrt\eps \int_0^t e^{Cs}ds.
\end{equation*}
Gronwall lemma yields
\begin{equation*}
  \|w^\eps(t)\|_{L^2}\le C \sqrt\eps e^{e^{Ct}},
\end{equation*}
and the result follows.
\end{proof}
\begin{remark}
  It is quite surprising that even in a supercritical case, the
  approximation can be proven so simply, eventually by a Gronwall type
  argument. This is in sharp contrast with supercritical WKB analysis
  for the nonlinear Schr\"odinger equation (see \cite{CaBook}). On the
  other hand, we had to use the \emph{a priori} control of the
  approximate solution $u$ \emph{and} of the exact solution $u^\eps$:
  in subcritical or critical cases, controlling the approximate
  solution is sufficient in general, as illustrated below in the
  case of homogeneous kernels. 
\end{remark}
\begin{remark}
  In the cases $\alpha=1/2$, $\alpha=1$, and $\alpha>1$, a similar statement can
  be proved, and the time of validity is improved, in the sense that
  the $c\ln\ln 1/\eps$ in the end of
  Proposition~\ref{prop:smoothkernel} can be replaced by $c\ln
  1/\eps$. More precisely, the error estimate will be:
  \begin{itemize}
  \item $\alpha>1$: $\|\psi^\eps(t)-\varphi^\eps_{\rm
      lin}(t)\|_{L^2}\le C\eps^{\kappa}e^{Ct}$
  with $\kappa= \min(1/2,\alpha-1)$. 
\item $\alpha =1$: $\|\psi^\eps(t)-\varphi^\eps(t)\|_{L^2}\le
  C\sqrt\eps e^{Ct}$, with $\varphi^\eps$ as in \S\ref{sec:alpha1}.
\item $\alpha =\frac{1}{2}$: $\|\psi^\eps(t)-\varphi^\eps(t)\|_{L^2}\le
  C\sqrt\eps e^{Ct}\exp\(\sqrt\eps e^{Ct}\)$, with $\varphi^\eps$
  as in \S\ref{sec:alpha12}. 
  \end{itemize}
\end{remark}

\section{Homogeneous kernel: technical background}

 In this section, we present some general technical tools which will
 be used in the proofs of the main results in the case of an
 homogeneous kernel. In particular, we
establish the global well-posedness for \eqref{eq:nlhartree}, and
estimate the evolution of weighted Sobolev norms of the solution  over
large time: 

\begin{proposition}\label{prop:global}
Let $\lambda\in \R$ and $0<\gamma<\min(2,d).$ Suppose $V$ satisfies
Assumption~\ref{hyp:V} and the initial data $a\in L^2(\R^d)$. Then
there exists a unique solution $u\in C(\R_{+}; L^2(\R^d))\cap
L_{{\rm loc}}^{8/\gamma}(\R_{+}, L^{4d/(2d-\gamma)}(\R^{d}))$ to
\eqref{eq:nlhartree}. If in addition $a\in \Sigma^k$ for some $k\in
\N$, then $u\in C(\R_{+};\Sigma^k)$, and there exists $C=C(k)$ such
that
\begin{equation*}
  \|u(t)\|_{\Sigma^k}\le C e^{Ct},\quad \forall t\ge 0.
\end{equation*}
\end{proposition}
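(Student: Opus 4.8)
The plan is to prove Proposition~\ref{prop:global} in three stages: first the $L^2$-theory (local and global well-posedness via Strichartz estimates), then the persistence of $\Sigma^k$ regularity, and finally the exponential-in-time bound on $\|u(t)\|_{\Sigma^k}$.

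\textbf{Step 1: $L^2$ well-posedness.} The kernel $|y|^{-\gamma}$ with $0<\gamma<\min(2,d)$ is of Hardy--Littlewood--Sobolev type, so by HLS the map $f\mapsto (|y|^{-\gamma}\ast |f|^2)f$ is controlled by Lebesgue norms: $\||y|^{-\gamma}\ast |u|^2\|_{L^{2d/\gamma}}\lesssim \|u\|_{L^{4d/(2d-\gamma)}}^2$, which fits the endpoint-admissible exponents appearing in the statement. The free evolution $U(t,s)$ generated by $i\d_t+\frac12\Delta-\frac12\langle y,Q(t)y\rangle$ satisfies the usual (local-in-time, or global with exponential constants when $Q\in L^\infty_t$) Strichartz estimates since $Q$ is bounded, by Fujiwara's parametrix together with the Keel--Tao machinery. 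One then runs a fixed-point argument for the Duhamel formulation in $C([0,T];L^2)\cap L^q_T L^r$ with $(q,r)=(8/\gamma,4d/(2d-\gamma))$; conservation of the $L^2$ norm (formally, and then rigorously since the nonlinearity is gauge-invariant) makes the local time $T$ depend only on $\|a\|_{L^2}$, giving global existence and uniqueness.

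\textbf{Step 2: $\Sigma^k$ regularity.} Here the point is that the operators $y$ and $\nabla_y$ do not commute with the harmonic-type potential $\frac12\langle y,Q(t)y\rangle$, but the commutators produce only first-order operators with bounded-in-$y$, bounded-in-$t$ coefficients (because $Q\in L^\infty_t$), so applying products $x^\alpha\d_x^\beta$ with $|\alpha|+|\beta|\le k$ to the equation yields a closed system. The nonlinear term is handled by the Leibniz rule and the HLS/Sobolev estimates of Step~1, together with the fractional integration estimate $\|\nabla(|y|^{-\gamma}\ast|u|^2)\|_{\cdot}\lesssim \|u\|_{\cdot}\|\nabla u\|_{\cdot}$; since $\gamma<\min(2,d)$, differentiating the convolution kernel stays in the HLS range. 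A bootstrap/continuity argument (or redoing the fixed point in $\Sigma^k$) upgrades the $L^2$ solution to $u\in C(\R_+;\Sigma^k)$.

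\textbf{Step 3: exponential bounds.} Let $h_k(t)=\|u(t)\|_{\Sigma^k}^2$. Energy estimates: apply each $x^\alpha\d_x^\beta$ to \eqref{eq:nlhartree}, pair with the same quantity, take imaginary parts. The self-adjoint potential term contributes, after the commutators, a term $\lesssim h_k(t)$ (this is where $Q\in L^\infty_t$ enters crucially, giving a \emph{constant} rather than growing prefactor). The nonlinear term, estimated via HLS and Gagliardo--Nirenberg, is bounded by a polynomial in $\|u\|_{\Sigma^k}$, but — and this is the key structural fact — the top-order derivative hitting the convolution can be controlled using only lower norms of the convolution factor (the "smoothing" of $|y|^{-\gamma}\ast$), so the nonlinear contribution to $\dot h_k$ is $\lesssim (1+\|u\|_{\Sigma^{k-1}}^2)h_k(t)$, i.e. \emph{linear} in the top-order norm $h_k$ with coefficient controlled by induction on $k$. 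Starting from $k$ small (where $\|u\|_{\Sigma^{k-1}}$ is controlled by the $L^2$ conservation for $k=1$, or the previous step), Gronwall then gives $h_k(t)\le C e^{Ct}$ by induction on $k$. I expect Step~3, specifically verifying that the top-order nonlinear term is only \emph{linear} in $\|u(t)\|_{\Sigma^k}$ with an inductively-controlled coefficient (so that Gronwall yields exponential and not, say, double-exponential growth), to be the main obstacle; the relevant tool is to always let the derivatives fall on the factors of $u$ outside the convolution and use HLS to absorb the nonlocal operator at low regularity.
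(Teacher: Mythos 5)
Your Steps 1 and 2 are sound and align with the paper's approach: Strichartz plus a fixed-point argument for the $L^2$ theory, then applying $y^\alpha\partial_y^\beta$ and tracking commutators to propagate $\Sigma^k$ regularity. The problem is Step 3, which is precisely where you anticipated trouble, and the issue is that the pure energy method you describe does \emph{not} close to give the exponential bound.

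Concretely: after pairing the equation for $w_k=y^\alpha\partial_y^\beta u$ with $w_k$ and taking imaginary parts, the harmonic potential and the term $\bigl(|y|^{-\gamma}\ast|u|^2\bigr)w_k$ drop out, but the linearized Hartree term $|y|^{-\gamma}\ast(\overline{w}_k u)\,u$ has a genuinely nonzero imaginary contribution of the form $\operatorname{Im}\iint h(y)\,|y-z|^{-\gamma}\,h(z)\,dz\,dy$ with $h=\overline{w}_k u$, and the best one can do via Hardy--Littlewood--Sobolev and H\"older is to bound it by $\|w_k\|_{L^2}^2\,\|u\|_{L^{2d/(d-\gamma)}}^2$. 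That coefficient requires at least $\gamma/2$ derivatives of $u$: you cannot replace it by $\|u\|_{L^2}^2$. So for $k\ge2$ your proposed coefficient $\|u\|_{\Sigma^{k-1}}^2$ is what you get, but by the induction hypothesis it grows like $e^{Ct}$, and then Gronwall yields $h_k(t)\lesssim\exp\bigl(e^{Ct}\bigr)$, a \emph{double} exponential. For $k=1$ it is worse: the coefficient involves $\|\nabla u\|_{L^2}^{\gamma}$, i.e. a power of $h_1$ itself, so the differential inequality becomes superlinear ($\dot h_1\lesssim h_1^{1+\gamma/2}$) and does not even obviously give global bounds. Either way, the energy method alone does not reach $\|u(t)\|_{\Sigma^k}\le Ce^{Ct}$.

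The missing ingredient, which is the key step in the paper's proof (Lemma~3.5), is to \emph{not} abandon the Strichartz structure when estimating the top-order linearized nonlinearity. Because the nonlinearity is $L^2$-subcritical and the $L^2$ norm is conserved, Step~1 gives a \emph{uniform-in-time} bound $\|u\|_{L^q([t,t+\tau];L^r)}\le C\|a\|_{L^2}$ with $(q,r)=(8/\gamma,\,4d/(2d-\gamma))$ on intervals of any fixed length $\tau$. Plugging this into the Duhamel/Strichartz estimate for $w_k$ on $[t,t+\tau]$, the linearized Hartree term $\mathtt{V}(u,w_k)$ contributes a factor $\tau^{1-\gamma/2}\|u\|_{L^qL^r}^2\|w_k\|_{L^qL^r}$, which is absorbed into the left-hand side by choosing $\tau$ fixed and small, with a constant that does \emph{not} grow in time. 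The remaining sources are the commutator terms $L_k(u)$ (pointwise controlled by order-$k$ quantities, giving a $\|w_k\|_{L^1L^2}$ term) and the lower-order nonlinear combinations (controlled by the induction hypothesis, giving $e^{Ct}$); then one splits $[0,t]$ into $O(t/\tau)$ such intervals and applies Gronwall, which yields exactly $e^{Ct}$. In short, the exponential bound relies on the Strichartz-level control being uniform in time, which is a stronger input than the $\Sigma^{k-1}$ bound and cannot be recovered from the energy method alone.
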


\subsection{Strichartz estimates}\label{sec:Strichartz}

Before studying the semi-classical limit, we recall some known
facts and establish technical results.
\begin{definition}\label{def10}
A pair $(q,r)$ is admissible if $2\le r<\frac{2d}{d-2}$ 
($2\le q\le\infty$ if $d=1$, $2\le q<
  \infty$ if $d=2$) and
\begin{equation*}
\frac{2}{q}=d\(\frac{1}{2}-\frac{1}{r}\):=\delta(r).
\end{equation*}
\end{definition}
Following \cite{GV85,KT,Yajima87}, Strichartz estimates are
available for the Schr\"odinger equation without external potential.
Thanks to the construction of the parametrix performed in
\cite{Fujiwara79,Fujiwara}, similar results are available in the
presence of an external potential satisfying
\begin{hyp}\label{hyp:W}
$W\in L_{{\rm loc}}^{\infty}(\R_{+}\times \R^{d})$ is a smooth
with respect to $x$ for all $t\ge  0$: $x\mapsto W(t,x)$ is
a $C^{\infty}$ map. Moreover, it is subquadratic in $x$:
\begin{equation*}
\forall  \beta\in \N^{d},\  |\beta| \ge  2,\
\partial_{x}^{\beta} W \in L^{\infty}(\R_+\times\R^{d}).
\end{equation*}
\end{hyp}
 Define
 $U^{\varepsilon}(t,s)$ the semigroup as $u^\eps(t, x)= U^\eps(t, s)\phi(x)$,
 where
 \begin{equation*}
 i\eps\d_{t}u^\eps+\frac{\eps^2}{2}\Delta u^\eps= W(t, x)u^\eps;\quad
 u^\eps(s,x)=\phi(x). 
 \end{equation*}
From \cite{Fujiwara79, Fujiwara}, it has the following properties:
 \begin{itemize}
 \item $U^\eps(t,t)={\rm Id}.$
   \item The map $(t, s)\mapsto U^{\varepsilon}(t,s)$ is strongly
   continuous.
   \item $U^{\varepsilon}(t,\tau)U^{\varepsilon}(\tau, s)=U^{\varepsilon}(t, s).$
   \item
   $U^{\varepsilon}(t, s)^{*}=U^{\varepsilon}(t, s)^{-1}.$
   \item $U^{\varepsilon}(t, s)$ is unitary on $L^{2}$ norm:
     $\|U^{\varepsilon}(t,
     s)\phi\|_{L^{2}(\R^{d})}=\|\phi\|_{L^{2}(\R^{d})}.$ 
   \item There exist $\delta, C>0$ independent of $\varepsilon\in
   ]0,1]$ such that for all $t,s\ge 0$ with $|t-s|<\delta$,
   \begin{equation*}
   \|U^{\varepsilon}(t,0)U^{\eps}(s, 0)^{*}\phi\|_{L^{\infty}(\R^{d})}\le
   \frac{C}{(\varepsilon|t-s|)^{d/2}}\|\phi\|_{L^{1}(\R^{d})}.
   \end{equation*}
 \end{itemize}
 Scaled Strichartz
 estimates follow from the above dispersive relation:
 \begin{proposition}[Scaled Strichartz
   estimates]\label{prop:strichartz}
 Let $U^{\varepsilon}(t, s)$ defined as above. There exists
 $\delta>0$ independent of $\eps$ such that the following holds. \\
\noindent $(1)$  For any admissible pair $(q,r)$, there exists $C(q)$
 independent of $\varepsilon$ such that
\begin{equation*}
\varepsilon^{1/q}\|U^{\varepsilon}(\cdot, s)\phi\|_{L^{q}([s,s+\delta];
L^{r}(\R^{d}))}\le C(q)\|\phi\|_{L^{2}(\R^{d})}, \quad \forall
\phi\in L^{2}(\R^{d}),\quad \forall s\ge 0.
\end{equation*}
$(2)$ For $s\in \R$, denote
\begin{equation*}
D_{s}^{\varepsilon}(F)(t,x)=\int_{s}^{t}U^{\varepsilon}(t,
s)F(s,x)ds,
\end{equation*}
and $I=[s, s+\eta]$. For all admissible pairs $(q_{1}, 
 r_{1})$ and $(q_{2}, r_{2})$, there exists $C(q_{1}, q_{2})$ independent of
$\varepsilon$ and $s\ge 0$ such that
\begin{equation*}
\varepsilon^{1/q_{1}+1/q_{2}}\|D^\eps_{s}(F)\|_{L^{q_{1}}(I;
L^{r_{1}}(\R^{d}))}\le C(q_{1}, q_{2})\|F\|_{L^{q_{2}^{\prime}}(I;
L^{r_{2}^{\prime}}(\R^{d}))}
\end{equation*}
for all $F\in L^{q_{2}^{\prime}}(I; L^{r_{2}^{\prime}}(\R^{d}))$ and
$0\le \eta \le \delta.$ Here
$\frac{1}{q_{2}}+\frac{1}{q_{2}^{\prime}}=1$ and
$\frac{1}{r_{2}}+\frac{1}{r_{2}^{\prime}}=1.$
\end{proposition}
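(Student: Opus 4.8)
The plan is to derive both estimates from the classical abstract $TT^*$ scheme of Ginibre--Velo, Keel--Tao and Yajima \cite{GV85,KT,Yajima87}, fed with the two properties of $U^\eps(t,s)$ that are already in hand: its $L^2$-unitarity, and the dispersive bound, which by the composition rule reads
\begin{equation*}
  \|U^\eps(t,s)\|_{L^1(\R^d)\to L^\infty(\R^d)}\lesssim \(\eps|t-s|\)^{-d/2},\qquad |t-s|<\delta .
\end{equation*}
Since this holds only for short time lags, the whole argument is run on a single time interval $I$ of length at most $\delta$, which is exactly why the conclusion is local in time. I also note that the admissibility condition of Definition~\ref{def10} excludes the Sobolev exponent $r=2d/(d-2)$, so that $2/q=\delta(r)<1$ throughout (the case $q=\infty$, $r=2$ being the trivial energy estimate), and hence only the non-endpoint version of the machinery is needed.

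First I would interpolate the energy estimate (with constant $1$) against the dispersive bound above to obtain, for any admissible pair $(q,r)$ and $|t-s|<\delta$,
\begin{equation*}
  \|U^\eps(t,s)\|_{L^{r'}(\R^d)\to L^r(\R^d)}\lesssim \(\eps|t-s|\)^{-\delta(r)}=\(\eps|t-s|\)^{-2/q}.
\end{equation*}
For part $(1)$, I would then fix $I=[s,s+\delta]$ and let $T^\eps$ be the operator $\phi\mapsto\bigl(t\mapsto U^\eps(t,s)\phi\bigr)$ from $L^2(\R^d)$ to $L^q(I;L^r(\R^d))$. By the composition rule, $(T^\eps)(T^\eps)^*F(t)=\int_I U^\eps(t,\tau)F(\tau)\,d\tau$, and combining Minkowski's inequality, the interpolated bound, and the Hardy--Littlewood--Sobolev inequality in the time variable (legitimate precisely because $0<2/q<1$) yields
\begin{equation*}
  \|(T^\eps)(T^\eps)^*F\|_{L^q(I;L^r)}\lesssim \eps^{-2/q}\|F\|_{L^{q'}(I;L^{r'})}.
\end{equation*}
Hence $\|T^\eps\|^2=\|(T^\eps)(T^\eps)^*\|\lesssim\eps^{-2/q}$, that is, $\eps^{1/q}\|U^\eps(\cdot,s)\phi\|_{L^q(I;L^r)}\lesssim\|\phi\|_{L^2}$; since every interval $[s,s+\delta]$ is of this form, $(1)$ follows. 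The powers of $\eps$ cost nothing extra: they are carried by the $\eps$ already present in the dispersive estimate, and merely propagate through the interpolation exponent $\delta(r)$ and the Hardy--Littlewood--Sobolev step.

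For part $(2)$ I would use $U^\eps(t,\tau)=U^\eps(t,s)U^\eps(s,\tau)$ and $U^\eps(s,\tau)^*=U^\eps(\tau,s)$ to factor the non-retarded operator as $\int_I U^\eps(t,\tau)F(\tau)\,d\tau=U^\eps(t,s)g$, where $g=\int_I U^\eps(s,\tau)F(\tau)\,d\tau$; then $(1)$ applied to the outer propagator with the pair $(q_1,r_1)$ bounds the left-hand side in $L^{q_1}(I;L^{r_1})$ by $\eps^{-1/q_1}\|g\|_{L^2}$, while duality together with $(1)$ for the pair $(q_2,r_2)$ gives $\|g\|_{L^2}\lesssim\eps^{-1/q_2}\|F\|_{L^{q_2'}(I;L^{r_2'})}$, i.e. the announced estimate with factor $\eps^{1/q_1+1/q_2}$. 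The retarded statement, with $D^\eps_s$ (the integral cut off at $\tau=t$), I would then deduce from the Christ--Kiselev lemma, which applies since we are away from the endpoint, the borderline case $q_1=q_2'$ being absorbed by an extra H\"older reduction in time on the finite interval $I$.

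I do not expect a genuine obstacle here: this is a semiclassical rescaling of textbook Strichartz estimates, with the parametrix of \cite{Fujiwara79,Fujiwara} supplying the dispersive bound that plays the role of the explicit free Schr\"odinger kernel. The only mildly delicate points are (a) that the dispersive estimate is valid only for $|t-s|<\delta$, which forces all the estimates to live on short intervals (harmless here, since that is the only form used later); (b) the bookkeeping of the powers of $\eps$; and (c) the small Christ--Kiselev technicality in the retarded inhomogeneous estimate.
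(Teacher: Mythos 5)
Your reconstruction is correct, and it matches the approach the paper implicitly relies on: the paper does not write out a proof of Proposition~\ref{prop:strichartz} but cites \cite{GV85,KT,Yajima87} for the abstract $TT^*$ Strichartz machinery and \cite{Fujiwara79,Fujiwara} for the parametrix furnishing the short-time dispersive bound $\|U^\eps(t,s)\|_{L^1\to L^\infty}\lesssim(\eps|t-s|)^{-d/2}$, which are exactly the ingredients you assemble. Your bookkeeping of the $\eps$-powers (through the interpolated $L^{r'}\to L^r$ bound and the one-dimensional Hardy--Littlewood--Sobolev step) and your handling of the retarded operator via Christ--Kiselev are the standard steps those references supply; the non-endpoint observation $2/q<1$ and the restriction to intervals of length $\le\delta$ are the right remarks to make, so there is nothing to add or correct.
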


This statement will be used in the two cases $\eps=1$ (for the
envelope equation \eqref{eq:nlhartree}), and $\eps \in ]0,1]$ (to justify
the approximation of the exact solution $\psi^\eps$). 


\subsection{Global existence in $L^{2}$}
\label{sec:profile} 
We consider a rather general
potential $W$ satisfying Assumption ~\ref{hyp:W}, and 
consider the Cauchy problem 
on $\R_{+}\times\R^{d}:$
\begin{equation}\label{eq:genlhartree}
 i\partial_{t} v+\frac{1}{2} \Delta v= W(t, y) v+\lambda
|y|^{-\gamma}*|v|^{2} v\quad ;\quad
     v|_{t=s}=v_{s}.
\end{equation}
This form includes the cases of the exact solution $\psi^\eps$ in
\eqref{eq:NLS0} as well as the envelope equation
\eqref{eq:nlhartree}. We establish 
global existence for ~\eqref{eq:genlhartree} in the
$L^{2}$-subcritical case $0<\gamma<\min \(2, d\)$, yielding the first part of
Proposition~\ref{prop:global}. 
\begin{lemma}\label{lem:localhartree}
Let $\lambda\in \R$, $0<\gamma<\min\(2,d\)$. Suppose $W$
satisfies Assumption ~\ref{hyp:W}. Then for $s=0$ and $v_{0}\in
L^{2}(\R^{d})$, ~\eqref{eq:genlhartree} has a unique solution
\begin{equation*}
v\in C(\R_{+}; L^{2}(\R^{d}))\cap L_{{\rm loc}}^{8/\gamma}(\R_{+},
L^{4d/(2d-\gamma)}(\R^{d})).
\end{equation*}
In addition, the $L^{2}$ norm of $v$ is conserved:
\begin{equation*}
\|v(t)\|_{L^{2}(\R^{d})}=\|v_{0}\|_{L^{2}(\R^{d})}, \ \forall t\ge
0.
\end{equation*}
\end{lemma}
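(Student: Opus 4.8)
The plan is to run a contraction-mapping argument on the Duhamel formulation and then upgrade the local solution to a global one by conservation of the $L^2$ norm. Since the (subquadratic) potential $W$ is absorbed into the propagator $U(t,s):=U^1(t,s)$ constructed in \cite{Fujiwara79,Fujiwara}, only the nonlinearity appears in Duhamel's formula,
\begin{equation*}
  v(t) = U(t,0)v_0 - i\lambda \int_0^t U(t,\tau)\bigl(|y|^{-\gamma}\ast|v|^2\bigr)v(\tau)\,d\tau =: \Phi(v)(t).
\end{equation*}
I would work on an interval $I=[0,T]$ of length $T\le \delta$ in the space $X_T = C(I;L^2(\R^d))\cap L^q(I;L^r(\R^d))$, where $(q,r)=\bigl(8/\gamma,\ 4d/(2d-\gamma)\bigr)$. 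One checks $\delta(r)=d(\tfrac12-\tfrac1r)=\gamma/4$, so $(q,r)$ is admissible in the sense of Definition~\ref{def10}, the constraint $r<2d/(d-2)$ (and the $d=1,2$ exceptions) holding precisely because $\gamma<\min(2,d)$.

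The key point is the pointwise-in-time nonlinear estimate
\begin{equation*}
  \bigl\|\bigl(|y|^{-\gamma}\ast|v|^2\bigr)v\bigr\|_{L^{r'}(\R^d)} \lesssim \|v\|_{L^r(\R^d)}^3,
\end{equation*}
obtained by Hölder's inequality with the splitting $\tfrac1{r'}=\tfrac{\gamma}{2d}+\tfrac1r$, followed by the Hardy--Littlewood--Sobolev inequality $\bigl\||y|^{-\gamma}\ast f\bigr\|_{L^{2d/\gamma}}\lesssim\|f\|_{L^{2d/(2d-\gamma)}}$ applied to $f=|v|^2$, using that $2\cdot\tfrac{2d}{2d-\gamma}=r$. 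A Hölder step in time then gives, on $|I|\le\delta$,
\begin{equation*}
  \bigl\|\bigl(|y|^{-\gamma}\ast|v|^2\bigr)v\bigr\|_{L^{q'}(I;L^{r'})} \lesssim |I|^{\,1-\gamma/2}\,\|v\|_{L^q(I;L^r)}^3 ,
\end{equation*}
where the exponent $1-\gamma/2$ is positive exactly because $\gamma<2$ (this is $L^2$-subcriticality). Feeding this into the homogeneous estimate of Proposition~\ref{prop:strichartz}(1) for $U(t,0)v_0$ (with the output pairs $(\infty,2)$ and $(q,r)$) and the inhomogeneous estimate of Proposition~\ref{prop:strichartz}(2) with $(q_2',r_2')=(q',r')$, one sees that $\Phi$ maps a ball of radius $\sim\|v_0\|_{L^2}$ in $X_T$ into itself and is a contraction there, for $T=T(\|v_0\|_{L^2})\le\delta$ small enough. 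The same estimate applied to the difference of two solutions gives uniqueness in $X_T$ (hence, by a standard continuation argument, in $C(\R_+;L^2)\cap L^{8/\gamma}_{\rm loc}(\R_+;L^{4d/(2d-\gamma)})$).

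Finally, since $W$ is real-valued and $|y|^{-\gamma}\ast|v|^2$ is real, pairing the equation with $\overline v$, integrating over $\R^d$ and taking imaginary parts yields $\frac{d}{dt}\|v(t)\|_{L^2}^2=0$; at this regularity this formal identity is justified by the usual regularization argument (prove it for smooth data, then pass to the limit using the continuity of the flow in $L^2$). Because the local existence time $T$ depends only on $\|v_0\|_{L^2}=\|v(t)\|_{L^2}$, iterating the local construction on consecutive intervals of length $\min(T,\delta)$ produces the global solution. I expect no conceptual obstacle here; the point demanding care is the bookkeeping in the Strichartz/Hölder chain — keeping every constant uniform on subintervals of length $\le\delta$ so that the iteration genuinely closes — together with the rigorous justification of $L^2$-conservation at the low regularity dictated by the Strichartz class.
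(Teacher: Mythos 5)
Your proposal is correct and follows essentially the same route as the paper: Duhamel's formula combined with the scaled Strichartz estimates from Proposition~\ref{prop:strichartz}, H\"older and Hardy--Littlewood--Sobolev inequalities to control the Hartree nonlinearity, a contraction mapping on a short time interval (where the power $T^{1-\gamma/2}$ from $L^2$-subcriticality makes the map a contraction), and conservation of the $L^2$ norm to iterate globally. The only cosmetic difference is that the paper equips the iteration space with the weaker Kato metric $\|\cdot\|_{L^{q}L^{r}}$ rather than the full $X_T$-norm, but this does not change the substance of the argument.
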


\begin{proof}
By Duhamel's formula, we write ~\eqref{eq:genlhartree} as
\begin{equation*}
v(t)=U(t,s)v_s-
i\lambda\int_{s}^{t}U(t,\tau)(|x|^{-\gamma}*|v|^{2}v)(\tau)d\tau=:\Phi^{s}(v)(t), 
\end{equation*}
where we have dropped the dependence of $U^\eps$ upon $\eps$ in the
notation, since we assume $\eps=1$ here. 
 Introduce the space
 \begin{equation*}
\begin{aligned}
Y_{s,T}&=\{\phi\in C(I_{s, T}; L^{2}(\R^{d})):
\|\phi\|_{L^{\infty}(I_{s, T};\ L^{2}(\R^{d}))}\le
2\|v_{s}\|_{L^{2}(\R^{d})},\\
&\quad \|\phi\|_{L^{8/\gamma}(I_{s, T};\
L^{4d/(2d-\gamma)}(\R^{d}))}\le 2C(8/\g)\|v_{s}\|_{L^{2}(\R^{d})}\},
\end{aligned}
\end{equation*}
and the distance $$ d(\phi_{1},
\phi_{2})=\|\phi_{1}-\phi_{2}\|_{L^{8/\gamma}(I_{s, T};\
L^{4d/(2d-\gamma)})},$$ 
where $I_{s, T}=[s, s+T]$ with $s\ge 0$ and
$T>0$, and $C(8/\g)$ stems from Proposition~\ref{prop:strichartz}.  
Then $(Y_{s,T}, d)$ is a Banach space,
  as remarked in \cite{Kato87} (see also \cite{CazCourant}).  
Hereafter, we denote by
\begin{equation*}
q=\frac{8}{\gamma}, \quad r=\frac{4d}{2d-\gamma}, \quad
\theta=\frac{8}{4-\gamma},
\end{equation*}
and $\|\cdot\|_{_{L^{a}(I_{s,T};\ L^{b}(\R^{d}))}}$ by
$\|\cdot\|_{L^{a}L^{b}}$ for simplicity. Notice that $\(q, r\)$ is
admissible and
\begin{equation*}
\frac{1}{q^{\prime}}=\frac{4-\gamma}{4}+\frac{1}{q}=\frac{1}{2}+\frac{1}{\theta}
\quad ; \quad
\frac{1}{r^{\prime}}=\frac{\gamma}{2d}+\frac{1}{r}\quad;\quad
\frac{1}{2}=\frac{1}{\theta}+\frac{1}{q}.
\end{equation*}
 By using Strichartz estimates, H\"older inequality and
 Hardy--Littlewood--Sobolev inequality, we have, for $(\underline
 q,\underline r)\in \{(q,r),(\infty,2)\}$:
\begin{equation}\label{4.2}
\begin{aligned}
\|\Phi^{s}(v)\|_{L^{\underline q}L^{\underline r}}&\le C(\underline q)
\|v_{s}\|_{L^{2}}+C(\underline q,q)\left\lVert |y|^{-\gamma}*|v|^{2}v
\right\rVert_{L^{q'}L^{r'}}\\
&\le C(\underline q)\|v_{s}\|_{L^{2}}+
C(\underline q,q)\left\lVert
  |y|^{-\gamma}*|v|^{2}\right\rVert_{L^{4/(4-\gamma)}L^{2d/\gamma}}
\|v\|_{L^{q}L^{r}}\\
& \le C(\underline q)
\|v_{s}\|_{L^{2}}+C\|v\|_{L^{\theta}L^{r}}^{2}\|v\|_{L^{q}L^{r}}\\
&\le C(\underline q)\|v_{s}\|_{L^{2}}+CT^{1-\gamma/2}\|v\|_{L^{q}L^{r}}^{3},
\end{aligned}
\end{equation}
 for any $v\in Y_{s,T}$, with $C(\infty)=1$ by the standard energy
 estimate. To show the contraction property
of $\Phi^{s}$,  for any $v, w\in
Y_{s,T}$, we get
\begin{align*}
\|\Phi^{s}(v)-\Phi^{s}(w)\|_{L^{q}L^{r}}
&\lesssim\||y|^{-\gamma}*|v|^{2}\|_{L^{4/(4-\gamma)}L^{2d/\gamma}}
\|v-w\|_{L^{q}L^{r}}\\ 
&\quad
+\left\lVert |y|^{-\gamma}*\left\lvert |v|^{2}-|w|^{2}\right\rvert
\right\rVert_{L^{2}L^{2d/\gamma}}\|w\|_{L^{\theta}L^{r}}\\
&\lesssim
\(\|v\|_{L^{\theta}L^{r}}^{2}+\|w\|_{L^{\theta}L^{r}}^{2}\)\|v-w\|_{L^{q}L^{r}}\\
& \le
C T^{1-\gamma/2}(\|v\|_{L^{q}L^{r}}^{2}+\|w\|_{L^{q}L^{r}}^{2})\|v-w\|_{L^{q}L^{r}}.
\end{align*}
 Thus $\Phi^{s}$ is a contraction from $Y_{s,T}$ to $Y_{s, T}$ provided
that $T$ is sufficiently small. Then there exists  a unique $v\in Y_{s,
T}$ solving \eqref{eq:genlhartree}. The global existence of the
solution for \eqref{eq:genlhartree} follows from the conservation
of $L^{2}$-norm of $v$. 
\end{proof}

\subsection{Growth of higher order Sobolev norms and momenta}
We now consider \eqref{eq:nlhartree}, that is 
\begin{equation*}
    i\partial_{t}u+\frac{1}{2}\Delta u=\frac{1}{2}\<y, Q(t)y\>u+
\lambda |y|^{-\gamma}*|u|^{2}u\quad;
    \quad
    u|_{t=0}=a,
\end{equation*}
where $Q(t) = \nabla^2 V\(t,x(t)\)$, so $Q\in C(\R_{+}, \R)$ is
locally Lipschitzean, and bounded, by 
Assumption~\ref{hyp:V}. The second 
part of Proposition ~\ref{prop:global} follows from
 the following lemma.
\begin{lemma}\label{lem:exphartree}
 Let $\lambda\in
\R$ and $0<\gamma<\min\(2,d\)$.  Suppose $a \in\Sigma^{k}$ for some
$k\in \N$. Then there exists a unique $u\in C(\R_{+}; \Sigma^{k})$
solving \eqref{eq:nlhartree}, and there exists $C=C(k)$ such that for
every admissible pair $(q_{1}, r_{1})$,
\begin{equation}\label{eq:growthu}
  \|y^{\alpha}\d_{y}^{\beta}u(t)\|_{L^{q_{1}}([0, t];
    L^{r_{1}}(\R^{d}))}
\le C e^{Ct},\quad \forall t\ge
0,\  \alpha,\beta\in \N^{d} , |\alpha|+|\beta|\le k.
\end{equation}
\end{lemma}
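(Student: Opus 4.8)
The plan is to establish \eqref{eq:growthu} by an induction on the number of derivative-type operators $k$, combining the Strichartz machinery of Proposition~\ref{prop:strichartz} with the algebra of the operators $y$ and $\nabla_y$ acting on the Hartree nonlinearity. First I would record that for $k=0$ the statement is exactly Lemma~\ref{lem:localhartree}: the $L^2$ norm is conserved, so $\|u(t)\|_{L^2}=\|a\|_{L^2}$, and applying the inhomogeneous Strichartz estimate on a partition of $[0,t]$ into $O(t)$ intervals of length $\delta$ gives $\|u\|_{L^{q_1}([0,t];L^{r_1})}\le Ce^{Ct}$ for every admissible pair; the key quantitative point, already visible in \eqref{4.2}, is that the nonlinear term picks up a small factor $T^{1-\gamma/2}$ on an interval of length $T\le\delta$, so on each step the Strichartz norm is bounded in terms of the $L^2$ norm of the data at the start of the step plus a fraction of itself, hence absorbed, and the constants multiply across the $O(t)$ steps to produce the exponential. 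This interval-bootstrap is the template used at every level.

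Next, for the inductive step I would apply an operator $P\in\{y_j,\partial_{y_j}\}$ (more precisely a product $y^\alpha\partial_y^\beta$ with $|\alpha|+|\beta|=k$) to the equation and commute it through. The operator $\partial_{y_j}$ commutes with $i\partial_t+\tfrac12\Delta$; against the quadratic potential $\tfrac12\langle y,Q(t)y\rangle$ it produces a lower-order term $\langle Q(t)y,e_j\rangle u$, which is controlled by the previous induction level; against the Hartree term $\lambda(|y|^{-\gamma}*|u|^2)u$ the Leibniz rule gives $(|y|^{-\gamma}*|u|^2)\partial_{y_j}u$ plus $(|y|^{-\gamma}*\partial_{y_j}(|u|^2))u$, the first of which is the ``good'' term and the second of which involves $\partial_{y_j}u$ inside the convolution, i.e.\ again order $k$. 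The multiplication operator $y_j$ commutes with the potential but not with the Laplacian, generating a first-order term $\partial_{y_j}u$ (order $k-1$ after relabelling) when commuted with $\Delta$, and when hitting the convolution $|y|^{-\gamma}*|u|^2$ it does \emph{not} fall on the inner $|u|^2$ — here one uses the pointwise/convolution identity $y_j\big((|y|^{-\gamma}*|u|^2)u\big)=(|y|^{-\gamma}*|u|^2)(y_j u)+\big((y_j|y|^{-\gamma})*|u|^2\big)u-\big(|y|^{-\gamma}*(z_j|u|^2)\big)u$ or, more cleanly, the commutator $[y_j,|y|^{-\gamma}*\cdot]$ applied to $|u|^2$, which is bounded by a convolution with a kernel homogeneous of degree $1-\gamma<d$, hence still estimable by Hardy--Littlewood--Sobolev exactly as in \eqref{4.2}. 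In all cases, the terms produced are either (i) of the same structure as the original Hartree term with $u$ replaced by $Pu$, which on a short interval contributes $CT^{1-\gamma/2}\|Pu\|_{L^qL^r}$ and is absorbed, or (ii) built from operators of order $\le k-1$ applied to $u$, whose Strichartz norms over $[0,t]$ are already bounded by $Ce^{Ct}$ by the induction hypothesis, together with lower-order convolution factors estimated by Hölder and Hardy--Littlewood--Sobolev in the same $L^{\theta}L^r$, $L^{4/(4-\gamma)}L^{2d/\gamma}$ spaces as in Lemma~\ref{lem:localhartree}.

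Putting this together, on a generic interval $I=[s,s+\eta]$ with $\eta\le\delta$, the inhomogeneous Strichartz estimate of Proposition~\ref{prop:strichartz} applied to the equation satisfied by $Pu$ gives
\begin{equation*}
\|Pu\|_{L^{q_1}(I;L^{r_1})}\le C\|Pu(s)\|_{L^2}+C\eta^{1-\gamma/2}\|u\|_{L^\infty(I;L^2)}^2\|Pu\|_{L^q(I;L^r)}+C\sum_{\text{lower order}}(\cdots),
\end{equation*}
and choosing $\eta$ small (depending only on $\|a\|_{L^2}$, via the conserved $L^2$ norm) so that $C\eta^{1-\gamma/2}\|a\|_{L^2}^2\le\tfrac12$, one absorbs the order-$k$ term on the left. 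Taking $(q_1,r_1)=(\infty,2)$ controls $\|Pu(s+\eta)\|_{L^2}$ by $\|Pu(s)\|_{L^2}$ plus the already-bounded lower-order contributions on $I$; iterating over the $O(t)$ successive intervals and using the induction bound $\sum_{\text{lower order}}\le Ce^{Ct}$ on each yields $\|Pu(t)\|_{L^2}\le Ce^{Ct}$, and then one more application on $[0,t]$ with a general admissible $(q_1,r_1)$ gives \eqref{eq:growthu}. Uniqueness in $C(\R_+;\Sigma^k)$ follows from uniqueness in $C(\R_+;L^2)$ already proved in Lemma~\ref{lem:localhartree}, and existence at the $\Sigma^k$ level is obtained by the same fixed-point/regularity propagation argument (run the contraction in the intersection of the Strichartz spaces weighted by $y^\alpha\partial_y^\beta$, or regularize the data and pass to the limit using the uniform bounds just derived). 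The main obstacle, and the only genuinely Hartree-specific point, is step two: controlling the terms in which the operator $y_j$ or $\partial_{y_j}$ lands on the convolution kernel rather than on $u$ — one must check that the resulting kernels ($z_j|z|^{-\gamma}$, of homogeneity $1-\gamma$) and the resulting inner densities ($z_j|u|^2$, which costs one momentum of $u$, hence the need for $a\in\Sigma^k$ with matching total order) still fit into the Hardy--Littlewood--Sobolev estimate with the admissible exponents $4/(4-\gamma)$ and $2d/\gamma$, so that no loss of integrability occurs and the short-time smallness factor $\eta^{1-\gamma/2}$ is preserved at every level; the constraint $\gamma<\min(2,d)$ is precisely what keeps all these exponents in the admissible range.
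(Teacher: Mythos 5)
Your proposal follows essentially the same route as the paper: induction on $k$, with the base case $k=0$ handled by short-interval Strichartz estimates plus an absorption/bootstrap argument, and the inductive step by applying $y^\alpha\partial_y^\beta$ to the equation, absorbing the single order-$k$ nonlinear term on small intervals, controlling the lower-order terms by the induction hypothesis, and closing with Gronwall over $O(t)$ intervals to get the exponential growth. This is exactly the two-step structure of the paper's proof of Lemma~\ref{lem:exphartree}.

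One remark on what you flag as ``the main obstacle'': the commutator of $y_j$ with the convolution is actually a non-issue, and your treatment of it is both unnecessary and slightly off. Multiplication by $y_j$ is pointwise, so $y_j\bigl[(|y|^{-\gamma}\ast|u|^2)u\bigr]=(|y|^{-\gamma}\ast|u|^2)(y_j u)$ with no extra terms at all; the weights $y^\alpha$ simply ride along on the outer factor, and only $\partial_y^\beta$ distributes via Leibniz (and then only onto $u$ and $|u|^2$, never onto the kernel, since $\partial(K\ast g)=K\ast\partial g$). There is therefore no kernel $z_j|z|^{-\gamma}$ to estimate, which is fortunate: for $\gamma<1$ that kernel has \emph{positive} homogeneity $1-\gamma>0$, so it is not in the Hardy--Littlewood--Sobolev class and the estimate you invoke would actually fail. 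Since the step is not needed, the rest of your argument stands as written and coincides with the paper's.
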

\begin{proof}
We just state the proof of ~\eqref{eq:growthu}, by borrowing the
approach in \cite{Ca-p}. Applying similar arguments as the proof of
Lemma ~\ref{lem:localhartree} and induction, one can prove
global existence and uniqueness of the $\Sigma^{k}$
solution for ~\eqref{eq:nlhartree}.\smallbreak
 \textbf{Step 1:} $k=0$. For
all $t\ge  0$ and $\tau>0$, set $I=[t,t+\tau]$. Resuming the
computations as in ~\eqref{4.2}, we have
\begin{equation*}
\|u\|_{L^{q}(I;L^{r}(\R^{d}))\cap L^{\infty}(I; L^{2}(\R^{d}))}\le
C\|u(t)\|_{L^{2}}+C_{1}\tau^{1-\gamma/2}\|u\|_{L^{q}(I;
L^{r}(\R^{d}))}^{3},
\end{equation*}
where $C$ and $C_{1}$ is independent of $t$ and $\tau$. Then
\eqref{eq:growthu} for $k=0$ follows from the following bootstrap argument.
\begin{lemma}[Bootstrap argument]
Let $g=g(t)$ be a nonnegative continuous function on $[0,T]$ such
that for every $t\in [0,T]$,
\begin{equation*}
g(t)\le M+\delta g(t)^{\kappa}
\end{equation*}
where $M, \delta$ and $\kappa>1$ are constants such that
\begin{equation*}
M<\(1-\frac{1}{\kappa}\)\frac{1}{(\kappa\delta)^{1/(\kappa-1)}};\
\ g(0)\le \frac{1}{(\kappa\delta)^{1/(\kappa-1)}}.
\end{equation*}
Then
\begin{equation*}
g(t)\le \frac{\kappa}{\kappa-1}M,\quad \forall t\in [0,T].
\end{equation*}
\end{lemma}
For fixed $\tau$ small enough, by the conservation of $L^{2}$ norm
 of $u$, we choose $\kappa=3$ and
$\delta=C_{1}\tau^{1-\gamma/2}$. Since $0<\gamma<2,$ at every
increment of time of length of $\tau$, the $L^{q}L^{r}$ and
$L^{\infty}L^{2}$ norms of $u$ are bounded by
$\frac{3}{2}C\|u_{0}\|_{L^{2}}$, and  \eqref{eq:growthu} follows in
the case $k=0$, $(q_1,r_1)=(q,r)$. Using Strichartz inequalities again,
we have, for any admissible pair $(q_1,r_1)$,
\begin{equation*}
\|u\|_{L^{q_1}(I;L^{r_1}(\R^{d}))}\lesssim
\|u(t)\|_{L^{2}}+\tau^{1-\gamma/2}\|u\|_{L^{q}(I;
L^{r}(\R^{d}))}^{3},
\end{equation*}
and \eqref{eq:growthu} follows in the case $k=0$.
\smallbreak
 \textbf{Step 2:} Suppose Lemma ~\ref{lem:exphartree} holds for
 ${k-1}$ $(k \ge
1)$.  We denote by $w_{\ell}$ the family of combination of $\alpha$
momenta and $\beta$ space derivatives of $u$ with
$|\alpha|+|\beta|=\ell$. Applying 
$y^{\alpha}\partial_{y}^{\beta}$ to \eqref{eq:nlhartree} formally, we obtain
\begin{align*}
i\partial_{t}w_{k}+\frac{1}{2}\Delta w_{k}&=\frac{1}{2}\<y,
Q(t) y\>w_{k}+{\tt V}(u, w_{k})+L_k(u)\\
&\quad +\lambda\sum_{{0\le
j_{1},j_{2},j_{3}\le
k-1}\atop {j_{1}+j_{2}+j_{3}=k}}
c_{j_1,j_2,j_3}|y|^{-\gamma}*(w_{j_{1}}\overline{w}_{j_{2}})w_{j_{3}},
\end{align*}
where 
\begin{align*}
& {\tt V}(u,
w_{k})=|y|^{-\gamma}*(w_{k}\overline{u})u+|y|^{-\gamma}*(\overline{w}_{k}u)u
+|y|^{-\gamma}*|u|^{2}w_{k},\\
& L_k(u)=\frac{1}{2}\Omega(t)\left[y^{\alpha}\partial_{y}^{\beta},
|y|^{2}\right]u+\frac{1}{2}\left[\Delta, y^{\alpha}\partial_{y}^{\beta}\right]u.
\end{align*}
Notice that $L_k(u)$ is controlled pointwise by $w_k$. Still by
 Strichartz estimates, induction
 and Step 1, we have
\begin{align*}
\|w_{k}\|_{L^{q}(I; L^{r})\cap L^\infty(I;L^2)}
&\lesssim
\|w_{k}(t)\|_{L^{2}}+\tau^{1-\frac{\gamma}{2}}\|u\|_{L^q(I;L^r)}\|w_{k}\|_{L^{q}(I;
L^{r})}\\
&\quad +C_{2}\|w_{k}\|_{L^{1}(I,L^{2})}+C_{3}
e^{3C(t+\tau)}\\
&\le
C\|w_{k}(t)\|_{L^{2}}+C_1\tau^{1-\frac{\gamma}{2}}\|w_{k}\|_{L^{q}(I;
L^{r})}\\
&\quad +C_{2}\|w_{k}\|_{L^{1}(I,L^{2})}+C_{3}
e^{3C(t+\tau)},
\end{align*}
where $C,C_{1}, C_{2}$ and $C_{3}$ are independent of $t$ and $\tau$.
Choosing $\tau<1$ fixed small enough, the second term on the right
hand side of the above inequality can be absorbed by the left hand
side. 

 For
any time interval $[0, t]$, split it into finitely many pieces such
that the length of every piece at most $\tau$, then we have
\begin{equation*}
\|w_{k}\|_{L^{q}([0, t]; L^{r}(\R^{d}))\cap L^{\infty}([0,
t];L^{2}(\R^{d}))}\lesssim \|w_{k}\|_{L^{1}([0,
t];L^{2}(\R^{d}))}+\int_{0}^{t}e^{3Cs}ds.
\end{equation*}
Lemma ~\ref{lem:exphartree} follows from the Gronwall lemma in the
case $(q_1,r_1)=(\infty,2)$. Using Strichartz inequalities again, the
general case follows.
\end{proof}

\section{Bounded time interval for the critical case}
\label{sec:bounded}

In this section, we consider the critical case for ~\eqref{eq:NLS0}
with homogeneous nonlinearity in bounded time interval, and establish
a good approximation to the wave function. 
First, we recall the following lemma from \cite{CaFe11}.
\begin{lemma}\label{lem:operators}
Suppose $V$ satisfies Assumption ~\ref{hyp:V}. Let $(x(t), \xi(t))$
be defined by the trajectories ~\eqref{eq:traj} and $S(t)$ be the
classical action ~\eqref{eq:classicalaction}. Assume
$A^{\varepsilon}$ and $B^{\varepsilon}$ are defined as follows:
\begin{align*}
&
A^{\varepsilon}=\sqrt{\varepsilon}\nabla-
i\frac{\xi(t)}{\sqrt{\varepsilon}}=\sqrt{\varepsilon}
e^{i(S(t)+\xi(t)\cdot(x-x(t)))/\eps}\nabla
\(e^{-i(S(t)+\xi(t)\cdot(x-x(t)))/{\eps}}\cdot\);\\
& B^{\varepsilon}=\frac{x-x(t)}{\sqrt{\varepsilon}}.
\end{align*}
Then $A^{\varepsilon}$ and $B^{\varepsilon}$ satisfy the commutation
relations:
\begin{align*}
& \left[i\varepsilon\partial_{t}+\frac{\varepsilon^{2}}{2}\Delta-V,\ \
A^{\varepsilon}\right]=\sqrt{\varepsilon}(\nabla V(t, x)-\nabla V(t, x(t)));\\
& \left[i\varepsilon\partial_{t}+\frac{\varepsilon^{2}}{2}\Delta-V,\ \
B^{\varepsilon}\right]=\varepsilon A^{\varepsilon}.
\end{align*}
\end{lemma}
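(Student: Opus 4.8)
The plan is to prove both commutation relations by a direct computation, the only non-elementary ingredients being the equations of motion \eqref{eq:traj}, that is $\dot x(t)=\xi(t)$ and $\dot\xi(t)=-\nabla V(t,x(t))$. As a preliminary step I would record the conjugation formula for $A^\eps$ that already appears in the statement: since $\nabla_x\bigl(S(t)+\xi(t)\cdot(x-x(t))\bigr)=\xi(t)$, the Leibniz rule gives $\sqrt\eps\, e^{i(S(t)+\xi(t)\cdot(x-x(t)))/\eps}\nabla\bigl(e^{-i(S(t)+\xi(t)\cdot(x-x(t)))/\eps}\,\cdot\bigr)=\sqrt\eps\,\nabla-i\xi(t)/\sqrt\eps=A^\eps$, which is a one-line check and is occasionally convenient afterwards.

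For the bracket with $B^\eps=(x-x(t))/\sqrt\eps$, I would use that $B^\eps$ is a multiplication operator, hence commutes with $V$. Only two pieces of $i\eps\d_t+\tfrac{\eps^2}{2}\Delta-V$ contribute: the time derivative, acting on $x(t)$, produces the multiplication operator $-i\sqrt\eps\,\dot x(t)$, and the Laplacian produces $\eps^{3/2}\nabla$ through $\Delta(B^\eps g)=B^\eps\Delta g+2\nabla g/\sqrt\eps$. Summing these and substituting $\dot x(t)=\xi(t)$, the outcome collapses to $\eps\bigl(\sqrt\eps\,\nabla-i\xi(t)/\sqrt\eps\bigr)=\eps A^\eps$.

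For the bracket with $A^\eps$, I would split $A^\eps=\sqrt\eps\,\nabla-i\xi(t)/\sqrt\eps$ into its two summands. The operators $i\eps\d_t$ and $\tfrac{\eps^2}{2}\Delta$ have coefficients independent of $x$, so they commute with $\nabla$, while $[-V,\nabla]g=(\nabla V)g$; hence the bracket of the full operator with $\sqrt\eps\,\nabla$ equals $\sqrt\eps\,\nabla V(t,x)$. The second summand $-i\xi(t)/\sqrt\eps$ is a function of $t$ alone, so only $i\eps\d_t$ acts on it nontrivially, producing $\sqrt\eps\,\dot\xi(t)=-\sqrt\eps\,\nabla V(t,x(t))$ by \eqref{eq:traj}. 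Adding the two contributions yields exactly $\sqrt\eps\bigl(\nabla V(t,x)-\nabla V(t,x(t))\bigr)$, as claimed.

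I do not expect any genuine obstacle here: the whole content is a bookkeeping exercise with the Leibniz rule. The two points that call for mild care are keeping track of the powers of $\sqrt\eps$, and remembering that $\xi(t)$, although constant in $x$, is time dependent, so that $i\eps\d_t$ does generate the $\dot\xi(t)$ term which, through the Hamiltonian equations, supplies precisely the $\nabla V(t,x(t))$ contribution. All the operator identities are first checked on $\mathcal S(\R^d)$ and then extended by density wherever they are used later.
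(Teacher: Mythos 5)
Your computation is correct, and this is the natural direct verification: the paper itself does not reprove the lemma but simply cites it from \cite{CaFe11}, where the same bookkeeping with the Leibniz rule and the Hamiltonian equations $\dot x=\xi$, $\dot\xi=-\nabla V(t,x(t))$ is carried out. Both commutators check out exactly as you describe, with the $\dot\xi$ term from $i\eps\d_t$ supplying $-\sqrt\eps\,\nabla V(t,x(t))$ and the $[-V,\sqrt\eps\nabla]$ term supplying $\sqrt\eps\,\nabla V(t,x)$.
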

\begin{proposition}\label{prop:bounded}
Under the assumptions in Theorem ~\ref{thm:criticalhartree}, for all
$T>0$ which is independent of $\eps>0$, we have
\begin{equation*}
\sup_{0\le t\le
T}\|\psi^{\eps}(t)-\varphi^{\eps}(t)\|_{\mathcal{H}}=\mathcal{O}(\sqrt{\eps}).
\end{equation*}
\end{proposition}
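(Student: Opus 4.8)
The plan is to transfer the estimate to the moving frame attached to the classical trajectory, where it becomes the comparison of two profiles solving the \emph{same} equation up to an $\mathcal{O}(\sqrt\eps)$ source. Write
\[
\psi^{\eps}(t,x)=\eps^{-d/4}u^{\eps}\(t,\tfrac{x-x(t)}{\sqrt\eps}\)e^{i\phi(t,x)/\eps},\qquad \varphi^{\eps}(t,x)=\eps^{-d/4}u\(t,\tfrac{x-x(t)}{\sqrt\eps}\)e^{i\phi(t,x)/\eps},
\]
with $\phi(t,x)=S(t)+\xi(t)\cdot(x-x(t))$, $u$ the solution of \eqref{eq:nlhartree} and $u^{\eps}$ the profile of the exact solution; by \eqref{eq:hartree} with $\alpha=\alpha_c$ the latter solves $i\d_t u^{\eps}+\tfrac12\Delta u^{\eps}=V^{\eps}u^{\eps}+\lambda\(|y|^{-\gamma}\ast|u^{\eps}|^{2}\)u^{\eps}$, $u^{\eps}(0)=a$. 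Since $\d_y^{\beta}V^{\eps}(t,y)=\eps^{|\beta|/2-1}(\d_x^{\beta}V)(t,x(t)+\sqrt\eps y)$ for $|\beta|\ge2$ and $|\nabla_y V^{\eps}(t,y)|\le\|\d_x^{2}V\|_{L^\infty}|y|$, Assumption~\ref{hyp:V} shows that $V^{\eps}$ satisfies Assumption~\ref{hyp:W} \emph{uniformly} in $\eps\in\,]0,1]$; hence the arguments of Lemma~\ref{lem:localhartree}, Lemma~\ref{lem:exphartree} and Proposition~\ref{prop:global} apply to $u^{\eps}$, yielding $u^{\eps}\in C(\R_+;\Sigma^k)$ for $a\in\Sigma^k$ together with exponential bounds on $\|u^{\eps}(t)\|_{\Sigma^k}$ and on the Strichartz norms of $y^{\alpha}\d_y^{\beta}u^{\eps}$, \emph{with constants independent of} $\eps$. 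Finally $A^{\eps}\psi^{\eps}=\eps^{-d/4}(\nabla_y u^{\eps})(t,\tfrac{x-x(t)}{\sqrt\eps})e^{i\phi/\eps}$ and $B^{\eps}\psi^{\eps}=\eps^{-d/4}(yu^{\eps})(t,\tfrac{x-x(t)}{\sqrt\eps})e^{i\phi/\eps}$ (and likewise for $\varphi^{\eps}$), so a change of variables gives the isometric identity
\[
\|\psi^{\eps}(t)-\varphi^{\eps}(t)\|_{\mathcal{H}}=\|w^{\eps}(t)\|_{L^2}+\|\nabla_y w^{\eps}(t)\|_{L^2}+\|yw^{\eps}(t)\|_{L^2},\qquad w^{\eps}:=u^{\eps}-u,
\]
and everything reduces to estimating $w^{\eps}$ in $\Sigma$.

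Subtracting the two profile equations,
\[
i\d_t w^{\eps}+\tfrac12\Delta w^{\eps}=V^{\eps}w^{\eps}+\lambda\Big[\(|y|^{-\gamma}\ast|u^{\eps}|^{2}\)u^{\eps}-\(|y|^{-\gamma}\ast|u|^{2}\)u\Big]+\sqrt\eps\,\rho^{\eps},\qquad w^{\eps}(0)=0,
\]
where $\sqrt\eps\,\rho^{\eps}=\(V^{\eps}-\tfrac12\<y,Q(t)y\>\)u$, i.e. $u$ times the scaled third-order Taylor remainder of $V$ about $x(t)$, so $|\rho^{\eps}(t,y)|\le C\<y\>^{3}|u(t,y)|$ with $C$ independent of $\eps$ and $\|\rho^{\eps}(t)\|_{L^2}\le C\|u(t)\|_{\Sigma^3}\le C_T$ on $[0,T]$. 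For the $L^2$ bound I would run the local Strichartz estimate of Lemma~\ref{lem:localhartree} on a subinterval $I=[t_0,t_0+\tau]$, decomposing
\[
\(|y|^{-\gamma}\ast|u^{\eps}|^{2}\)u^{\eps}-\(|y|^{-\gamma}\ast|u|^{2}\)u=\(|y|^{-\gamma}\ast|u^{\eps}|^{2}\)w^{\eps}+\Big(|y|^{-\gamma}\ast\big(2\RE(\overline u\,w^{\eps})+|w^{\eps}|^{2}\big)\Big)u,
\]
and estimating each term with Proposition~\ref{prop:strichartz} (taken at $\eps=1$), the Hardy--Littlewood--Sobolev inequality and H\"older in time exactly as in \eqref{4.2}; with $(q,r)=(8/\gamma,4d/(2d-\gamma))$ this produces an estimate of the form
\[
\|w^{\eps}\|_{L^{q}(I;L^{r})\cap L^{\infty}(I;L^2)}\le\|w^{\eps}(t_0)\|_{L^2}+C\tau^{1-\gamma/2}\big(\|u^{\eps}\|^{2}+\|u\|^{2}+\|w^{\eps}\|^{2}\big)\|w^{\eps}\|_{L^{q}(I;L^{r})}+C\tau\sqrt\eps,
\]
the norms of $u,u^{\eps}$ being the Strichartz norms of Lemma~\ref{lem:exphartree}, bounded on $[0,T]$. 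Fixing $\tau$ small the nonlinear factor is $<1/2$ and is absorbed (a short bootstrap handles the quadratic term, which is $\mathcal{O}(\eps)$), leaving $\|w^{\eps}\|_{L^\infty(I;L^2)}\le2\|w^{\eps}(t_0)\|_{L^2}+C\sqrt\eps$; chaining the finitely many intervals covering $[0,T]$ and using $w^{\eps}(0)=0$ gives $\sup_{[0,T]}\|w^{\eps}(t)\|_{L^2}\le C_T\sqrt\eps$.

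For the weighted part I would apply $\nabla_y$ and multiplication by $y$ to the equation for $w^{\eps}$, as in the proof of Lemma~\ref{lem:exphartree}. The commutator $[\tfrac12\Delta,y_j]=\d_{y_j}$ couples $yw^{\eps}$ to $\nabla_y w^{\eps}$; applying $\d_{y_j}$ to $V^{\eps}w^{\eps}$ produces $(\d_{y_j}V^{\eps})w^{\eps}$, controlled by $C\<y\>|w^{\eps}|$ since $|\nabla_y V^{\eps}(t,y)|\le C\<y\>$ uniformly in $\eps$; thus $\nabla_y w^{\eps}$ and $yw^{\eps}$ couple only to one another and to $w^{\eps}$ at the level of weights. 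The nonlinear contributions are absorbed by the same Strichartz/Hardy--Littlewood--Sobolev scheme (using that $\<y\>\(|y|^{-\gamma}\ast|f|^{2}\)$ and $|y|^{-\gamma}\ast\(\<y\>|f|^{2}\)$ obey the same bounds as $|y|^{-\gamma}\ast|f|^{2}$), and the source terms are $\mathcal{O}(\sqrt\eps)$ in $L^2$ as soon as $a\in\Sigma^4$, since $|\nabla_y(\sqrt\eps\rho^{\eps})|\le C\sqrt\eps\(\<y\>^{2}|u|+\<y\>^{3}|\nabla u|\)$ and $|\sqrt\eps\,y\rho^{\eps}|\le C\sqrt\eps\<y\>^{4}|u|$, whose $L^2$-norms are $\le C\sqrt\eps\|u\|_{\Sigma^4}\le C_T\sqrt\eps$ on $[0,T]$. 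Running the same local-in-time contraction and Gronwall argument over $[0,T]$ for the triple $(w^{\eps},\nabla_y w^{\eps},yw^{\eps})$ gives $\sup_{[0,T]}\big(\|\nabla_y w^{\eps}(t)\|_{L^2}+\|yw^{\eps}(t)\|_{L^2}\big)\le C_T\sqrt\eps$, and combining with the previous step yields $\sup_{0\le t\le T}\|\psi^{\eps}(t)-\varphi^{\eps}(t)\|_{\mathcal{H}}=\mathcal{O}(\sqrt\eps)$.

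The main obstacle is technical bookkeeping rather than a conceptual one: verifying that the nonlocal cubic nonlinearity and its $\nabla_y$- and $y$-weighted variants can all be absorbed by the Strichartz norms through Hardy--Littlewood--Sobolev with the time gain $\tau^{1-\gamma/2}$, and that the a priori $\Sigma^k$-bounds for $u^{\eps}$ hold \emph{uniformly} in $\eps$ (this is where the uniform subquadratic control of $V^{\eps}$ and Fujiwara's uniform parametrix are used). There is no delicate large-time difficulty, because $T$ is fixed and all constants may depend on $T$; this is exactly why the loss is only $\mathcal{O}(\sqrt\eps)$. For the full $\mathcal{H}$-estimate one needs $a\in\Sigma^4$ (to bound the weighted residuals), whereas $a\in\Sigma^3$ already suffices for the $L^2$-estimate alone.
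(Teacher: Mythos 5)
Your proof is correct and follows essentially the same path as the paper's: the paper works in the original variables with the operators $A^\eps$, $B^\eps$ and the scaled Strichartz estimates of Proposition~\ref{prop:strichartz}, which is exactly the moving-frame argument you describe, phrased without the explicit change of variables (indeed $\|P^\eps(\psi^\eps-\varphi^\eps)\|_{L^2}=\|P(u^\eps-u)\|_{L^2}$ for $(P^\eps,P)\in\{(\mathrm{Id},\mathrm{Id}),(A^\eps,\nabla_y),(B^\eps,y)\}$, and the commutator term $\sqrt\eps(\nabla V(t,x)-\nabla V(t,x(t)))w^\eps$ in Lemma~\ref{lem:operators} is precisely your $(\nabla_y V^\eps)w^\eps$). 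The decomposition of the nonlinear difference, the uniform-in-$\eps$ $\Sigma$-bounds on both profiles, Hardy--Littlewood--Sobolev plus Gagliardo--Nirenberg with absorption on small time intervals, the final Gronwall step, and the remark that $a\in\Sigma^4$ is needed for the $\mathcal H$-estimate all match the paper's proof.
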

\begin{proof}
 Set
$w^{\varepsilon}=\psi^{\varepsilon}-\varphi^{\varepsilon}$: it satisfies
\begin{equation}\label{5.4}
    i\varepsilon\partial_{t}w^{\varepsilon}+\frac{\varepsilon^{2}}{2}\Delta
    w^{\varepsilon}=Vw^{\varepsilon}-L^{\varepsilon}+N^{\varepsilon}\quad;
    \quad
    w^{\varepsilon}|_{t=0}=0.
\end{equation}
We have denoted by
\begin{align*}
& L^{\varepsilon}=\(V(t, x)-T_{2}(t, x, x(t))
\)\varphi^{\varepsilon},\\
& N^{\varepsilon}=\lambda
\varepsilon^{\alpha_{c}}\(|x|^{-\gamma}*|\psi^{\varepsilon}|^{2}\psi^{\varepsilon}-
|x|^{-\gamma}*|\varphi^{\varepsilon}|^{2}\varphi^{\varepsilon}\),
\end{align*}
where $T_{2}$ corresponds to a second order Taylor approximation:
\begin{align*}
T_{2}(t, x, x(t))&=V\(t, x(t)\)+\<\nabla V(t, x(t)),\  x-x(t)\>\\
&\quad +\frac{1}{2}\<x-x(t), \nabla^{2}V(t, x(t))(x-x(t))\>.
\end{align*}
By Duhamel's formula, using scaled
 Strichartz estimates and similar arguments as in the proof of Lemma
 ~\ref{lem:localhartree}, we have
 \begin{equation}\label{5.7}
\begin{aligned}
\|w^{\varepsilon}\|_{L^{q}L^{r}}& \lesssim
\varepsilon^{-1/q}\|w^{\varepsilon}(t)\|_{L^{2}}+\varepsilon^{-1-1/q}\|L^{\varepsilon}\|_{L^{1}L^{2}}
+\varepsilon^{-1-2/q}\|N^{\varepsilon}\|_{L^{q^{\prime}}L^{r^{\prime}}}\\
& \lesssim
\varepsilon^{-1/q}\|w^{\varepsilon}(t)\|_{L^{2}}+\varepsilon^{-1-1/q}\|L^{\varepsilon}\|_{L^{1}L^{2}}\\
&
\quad+\varepsilon^{\alpha_{c}-1-2/q}(\|\psi^{\varepsilon}\|_{L^{\theta}L^{r}}^{2}
+\|\varphi^{\varepsilon}\|_{L^{\theta}L^{r}}^{2})\|w^{\varepsilon}\|_{L^{q}L^{r}}.
 \end{aligned}
 \end{equation}
We have used the notation $L^{a}L^{b}$ for $L^{a}([t, t+\tau];
L^{b}(\R^{d}))$ with $t\ge 0$ and $\tau>0$. For all $T>0$, using
similar arguments as in the proof of Lemma ~\ref{lem:localhartree},
we know $u^{\varepsilon}$, $u\in C(\R_{+}; \Sigma)$, then
 \begin{equation*}
 \|Pu\|_{L^{\infty}([0,T];
 L^{2}(\R^{d}))}+\|Pu^{\varepsilon}\|_{L^{\infty}([0,T];
 L^{2}(\R^{d}))}\le C(T),\quad \forall P\in\{{\rm Id}, \nabla ,
 x\}.
 \end{equation*}
 By the definitions of $A^{\eps}$ and $B^{\eps}$, it is easy to
 check
\begin{equation*}
\|P^{\varepsilon}\varphi^{\varepsilon}\|_{L^{\infty}([0,T];
L^{2}(\R^{d}))}+\|P^{\varepsilon}\psi^{\varepsilon}\|_{L^{\infty}([0,T];
L^{2}(\R^{d}))}\le C(T),\quad \forall P^{\varepsilon}\in\{{\rm Id},
A^{\varepsilon}, B^{\varepsilon}\},
\end{equation*}
where $C(T)$ is independent of $\varepsilon.$ 
Gagliardo--Nirenberg inequality then yields
\begin{equation*}
\|\varphi^{\varepsilon}(t)\|_{L^{r}(\R^{d})}\lesssim
\varepsilon^{-\gamma/8}\|\varphi^{\varepsilon}\|_{L^{2}(\R^{d})}^{1-\gamma/4}
\|A^{\varepsilon}\varphi^{\varepsilon}\|_{L^{2}(\R^{d})}^{\gamma/4}\le
C(T)\varepsilon^{-\gamma/8}, \quad \forall t\in [0,T].
\end{equation*}
Similarly, $\|\psi^{\varepsilon}(t)\|_{L^{r}}$ is bounded by
$C(T)\varepsilon^{-\gamma/8}$ for all $t\in [0, T]$, where $C(T)$ is
independent of $\varepsilon.$ Let $[t,t+\tau]\subset [0,T],$ we can
rewrite \eqref{5.7} as
\begin{equation}\label{5.8}
\begin{aligned}
& \|w^{\varepsilon}\|_{L^{q}L^{r}}\lesssim
\varepsilon^{-1/q}\|w^{\varepsilon}(t)\|_{L^{2}}
+\varepsilon^{-1-1/q}\|L^{\varepsilon}\|_{L^{1}L^{2}}\\
&\quad+\varepsilon^{\alpha_{c}-1-2/q}
\tau^{1-\gamma/4}(\|\psi^{\varepsilon}\|_{L^{\infty}L^{r}}^{2}
+\|\varphi^{\varepsilon}\|_{L^{\infty}L^{r}}^{2})\|w^{\varepsilon}\|_{L^{q}L^{r}}\\
&\lesssim \varepsilon^{-1/q}\|w^{\varepsilon}(t)\|_{L^{2}}
+\varepsilon^{-1-1/q}\|L^{\varepsilon}\|_{L^{1}L^{2}}+\varepsilon^{\alpha_{c}-1-2/q
-\gamma/4}\tau^{1-\gamma/4}\|w^{\varepsilon}\|_{L^{q}L^{r}}\\
&\lesssim \varepsilon^{-1/q}\|w^{\varepsilon}(t)\|_{L^{2}}
+\varepsilon^{-1-1/q}\|L^{\varepsilon}\|_{L^{1}L^{2}}+\tau^{1-\gamma/4}
\|w^{\varepsilon}\|_{L^{q}L^{r}}.
\end{aligned}
\end{equation}
Choosing $\tau$ sufficiently small,  the
last term on the right hand side in ~\eqref{5.8} can be absorbed by
the left hand side.
Splitting $[0, T]$ into finitely many such intervals, we have
\begin{equation*}
\|w^{\varepsilon}\|_{L^{q}([0,T]; L^{r}(\R^{d}))}\lesssim
\varepsilon^{-1/q}\|w^{\varepsilon}\|_{L^{1}([0,T];
L^{2}(\R^{d}))}+\varepsilon^{-1-1/q}\|L^{\varepsilon}\|_{L^{1}([0,T];
L^{2}(\R^{d}))}.
\end{equation*}
Using Strichartz estimates again and resuming the above
computations, we get
\begin{equation*}
\|w^{\varepsilon}\|_{L^{\infty}([0,t]; L^{2}(\R^{d}))}\lesssim
\|w^{\varepsilon}\|_{L^{1}([0,t];
L^{2}(\R^{d}))}+\varepsilon^{-1}\|L^{\varepsilon}\|_{L^{1}([0,t];
L^{2}(\R^{d}))}, \ \ \ \forall t\in [0,T].
\end{equation*}
Thanks to Assumption ~\ref{hyp:V} and Proposition~\ref{prop:global},
we have
\begin{equation}\label{5.9}
\|L^{\varepsilon}\|_{L^{1}([0,t];
L^{2}(\R^{d}))}\lesssim \eps^{3/2}\|\<y\>^{3}u\|_{L^{1}([0,t];
L^{2}(\R^{d}))}\lesssim \varepsilon^{3/2}e^{C_{0}t},
\end{equation}
for any $t\ge 0$. Then the Gronwall inequality yields
\begin{equation*}
\|w^{\varepsilon}\|_{L^{\infty}([0,T]; L^{2}(\R^{d}))}\le
C(T)\sqrt{\varepsilon},
\end{equation*}
where $C(T)$ is independent of $\eps$.

 To establish the
control of $\mathcal{H}$ norm, in view of Lemma~\ref{lem:operators},
we obtain
\begin{equation*}
\(i\varepsilon\partial_{t}+\frac{\varepsilon^{2}}{2}\Delta-V\)(A^{\varepsilon}w^{\varepsilon})
=\sqrt{\varepsilon}\(\nabla V(t, x)-\nabla V(t,
x(t))\)w^{\varepsilon}-A^{\varepsilon}L^{\varepsilon}+A^{\varepsilon}N^{\varepsilon}.
\end{equation*}
 Using Duhamel's formula and scaled Strichartz estimates again, we lead to
\begin{equation}
\label{5.10}
\begin{aligned}
& \|A^{\varepsilon}w^{\varepsilon}\|_{L^{q}L^{r}}\lesssim
\varepsilon^{-1/q}\|A^{\varepsilon}w^{\varepsilon}(t)\|_{L^{2}}
+\varepsilon^{-1-1/q}\|A^{\varepsilon}L^{\varepsilon}\|_{L^{1}L^{2}}\\
&
+\varepsilon^{-1-2/q}\|A^{\varepsilon}N^{\varepsilon}\|_{L^{q^{\prime}}L^{r^{\prime}}}+\varepsilon^{-1-1/q}\|\sqrt{\varepsilon}
\(\nabla V\(t, x\)-\nabla V\(t,
x(t)\)\)w^{\varepsilon}\|_{L^{1}L^{2}}.
\end{aligned}
\end{equation}
Note that we have the pointwise estimate:
\begin{equation*}
|\sqrt{\varepsilon} (\nabla V(t, x)-\nabla V(t,
x(t)))w^{\varepsilon}|\le C\varepsilon|B^{\varepsilon}w^{\varepsilon}|,
\end{equation*}
where $C$ is independent of $t,x$ and $\eps$.
We also have
\begin{equation*}
\|A^{\varepsilon}L^{\varepsilon}\|_{L^{2}(\R^{d})}\lesssim
\varepsilon^{3/2}\(\|\<y\>^{2}u\|_{L^{2}(\R^{d})}+\|\<y\>^{3}\nabla
u\|_{L^{2}(\R^{d})}\).
\end{equation*}
 Having in mind Proposition~\ref{prop:global}, we infer
\begin{equation}\label{5.11}
\|A^{\varepsilon}L^{\varepsilon}\|_{L^{1}([0, t];
L^{2}(\R^{d}))}\lesssim \varepsilon^{3/2}e^{C_{0}t},\quad \forall
t\ge 0.
\end{equation}
 We observe that $A^{\eps}$ acts on gauge
 invariant nonlinearities like a derivative:
\begin{equation*}
A^{\varepsilon}(|x|^{-\gamma}*|\phi|^{2}\phi)=2\RE\(
|x|^{-\gamma}*(\overline{\phi}A^{\varepsilon}\phi)\)\phi+
|x|^{-\gamma}*|\phi|^{2}(A^{\varepsilon}\phi).
\end{equation*}
Then we have
\begin{equation*}
\begin{aligned}
\|A^{\varepsilon}N^{\varepsilon}\|_{L^{q^{\prime}}L^{r^{\prime}}}&\lesssim
\varepsilon^{\alpha_{c}}
\left\lVert |x|^{-\gamma}*
(\overline{\psi^{\varepsilon}}A^{\varepsilon}\psi^{\varepsilon})\psi^{\varepsilon}-
|x|^{-\gamma}*
(\overline{\varphi^{\varepsilon}}A^{\varepsilon}\varphi^{\varepsilon})
\varphi^{\varepsilon}\right\rVert_{L^{q^{\prime}}L^{r^{\prime}}}\\
+\eps^{\alpha_c}&\left\lVert(|x|^{-\gamma}*|\psi^{\varepsilon}|^{2})
A^{\varepsilon}\psi^{\varepsilon}-
(|x|^{-\gamma}*|\varphi^{\varepsilon}|^{2})A^{\varepsilon}
\varphi^{\varepsilon}\right\rVert_{L^{q'}L^{r'}}=:\eps^{\alpha_{c}}(I+II).
\end{aligned}
\end{equation*}
In view of triangle inequality, H\"older inequality and 
Hardy--Littlewood--Sobolev inequality, by similar arguments as in
the proof of Lemma~\ref{lem:localhartree}, we get
\begin{align*}
I&\lesssim
\|A^{\varepsilon}w^{\varepsilon}\|_{L^{q}L^{r}}
\|\psi^{\varepsilon}\|_{L^{\theta}L^{r}}^{2}
+\|w^{\varepsilon}\|_{L^{q}L^{r}}
\|A^{\varepsilon}\varphi^{\varepsilon}\|_{L^{\theta}L^{r}}
(\|\psi^{\varepsilon}\|_{L^{\theta}L^{r}} +
\|\varphi^{\varepsilon}\|_{L^{\theta}L^{r}})\\
&\lesssim
\(\|w^{\varepsilon}\|_{L^{\theta}L^{r}}^{2}
+\|\varphi^{\varepsilon}\|_{L^{\theta}L^{r}}^{2}\)
\|A^{\varepsilon}w^{\varepsilon}\|_{L^{q}L^{r}}
\\
&\quad+\(\|A^{\varepsilon}\varphi^{\varepsilon}\|_{L^{\theta}L^{r}}^{2}
+\|w^{\varepsilon}\|_{L^{\theta}L^{r}}^{2}
+\|\varphi^{\varepsilon}\|_{L^{\theta}L^{r}}^{2}\)\|w^{\varepsilon}\|_{L^{q}L^{r}}.
\end{align*}
The term $II$ satisfies the same estimate.
Applying  Gagliardo--Nirenberg inequality, 
\begin{align*}
\|A^{\varepsilon}\varphi^{\varepsilon}(t)\|_{L^{r}(\R^{d})}&\lesssim
\varepsilon^{-\gamma/8}\|A^{\varepsilon}\varphi^{\varepsilon}\|_{L^{2}(\R^{d})}^{1-\gamma/4}
\|(A^{\varepsilon})^{2}\varphi^{\varepsilon}\|_{L^{2}(\R^{d})}^{\gamma/4}\\
&\lesssim \varepsilon^{-\gamma/8}\|\nabla
u\|_{L^{2}(\R^{d})}^{1-\gamma/4}\|\nabla^{2}u\|_{L^{2}(\R^{d})}^{\gamma/4}\le
C(T)\eps^{-\gamma/8},\quad \forall t\in [0, T],
\end{align*}
which implies that
\begin{equation*}
\|A^{\varepsilon}N^{\varepsilon}\|_{L^{q^{\prime}}L^{r^{\prime}}}\lesssim
\varepsilon^{\alpha_{c}-\gamma/4}\tau^{1-\gamma/4}
(\|A^{\varepsilon}w^{\varepsilon}\|_{L^{q}L^{r}}+\|w^{\varepsilon}\|_{L^{q}L^{r}}).
\end{equation*}
 Then \eqref{5.10} can be estimated as
\begin{equation}\label{5.12}
\begin{aligned}
\|A^{\varepsilon}w^{\varepsilon}\|_{L^{q}L^{r}}&\lesssim\varepsilon^{-1/q}
\|A^{\varepsilon}w^{\varepsilon}(t) \|_{L^{2}}
+\varepsilon^{-1/q}\|B^{\varepsilon}w^{\varepsilon}\|_{L^{1}L^{2}}\\
&
+\varepsilon^{-1-1/q}\|A^{\varepsilon}L^{\varepsilon}\|_{L^{1}L^{2}}
+\tau^{1-\gamma/4}\(\|A^{\varepsilon}w^\varepsilon\|_{L^{q}L^{r}}
+\|w^{\varepsilon}\|_{L^{q}L^{r}}\).
\end{aligned}
\end{equation}
Recalling Lemma ~\ref{lem:operators}, we have
\begin{equation*}
\(i\varepsilon\partial_{t}+\frac{\varepsilon^{2}}{2}\Delta
-V\)(B^{\varepsilon}w^{\varepsilon})=\varepsilon
A^{\varepsilon}w^{\varepsilon}-B^{\varepsilon}L^{\varepsilon}+B^{\varepsilon}N^{\varepsilon}.
\end{equation*}
Proceeding like above, we come up with
\begin{equation}\label{5.16}
\begin{aligned}
\|B^{\varepsilon}w^{\varepsilon}\|_{L^{q}L^{r}}&\lesssim
\varepsilon^{-1/q}\|B^{\varepsilon}w^{\varepsilon}(t)\|_{L^{2}}
+\varepsilon^{-1/q}\|A^{\varepsilon}w^{\varepsilon}\|_{L^{1}L^{2}}\\
&\quad
+\varepsilon^{-1-1/q}\|B^{\varepsilon}L^{\varepsilon}\|_{L^{1}L^{2}}
+\tau^{1-\gamma/4}\(\|B^{\varepsilon}w^{\varepsilon}\|_{L^{q}L^{r}}
+\|w^{\varepsilon}\|_{L^{q}L^{r}}\).
\end{aligned}
\end{equation}
Summing over  \eqref{5.8}, \eqref{5.12} and \eqref{5.16}, we have
\begin{align*}
\sum_{P^{\eps}\in \{{\rm Id}, A^{\eps},
B^{\eps}\}}&\|P^{\eps}w^{\varepsilon}\|_{L^{q}L^{r}}\lesssim
\varepsilon^{-1/q}\sum_{P^{\eps}\in \{{\rm Id}, A^{\eps},
B^{\eps}\}}\(\|P^{\eps}w^{\varepsilon}(t)\|_{L^{2}}
+\|P^{\varepsilon}w^{\varepsilon}\|_{L^{1}L^{2}}\)
\\
+\varepsilon^{-1-1/q}&\sum_{P^{\eps}\in \{{\rm Id}, A^{\eps},
B^{\eps}\}}\|P^{\eps}L^{\varepsilon}\|_{L^{1}L^{2}}
+\tau^{1-\gamma/4}\sum_{P^{\eps}\in \{{\rm Id}, A^{\eps},
B^{\eps}\}}\|P^{\eps}w^{\varepsilon}\|_{L^{q}L^{r}}.
\end{align*}
Take $\tau$ sufficiently small such that the last term of the right
hand side in the above inequality can be absorbed by the left hand
side. Using scaled Strichartz estimates again, resuming the above
computations, for any fixed $T>0$ and any $t\in [0, T]$, 
\begin{align*}
\sum_{P^{\eps}\in \{{\rm Id}, A^{\eps},
B^{\eps}\}}\|P^{\eps}w^{\varepsilon}\|_{L^{\infty}([0, t];
L^{2}(\R^{d}))}& \lesssim \sum_{P^{\eps}\in \{{\rm Id}, A^{\eps},
B^{\eps}\}}\|P^{\eps}w^{\varepsilon}\|_{L^{1}([0, t];
L^{2}(\R^{d}))}\\
&\quad +\varepsilon^{-1}\sum_{P^{\eps}\in \{{\rm Id}, A^{\eps},
B^{\eps}\}}\|P^{\eps}L^{\varepsilon}\|_{L^{1}([0, t]; L^{2}(\R^{d}))}.
\end{align*}
We end up with
\begin{equation}
\|w^{\varepsilon}\|_{L^{\infty}([0, t]; \mathcal{H})}\lesssim
\|w^{\varepsilon}\|_{L^{1}([0, t]; \mathcal{H})}+\sqrt{\varepsilon},
\quad \forall t\in [0, T],
\end{equation}
and Proposition~\ref{prop:bounded} follows from Gronwall lemma.
\end{proof}

\section{Large time approximation}\label{sec:largetime}

 In this section, we improve the time of validity of the error
 estimate proven in \S\ref{sec:bounded}, in two cases: the
linearizable case $\alpha>\alpha_{c}$, and the non-linearizable case
$\alpha=\alpha_{c}$, thus proving Proposition~\ref{prop:subhartree}
and Theorem~\ref{thm:criticalhartree}.

\subsection{Proof of Proposition ~\ref{prop:subhartree}} Set
$w^{\varepsilon}=\psi^{\varepsilon}-\varphi^{\varepsilon}_{{\rm
lin}}$: it satisfies
\begin{equation}\label{6.1}
   i\varepsilon\partial_{t}w^{\varepsilon}+\frac{\varepsilon^{2}}{2}\Delta
u= V w^{\varepsilon}-L^{\varepsilon}+N^{\varepsilon}\quad;\quad
 w^{\eps}|_{t=0}=0,
\end{equation}
where
\begin{equation*}
L^{\varepsilon}=(V(t, x)-T_{2}(t,
x,x(t)))\varphi^{\varepsilon}_{{\rm lin}}\quad; \quad
N^{\varepsilon}=\lambda\varepsilon^{\alpha}
|x|^{-\gamma}*|\varphi^{\varepsilon}_{{\rm
lin}}+w^{\varepsilon}|^{2} (\varphi^{\varepsilon}_{{\rm
lin}}+w^{\varepsilon}).
\end{equation*}
Using scaled Strichartz estimates, we have, for $t>0$,
\begin{align*}
\|w^{\varepsilon}\|_{L^{q}_t L^{r}}&\lesssim
 \varepsilon^{-1-1/q}\|L^{\varepsilon}\|_{L^{1}_t L^{2}}
+\varepsilon^{-1-2/q}\|N^{\varepsilon}\|_{L^{q'}_t L^{r'}}\\
&\lesssim\varepsilon^{-1-1/q}\|L^{\varepsilon}\|_{L^{1}_t
L^{2}} +\varepsilon^{\alpha-1-2/q}
\|\varphi_{{\rm lin}}^{\varepsilon}\|_{L^{\theta}_t L^{r}}^{2}\|\varphi_{{\rm
lin}}^{\varepsilon}\|_{L^{q}_tL^{r}}\\
&\quad 
+\varepsilon^{\alpha-1-2/q}\(\|w^{\varepsilon}\|_{L^{\theta}_t
L^{r}}^{2}+\|\varphi_{{\rm
lin}}^{\varepsilon}\|_{L^{\theta}_t
L^{r}}^{2}\)\|w^{\varepsilon}\|_{L^{q}_t L^{r}},
\end{align*}
where $L^a_tL^b$ stands for $L^a([0,t];L^b(\R^d))$. In view of
Proposition~\ref{prop:global}, 
Gagliardo--Nirenberg inequality yields
\begin{equation*}
\|\varphi_{{\rm lin}}^{\varepsilon}(t)\|_{L^{r}(\R^{d})}\lesssim
\varepsilon^{-\gamma/8}\|\varphi_{{\rm
lin}}^{\varepsilon}\|_{L^{2}(\R^{d})}^{1-\gamma/4}
\|A^{\varepsilon}\varphi_{{\rm
lin}}^{\varepsilon}\|_{L^{2}(\R^{d})}^{\gamma/4} \le
C\varepsilon^{-\gamma/8}e^{C_0 t},\quad \forall t\ge  0,
\end{equation*}
where $C$ and $C_0$ are independent of $\eps$ and $t$. 
We use a
bootstrap argument, relying upon the estimate
\begin{equation}\label{eq:solong}
\|w^{\varepsilon}(t)\|_{L^{r}(\R^{d})}\le
\varepsilon^{-\gamma/8}e^{C_0 t},
\end{equation}
for $t\in [0, T^\eps]$, where $T^\eps$ may (and will) depend on
$\eps$. Since $w^\eps$ is expected to be small compared to
$\varphi_{\rm lin}^\eps$, such an estimate looks sensible. We come up
with
\begin{equation*}
  \|w^{\varepsilon}\|_{L^{q}_t L^{r}}
\le C\varepsilon^{-1-1/q}\|L^{\varepsilon}\|_{L^{1}_t
L^{2}} +C\varepsilon^{\alpha-\alpha_c}e^{2C_0t}\|\varphi_{{\rm
lin}}^{\varepsilon}\|_{L^{q}_tL^{r}}
+C\varepsilon^{\alpha-\alpha_c}e^{2C_0t}\|w^{\varepsilon}\|_{L^{q}_tL^{r}}.
\end{equation*}
Now assume $T^\eps$ is chosen so that 
\begin{equation*}
C\varepsilon^{\alpha-\alpha_{c}}e^{2C_0T^\eps}\le \frac{1}{2}.
\end{equation*}
Then we have 
\begin{equation*}
  \|w^{\varepsilon}\|_{L^{q}_t L^{r}}
\le 2C\varepsilon^{-1-1/q}\|L^{\varepsilon}\|_{L^{1}_t
L^{2}} +2C\varepsilon^{\alpha-\alpha_c}e^{2C_0t}\|\varphi_{{\rm
lin}}^{\varepsilon}\|_{L^{q}_tL^{r}}.
\end{equation*}
Using scaled Strichartz
estimates again, we get
\begin{align*}
\|w^{\varepsilon}\|_{L^{\infty}_tL^{2}}&\lesssim
\varepsilon^{-1}\|L^{\varepsilon}\|_{L^{1}_tL^{2}}
+\varepsilon^{\alpha-1-1/q}e^{2C_0 t}\|\varphi_{{\rm
lin}}^{\varepsilon}\|_{L^{q}_tL^{r}}\\
&\le
C\(\sqrt{\varepsilon}e^{C_{0}t}+\varepsilon^{\alpha-\alpha_{c}}e^{3C_{0}t}\).
\end{align*}
Apply $A^{\varepsilon}$ and $B^{\varepsilon}$ respectively to 
Equation~\eqref{6.1}, then by the same arguments as in the proof
of Proposition~\ref{prop:bounded}, we get
\begin{align*}
& \sum_{P^{\eps}\in \{A^{\eps},
B^{\eps}\}}\|P^{\varepsilon}w^{\varepsilon}\|_{L^{q}_t
L^{r}}\lesssim \varepsilon^{-1/q}\sum_{P^{\eps}\in \{ A^{\eps},
B^{\eps}\}}\|P^{\varepsilon}w^{\varepsilon}\|_{L^{1}_t
L^{2}}+ \varepsilon^{\alpha-\alpha_{c}}e^{3C_{0}t}\\
&\quad 
+\varepsilon^{-1-1/q} \sum_{P^{\eps}\in \{A^{\eps},
B^{\eps}\}}\|P^{\varepsilon}L^{\varepsilon}\|_{L^{1}_t
L^{2}}+ \varepsilon^{\alpha-\alpha_{c}}e^{2C_{0}t}
\sum_{P^{\eps}\in \{{\rm Id}, A^{\eps},
B^{\eps}\}}\|P^{\eps}w^{\varepsilon}\|_{L^{q}_tL^{r}}.
\end{align*}
As long as \eqref{eq:solong} holds, we can use the same absorption
argument as above to treat the last term. By
using scaled Strichartz again, we obtain
\begin{equation*}
\|w^{\varepsilon}\|_{L^{\infty}([0, t]; \mathcal{H})}\lesssim
\|w^{\varepsilon}\|_{L^1([0, t];
\mathcal{H})}+\varepsilon^{\alpha-\alpha_{c}} e^{3C_{0}t}
+\varepsilon^{-1}\|L^{\varepsilon}\|_{L^{1}([0, t];
\mathcal{H})}.
\end{equation*}
Since
\begin{equation*}
\varepsilon^{-1}\|L^{\varepsilon}\|_{L^{1}([0, t];
\mathcal{H})}\le C\sqrt{\varepsilon}e^{Ct},
\end{equation*}
we end up with
\begin{equation*}
\|w^{\varepsilon}\|_{L^{\infty}([0, t]; \mathcal{H})}\le C\(
\|w^{\varepsilon}\|_{L^1([0, t];
\mathcal{H})}+\varepsilon^{\kappa}e^{C_{1}t}\),
\end{equation*}
where $\kappa=\min\{\frac{1}{2}, \alpha-\alpha_{c}\}$ and $C_{1}$ is
independent of $\eps$ and $t$. Gronwall lemma yields
\begin{equation*}
\|w^\eps(t)\|_{\mathcal{H}}\le
C\varepsilon^{\kappa}e^{C_{1}t}
\end{equation*}
so long as \eqref{eq:solong} holds.  
Gagliardo--Nirenberg inequality yields
\begin{equation*}
\|w^{\varepsilon}(t)\|_{L^{r}(\R^{d})}\lesssim
\varepsilon^{-\gamma/8}\|w^{\varepsilon}\|_{L^{2}(\R^{d})}^{1-\gamma/4}
\|A^{\varepsilon}w^{\varepsilon}\|_{L^{2}(\R^{d})}^{\gamma/4}\le
C\varepsilon^{\kappa-\gamma/8}e^{C_{1}t}.
\end{equation*}
Proposition ~\ref{prop:subhartree} then follows, by choosing $T^\eps
=C_2\ln\frac{1}{\eps}$ with $C_2>0$ sufficiently small and independent 
of $\eps$, and $\eps\in ]0,\eps_0]$ for $\eps_0>0$ sufficiently
small.

\subsection{Proof of Theorem ~\ref{thm:criticalhartree}}
We know explain how to upgrade Proposition~\ref{prop:bounded} to
Theorem ~\ref{thm:criticalhartree}, by examining the large time
behavior of the quantities involved in the proof. 
By Proposition~\ref{prop:global}, we know that $u\in C(\R_{+},
\Sigma^{k})$ provided that the initial datum $a\in
\Sigma^{k}$. Recalling Step 1 in Lemma~\ref{lem:exphartree}, we
showed that for any $t\ge 0$ and $\tau>0$ sufficiently small,
\begin{equation*}
\|u\|_{L^{q}([t, t+\tau]; L^{r}(\R^{d}))}\le
\frac{3}{2}C\|u(0,\cdot)\|_{L^{2}(\R^{d})}=\frac{3}{2}C\|a\|_{L^{2}(\R^{d})}.
\end{equation*}
where $C$ is independent of $t,\tau$. For any $t\ge  0$, fixed
$\tau\in ]0, 1[$, split $[t, t+1]$ into finitely many pieces with
length at most $\tau$, then we obtain
\begin{equation*}
\|\varphi^{\varepsilon}\|_{L^{q}([t, t+1];
L^{r}(\R^{d}))}=\eps^{-\gamma/8}\|u\|_{L^{q}([t, t+1];
L^{r}(\R^{d}))}\le C\eps^{-\gamma/8}\|a\|_{L^{2}(\R^{d})}, \quad
\forall t\ge  0.
\end{equation*}
Using Proposition~\ref{prop:bounded} and Gagliardo--Nirenberg
inequality, we know that there exists $\eps_0>0$ such that 
\begin{equation}\label{6.5}
\|w^{\varepsilon}\|_{L^{q}([t, t+1]; L^{r}(\R^{d}))}\le
\eps^{-\gamma/8}\|a\|_{L^{2}(\R^{d})},
\end{equation}
for $t\in [0,1]$. Suppose that \eqref{6.5} holds for $t\in [0,T^\eps]$
($T^\eps\ge 1$), and let $t,\tau>0$ with $t+\tau\le T^\eps$. Denote by
$L^aL^b$ the space $L^a([t,t+\tau];L^b(\R^d))$. 
Scaled Strichartz estimates, H\"older inequality and
Hardy--Littlewood--Sobolev inequality yield
\begin{align*}
\|w^{\varepsilon}\|_{L^{q}L^{r}}&\lesssim
\eps^{-1/q}\|w^\eps(t)\|_{L^2}+ 
\varepsilon^{-1-1/q}\|L^{\varepsilon}\|_{L^{1}L^{2}}\\
&\quad
+\varepsilon^{\alpha_{c}-1-2/q}\(\|\varphi^{\varepsilon}\|_{L^{\theta}L^{r}}^{2}+
\|w^{\varepsilon}\|_{L^{\theta}L^{r}}^{2}\)\|w^{\varepsilon}\|_{L^{q}L^{r}}\\
&\le
C\(\varepsilon^{-1/q}\|w^{\varepsilon}(t)\|_{L^{2}}+
\varepsilon^{-1-1/q}\|L^{\varepsilon}\|_{L^{1}L^{2}}
+\tau^{1-\gamma/2}\|w^{\varepsilon}\|_{L^{q}L^{r}}\),
\end{align*}
where $C$ is independent of $\eps$, $t$ , $\tau$. Choose $\tau\in ]0,
1]$ sufficiently small 
such that the last term on the above right hand side can
be absorbed by the left hand side. Using scaled Strichartz estimates
again and resuming the previous computations, we end up with
\begin{equation*}
\|w^{\varepsilon}\|_{L^{\infty}([0, t]; L^{2}(\R^{d}))} \le
C\int_{0}^{t}\|w^{\varepsilon}\|_{L^{\infty}([0, s];
L^{2}(\R^{d}))}ds+C\varepsilon^{-1}\int_{0}^{t}\|L^{\varepsilon}(s)\|_{L^{2}}ds.
\end{equation*}
The last term is controlled thanks to Proposition~\ref{prop:global},
and   Gronwall lemma yields
\begin{equation}\label{6.7}
\|w^{\varepsilon}\|_{L^{\infty}([0, t]; L^{2}(\R^{d}))}\le
C\sqrt{\varepsilon}e^{C_{1}t}.
\end{equation}
Resuming Strichartz inequalities, and splitting the interval $[t,t+1]$
into finitely many intervals of length $\tau$, the above computation
yields: 
\begin{align*}
  \|w^\eps\|_{L^q([t,t+1];L^r)}&\lesssim \eps^{-1/q}\|w^\eps(t)\|_{L^2}
  + \eps^{-1-1/q}\|L^\eps\|_{L^1([t,t+1];L^2)}\lesssim \eps^{-\g/8}
  \sqrt\eps e^{C_1 t}.  
\end{align*}
Setting $T^\eps
=C_2\ln\frac{1}{\eps}$ with $C_2>0$ sufficiently small and independent 
of $\eps$, and $\eps\in ]0,\eps_0]$ for $\eps_0>0$ sufficiently
small, we see that \eqref{6.5} holds for all $t\in [0,T^\eps]$; the
first part of Theorem~\ref{thm:criticalhartree} follows.

As in the proof of Proposition ~\ref{prop:bounded}, we also have
\begin{align*}
& \sum_{P^{\eps}\in \{{\rm Id}, A^{\eps},
B^{\eps}\}}\|P^{\eps}w^{\eps}\|_{L^{q}L^{r}}\lesssim
\eps^{-1/q}\sum_{P^{\eps}\in \{{\rm Id}, A^{\eps},
B^{\eps}\}}\(\|P^{\eps}w^{\eps}(t)\|_{L^{2}}+\|P^{\eps}w^{\eps}\|_{L^{1}L^{2}}\)\\
& \quad +\eps^{1-1/q}\sum_{P^{\eps}\in \{{\rm Id}, A^{\eps},
B^{\eps}\}}\|P^{\eps}L^{\eps}\|_{L^{1}L^{2}}
+\eps^{\alpha_{c}-1-2/q}\|w^{\eps}\|_{L^{q}L^{r}}\sum_{P^{\eps}\in
\{A^{\eps},
B^{\eps}\}}\|P^{\eps}\varphi^{\eps}\|_{L^{\theta}L^{r}}^{2}\\
&\quad
+\eps^{\alpha_{c}-1-2/q}\(\|w^{\eps}\|_{L^{\theta}L^{r}}^{2}+
\|\varphi^{\eps}\|_{L^{\theta}L^{r}}^{2}\)
\sum_{P^{\eps}\in \{{\rm Id}, A^{\eps},
B^{\eps}\}}\|P^{\eps}w^{\eps}\|_{L^{q}L^{r}}.
\end{align*}
Choosing $\tau$ sufficiently small, the last
term on the right hand side of the above inequality can be absorbed
by the left hand side. Notice that so long as \eqref{6.5} holds,
\begin{equation*}
\eps^{\alpha_{c}-1-\frac{2}{q}}\|w^{\eps}\|_{L^{q}L^{r}}
\(\|A^{\eps}\varphi^{\eps}\|_{L^{\theta}L^{r}}^{2}
+\|B^{\eps}\varphi^{\eps}\|_{L^{\theta}L^{r}}^{2}\)\lesssim
\eps^{-\gamma/8}e^{2C_{0}t}.
\end{equation*}
Using scaled Strichartz estimates again, mimicking the above
computations, and using Proposition~\ref{prop:global}, Gronwall lemma
yields
\begin{equation*}
\|w^{\eps}\|_{L^{\infty}([0, t]; \mathcal{H})}\le
C\sqrt{\eps}e^{C_{1}t},
\end{equation*}
which completes the proof of Theorem~\ref{thm:criticalhartree}.

\section{Nonlinear superposition}\label{sec:superposition}
In this section, we consider the nonlinear superposition
for the critical case $\alpha=\alpha_{c}$. The arguments
to prove Theorem~\ref{thm:superpositionbounded} and
Theorem~\ref{thm:superpositionlargetime} are quite similar to the
proof of 
Proposition~\ref{prop:bounded} and Theorem~\ref{thm:criticalhartree},
respectively. The important aspect to be understood is the
interaction between the two profiles $\varphi^{\eps}_{1}$ and
$\varphi^{\eps}_{2}$. The error $
w^{\varepsilon}=\psi^{\varepsilon}-\varphi_{1}^{\varepsilon}-
\varphi_{2}^\varepsilon$ satisfies
\begin{equation*}
i\varepsilon\partial_{t}w^{\varepsilon}+\frac{\varepsilon^{2}}{2}\Delta
w^{\varepsilon}=
Vw^{\varepsilon}-L^{\varepsilon}+N^{\varepsilon}\quad ; \quad
w^{\eps}|_{t=0}=0,
\end{equation*}
where
\begin{align*}
L^{\varepsilon}&=(V(t,x)-T_{2}(t, x,
x(t)))(\varphi_{1}^{\varepsilon}+\varphi^{\varepsilon}_{2}),\\
N^{\varepsilon}&=\lambda\varepsilon^{\alpha_{c}}\(F\(w^{\varepsilon}+
\varphi_{1}^{\varepsilon}     
+\varphi_{2}^{\varepsilon}\)-
F\(\varphi_{1}^{\varepsilon}\)-F\(\varphi_{2}^{\varepsilon}\)\), \quad
F(\psi):=  
\(|x|^{-\gamma}*|\psi|^{2}\)\psi.
\end{align*}
As in \cite{CaFe11}, decompose $N^{\varepsilon}$ into two parts: a 
semilinear term
\begin{equation*}
N_{S}^{\varepsilon}=\lambda\varepsilon^{\alpha_{c}}\(F\(w^{\varepsilon}+
\varphi_{1}^{\varepsilon} +\varphi_{2}^{\varepsilon}\)-F\(
\varphi_{1}^{\varepsilon} +\varphi_{2}^{\varepsilon}\)\),
\end{equation*}
and an interaction (source) term
\begin{equation*}
N_{I}^{\varepsilon}=\lambda\varepsilon^{\alpha_{c}}
\( F\(\varphi_{1}^{\varepsilon} +\varphi_{2}^{\varepsilon}\) -
F\(\varphi_{1}^{\varepsilon}\)-F\(\varphi_{2}^{\varepsilon}\)\).
\end{equation*}
As noticed in \cite{CaFe11}, the term $N_{S}^{\eps}$ can be treated as
in the case of a single wave packet: the
key point is to estimate
\begin{equation*}
\varepsilon^{\alpha_{c}-1}\|N_{I}^{\varepsilon}\|_{L^{1}([0, t];
\Sigma_\eps)},
\end{equation*}
since $N_I^\eps$ plays the role of a source term. 
This is the only new term to control to infer
Theorem~\ref{thm:superpositionbounded} and 
Theorem~\ref{thm:superpositionlargetime}
from 
Proposition~\ref{prop:bounded} and Theorem~\ref{thm:criticalhartree}. 
\begin{lemma}\label{lem:7.1}
Let $0<\gamma<\min (2, d)$, $T\ge 0$ and $0<\si<\frac{1}{2}$.
Denote
\begin{equation}\label{7.2}
I^{\eps}(T)=\left\{t\in [0, T];\quad |x_{1}(t)-x_{2}(t)|\le \eps^{\si}\right\}.
\end{equation}
Then for any $k\in \N$ with $k>\g$, 
\begin{equation*}
\eps^{-1}\|N_{I}^{\eps}\|_{L^{1}([0, T]; \Sigma_\eps)}\lesssim
(M_{k+2}(T))^{3}\(T\eps^{\gamma(1/2-\si)}+|I^\eps(T)|
\)e^{CT},
\end{equation*}
where
$M_{k}(T)=\sup\{\|u_{j}\|_{L^{\infty}([0,
T];\Sigma^k)};\  j\in \{1, 2\}\}.$ \\
In particular, if $T>0$ is independent of $\eps$, then 
\begin{equation*}
\eps^{-1}\|N_{I}^{\eps}\|_{L^{1}([0, T]; \Sigma_\eps)}\lesssim
(M_{k+2}(T))^{3}\(\eps^{\gamma(1/2-\si)}+\eps^\si\).
\end{equation*}
\end{lemma}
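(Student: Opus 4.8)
\emph{Strategy.} Since $F(\psi)=(|x|^{-\gamma}*|\psi|^2)\psi$ is cubic, expanding $|\varphi_1^\eps+\varphi_2^\eps|^2$ shows that
$F(\varphi_1^\eps+\varphi_2^\eps)-F(\varphi_1^\eps)-F(\varphi_2^\eps)$ is a finite sum of cross terms of two shapes: \emph{type A}, $(|x|^{-\gamma}*|\varphi_i^\eps|^2)\varphi_j^\eps$ with $\{i,j\}=\{1,2\}$; and \emph{type B}, $(|x|^{-\gamma}*(\varphi_1^\eps\overline{\varphi_2^\eps}))\varphi_\ell^\eps$ together with its complex–conjugate variants. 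Recalling $\alpha_c-1=\gamma/2$, the plan is to bound each such term in $L^1([0,T];\Sigma_\eps)$ by two different methods: a crude, always–valid bound used on $I^\eps(T)$, and a sharp one exploiting the separation $|x_1(t)-x_2(t)|>\eps^\si$ used on $[0,T]\setminus I^\eps(T)$.

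\emph{Crude bound.} Arguing as in the proof of Proposition~\ref{prop:bounded} (Hardy--Littlewood--Sobolev, H\"older and Gagliardo--Nirenberg inequalities, the wave–packet scaling $\|\varphi_j^\eps(t)\|_{L^p}=\eps^{-\frac d4(1-2/p)}\|u_j(t)\|_{L^p}$, the moving–frame identities for $(x-x_j(t))\varphi_j^\eps$ and $\eps\nabla\varphi_j^\eps$, and the bound $|x_j(t)|+|\xi_j(t)|\lesssim e^{C_0t}$ of Lemma~\ref{lem:traj}), every type A and type B term is $\lesssim(M_{k+2}(T))^3\eps^{-\gamma/2}e^{Ct}$ in $\Sigma_\eps$ for all $t\ge0$. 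Multiplying by $\eps^{\alpha_c-1}=\eps^{\gamma/2}$ and integrating over $I^\eps(T)$ gives a contribution $\lesssim(M_{k+2}(T))^3|I^\eps(T)|e^{CT}$.

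\emph{Sharp bound off $I^\eps(T)$.} Fix $t$ with $d:=|x_1(t)-x_2(t)|>\eps^\si$ and take a type A term $g\varphi_2^\eps$ with $g=|x|^{-\gamma}*|\varphi_1^\eps|^2$. Split the kernel $|x|^{-\gamma}=|x|^{-\gamma}\mathbf{1}_{|x|>d/4}+|x|^{-\gamma}\mathbf{1}_{|x|\le d/4}$, so $g=g_1+g_2$ with $g_1,g_2\ge0$. The first kernel piece lies in $L^\infty$ with norm $(d/4)^{-\gamma}\lesssim\eps^{-\si\gamma}$, hence $\|g_1\|_{L^\infty}\lesssim\eps^{-\si\gamma}\|u_1\|_{L^2}^2$ and $\|g_1\varphi_2^\eps\|_{L^2}\lesssim\eps^{-\si\gamma}(M_{k+2}(T))^3$. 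For $g_2$: on the ball $|x-x_2(t)|\le d/2$ the constraint $|x-z|\le d/4$ forces $|z-x_1(t)|\ge d/4>\eps^\si/4$, so $\varphi_1^\eps$ there only sees the Schwartz tail of $u_1$ at the scale $\eps^{\si-1/2}\to\infty$ and $g_2(x)=O(\eps^N)$ for every $N$ (up to the harmless factor $d^{\,d-\gamma}\lesssim e^{Ct}$, using $\gamma<d$); off that ball, $\varphi_2^\eps$ is itself $O(\eps^N)$ in $L^2$ for every $N$, while $\|g_2\|_{L^\infty}\le\|g\|_{L^\infty}\lesssim\eps^{-\gamma/2}(M_{k+2}(T))^2$, and since $k+2>\gamma$ one has $\eps^{-\gamma/2}\cdot\eps^{N(1/2-\si)}\le\eps^{-\si\gamma}$ for $N=k+2$ --- this is the one place the hypothesis $k>\gamma$ is used. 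A type B term is even simpler: at each point at least one of $|z-x_1(t)|,|z-x_2(t)|$ exceeds $d/2$, so $\|\varphi_1^\eps\overline{\varphi_2^\eps}\|_{L^1}=O(\eps^N)$ for every $N$ and the term is negligible. The weighted and $\eps\nabla$ components of $\Sigma_\eps$ are handled in the same way, at the cost of one extra factor $e^{C_0t}$ from $|x_j(t)|+|\xi_j(t)|\lesssim e^{C_0t}$. Therefore $\|F(\varphi_1^\eps+\varphi_2^\eps)-F(\varphi_1^\eps)-F(\varphi_2^\eps)\|_{\Sigma_\eps}\lesssim(M_{k+2}(T))^3\eps^{-\si\gamma}e^{Ct}$ on $[0,T]\setminus I^\eps(T)$, which after multiplication by $\eps^{\gamma/2}$ and integration gives a contribution $\lesssim(M_{k+2}(T))^3\,T\eps^{\gamma(1/2-\si)}e^{CT}$. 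Adding the two contributions proves the main estimate.

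\emph{The case of $\eps$-independent $T$, and the main obstacle.} It remains to see $|I^\eps(T)|\lesssim\eps^\si$ when $T$ is fixed. The difference $(\delta x,\delta\xi):=(x_1-x_2,\xi_1-\xi_2)$ solves, by \eqref{eq:traj} and the mean value theorem, a linear system $\dot{\delta x}=\delta\xi$, $\dot{\delta\xi}=-B(t)\delta x$ with $B\in L^\infty(\R_+)$ (Assumption~\ref{hyp:V}); since $(\delta x(0),\delta\xi(0))\ne0$, uniqueness and Gronwall give $|\delta x(t)|^2+|\delta\xi(t)|^2\ge c_T>0$ on $[0,T]$. On the set $\{|\delta x(t)|\le\eps^\si\}$ we then have $|\delta\xi(t)|^2\ge c_T/2$ for $\eps$ small, hence $\frac{d^2}{dt^2}|\delta x(t)|^2=2|\delta\xi(t)|^2-2\<\delta x(t),B(t)\delta x(t)\>\ge c_T/2$; a nonnegative $C^2$ function whose second derivative is $\ge c_T/2$ can remain below the level $\eps^{2\si}$ only on a set of measure $\lesssim\sqrt{\eps^{2\si}/c_T}$, so $|I^\eps(T)|\lesssim_T\eps^\si$ and the ``in particular'' follows by absorbing $T$ and $e^{CT}$ into the implicit constant. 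The main obstacle is the sharp bound: one must organize the convolution so that the only surviving contribution is the single factor $\eps^{-\si\gamma}$ coming from evaluating $|x|^{-\gamma}$ at the separation scale, verify that every genuinely mixed piece is negligible thanks to the Schwartz localization of the profiles (which is where $k>\gamma$ enters), and propagate all of this through the weighted $\Sigma_\eps$ norm while tracking the exponential–in–$t$ growth of the classical flow.
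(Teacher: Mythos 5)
Your overall decomposition of $N_I^\eps$ and the two--regime strategy (crude estimate on $I^\eps(T)$, refined estimate off it) coincide with the paper's, and the crude bound is fine. The ``in particular'' argument is a reasonable self-contained alternative to the paper's citation of Lemma~6.2 of \cite{CaFe11} (modulo controlling the number of connected components of $I^\eps(T)$, which is unproblematic for fixed $T$). However, the refined estimate off $I^\eps(T)$ has a genuine gap.

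\smallbreak
\textbf{The unjustified Schwartz decay.} You claim that on the ball $|x-x_2(t)|\le d/2$ one has $g_2(x)=O(\eps^N)$ for \emph{every} $N$, and that the type~B source is ``negligible'' because $\|\varphi_1^\eps\overline{\varphi_2^\eps}\|_{L^1}=O(\eps^N)$ for every $N$. Both assertions require the profiles $u_j(t,\cdot)$ to decay faster than any polynomial. But $u_j$ solves the \emph{nonlinear} envelope equation \eqref{eq:nlhartree}; Proposition~\ref{prop:global} only yields $u_j\in C(\R_+;\Sigma^m)$ for each fixed $m$, with constants $C_m e^{C_m t}$ that blow up as $m\to\infty$. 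There is no claim --- and no reason to expect --- that the solution remains uniformly Schwartz. More to the point, the conclusion you must prove is stated in terms of $M_{k+2}(T)$ for the \emph{given} $k$, so only decay of order $k+2$ is available in the budget. Consequently the type~B term is \emph{not} negligible: the paper's Peetre argument shows it contributes exactly $\eps^{k(1/2-\si)}T$ (which is then dominated by $\eps^{\gamma(1/2-\si)}T$ using $k>\gamma$), and a correct version of your kernel-splitting argument would have to reproduce this finite power of $\eps$ rather than an $O(\eps^\infty)$ estimate.

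\smallbreak
\textbf{Where $k>\gamma$ really enters.} You flag the inequality $\eps^{-\gamma/2}\eps^{N(1/2-\si)}\le\eps^{-\si\gamma}$ with $N=k+2$ as ``the one place $k>\gamma$ is used,'' but this only needs $k+2>\gamma$, which is automatic ($\gamma<2$, $k\ge 1$). In the paper's proof the hypothesis $k>\gamma$ is genuinely used in the short-range piece of the type~A term: after the split $\langle x-\eta^\eps\rangle^k\lesssim\langle x-y-\eta^\eps\rangle^k+|y|^k$, one needs $|y|^{k-\gamma}$ to be bounded on the ball $|y|\le 2|\eta^\eps|$ by its boundary value $|\eta^\eps|^{k-\gamma}$, which requires $k\ge\gamma$; after dividing by the Peetre factor $|\eta^\eps|^{k}$ this yields exactly $|\eta^\eps|^{-\gamma}\sim\eps^{\gamma/2}|x_1-x_2|^{-\gamma}$, hence $\eps^{\gamma(1/2-\si)}T$ after time integration. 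Reorganizing your kernel splitting to recover this mechanism with only $\Sigma^{k+2}$ decay --- Peetre's inequality applied to $\langle x\rangle^{-k}\langle x-\eta^\eps\rangle^{-k}$ followed by Hardy--Littlewood--Sobolev in the profile variable --- is the fix, and it is precisely what the paper does.
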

\begin{proof}
We compute
\begin{equation}\label{eq:ni}
\begin{aligned}
N_{I}^{\eps}&=\eps^{\alpha_{c}}\(\(|x|^{-\gamma}* 
|\varphi^{\eps}_{1}|^{2}\)\varphi^{\eps}_{2} +
\(|x|^{-\gamma}*|\varphi^{\eps}_{2}|^{2}\) \varphi^{\eps}_{1}\)\\
&\quad
+2\eps^{\alpha_{c}}\(|x|^{-\gamma}*\(\RE
\(\varphi_{1}^{\eps}\overline{\varphi_{2}^{\eps}}\)\)\)
\(\varphi_{1}^{\eps}+\varphi_2^\eps\).
\end{aligned}
\end{equation}
On the complement of $I^\eps(T)$, we will use Peetre inequality: for
$\eta\in\R^{d}$, 
\begin{equation*}
\sup_{x\in\R^{d}}\(\<x\>^{-1}\<x-\eta\>^{-1}\)\lesssim\frac{1}{\<\eta\>}\lesssim
\frac{1}{|\eta|}.
\end{equation*}
Denote $\eta^{\eps}=\frac{x_{1}(t)-x_{2}(t)}{\sqrt{\eps}}$. For the
last term in \eqref{eq:ni},  we have, for $j\in \{1,2\}$,
\begin{align*}
& \eps^{\alpha_{c}-1}\int_{[0, T]\setminus
I^{\eps}(T)}\left\|\(|x|^{-\gamma}*|\varphi_{1}^{\eps}(t,
x)\varphi_{2}^{\eps}(t, x)|\)\varphi_{j}^{\eps}(t,
x)\right\|_{L^{2}(\R^{d})}dt\\
 &=\int_{[0, T]\setminus
I^{\eps}(T)}\left\|\(|x|^{-\gamma}*|u_{1}(t,
x)u_{2}(t, x-\eta^{\eps})|\)u_{j}(t, x)\right\|_{L^{2}(\R^{d})}dt\\
&\lesssim\left\||x|^{-\gamma}*|\<x\>^{k}u_{1}(t,
x)\<x-\eta^{\eps}\>^{k}u_{2}(t, x-\eta^{\eps})|
\right\|_{L^{\infty}L^{2d/\g}}
 \|u_{j}\|_{L^{\infty}L^{2d/(d-\g)}} \\
&\quad \times \int_{[0, T]
 \setminus
I^{\eps}(T)}\frac{dt}{|\eta^{\eps}(t)|^{k}}.
\end{align*}
In view of Hardy--Littlewood--Sobolev inequality and Sobolev
embedding, we have:
\begin{align*}
& \quad \left\||x|^{-\gamma}*|\<x\>^{k}u_{1}(t,
x)\<x-\eta^{\eps}\>^{k}u_{2}(t,
x-\eta^{\eps}) |\right\|_{L^{\infty}L^{2d/\g}}\\
&\lesssim \|\<x\>^{k}u_{1}(x)\|_{L^{\infty}L^{4d/(2d-\gamma)}}
\|\<x-\eta^{\eps}\>^{k}u_{2}(x-\eta^{\eps})\|_{L^{\infty}L^{4d/(2d-\gamma)}}\\
&\lesssim \|\<x\>^{k}u_{1}(x)\|_{L^{\infty}H^{1}}
\|\<x-\eta^{\eps}\>^{k}u_{2}(x-\eta^{\eps})\|_{L^{\infty}H^{1}}\lesssim
(M_{k+1}(T))^{2}.
\end{align*}
On the other hand, we have
\begin{equation*}
\int_{[0, T]\setminus I^{\eps}(T)}\frac{dt}{|\eta^{\eps}(t)|^{k}}
=\int_{[0, T]\setminus
I^{\eps}(T)}\frac{\eps^{k/2}}{|x_{1}(t)-x_{2}(t)|^{k}}dt
\le\eps^{k(1/2-\si)}T.
\end{equation*}
It follows that
\begin{equation*}
\eps^{\alpha_{c}-1}\int_{[0, T]\setminus
I^{\eps}(T)}\left\|(|x|^{-\gamma}*|\varphi_{1}^{\eps}
\varphi_{2}^{\eps}|)\varphi_{j}^{\eps}(t,
x)\right\|_{L^{2}(\R^{d})}dt\lesssim (M_{k+1}(T))^{3}\eps^{k(1/2-\si)}T.
\end{equation*}
The first two terms in \eqref{eq:ni} are of the same form, so we
consider the first one only:
\begin{align*}
&\quad\eps^{\alpha_{c}-1}\int_{[0, T]\setminus
I^{\eps}(T)}\left\|\(|x|^{-\gamma}*|\varphi_{1}^{\eps}|^{2}\)
\varphi_{2}^{\eps}\right\|_{L^{2}(\R^{d})}dt\\
&=\int_{[0, T]\setminus
I^{\eps}(T)}\left\|\(\int_{\R^{d}}|y|^{-\gamma}|u_{1}(t,
x-y-\eta^{\eps})|^{2}dy\)u_{2}(t,x)\right\|_{L^{2}(\R^{d})}dt.
\end{align*}
In view of Peetre inequality, using similar arguments as above, we
have
\begin{align*}
&\quad \int_{[0, T]\setminus
I^{\eps}(T)}\left\|\int_{|y+\eta^{\eps}|>
|\eta^{\eps}|}|y|^{-\gamma}|u_{1}(t, x-y-\eta^{\eps})|^{2}dy\times
u_{2}(t,
x)\right\|_{L^{2}(\R^{d})}dt\\
\lesssim &\(\int_{[0, T]\setminus I^{\eps}(T)}\sup_{{x\in\R^{d}}\atop
{|y+\eta^{\eps}|>
|\eta^{\eps}|}}\<x-y-\eta^{\eps}\>^{-k}\<x\>^{-k}dt\)\|\<x\>^{k}u_{2}(t,
x)\|_{L^{\infty}L^{2d/(d-\g)}}\\
&  \times\left\|\int |y|^{-\gamma}|u_{1}(t,
x-y-\eta^{\eps})|^{2}\<x-y-\eta^{\eps}\>^{k}dy\right\|_{L^{\infty}L^{2d/\g}}\\
\lesssim &\int_{[0, T]\setminus
I^{\eps}(T)}\frac{dt}{|\eta^{\eps}|^{k}}
\left\|\<x\>^{k}u_{2}(t,x)\right\|_{L^{\infty}L^{2d/(d-\g)}}
\left\|\<x\>^{k/2} u_{1}(t,x)\right\|_{L^{\infty}L^{4d/(2d-\gamma)}}^2
\\
\lesssim &(M_{k+1}(T))^{3}\eps^{k(1/2-\si)}T.
\end{align*}
We observe that
\begin{equation*}
\{y\in\R^{d}: |y+\eta^{\eps}|\le |\eta^{\eps}|\}\subset \{y\in
\R^{d}: |y|\le 2|\eta^{\eps}|\},
\end{equation*}
and, for $k>0$, 
$\<x-\eta^{\eps}\>^{k}\lesssim \<x-y-\eta^{\eps}\>^{k}+|y|^{k}$.
Then we have, for all $x\in \R^d$,
 \begin{align*}
& \quad \int_{\{y:|y+\eta^{\eps}|\le
|\eta^{\eps}|\}}|y|^{-\gamma}|u_{1}(t, x-y-\eta^{\eps})|^{2}
\<x-\eta^{\eps}\>^{k}dy\\
&\lesssim \int_{\R^{d}} |y|^{-\gamma}|u_{1}(t,
x-y-\eta^{\eps})|^{2} \<x-y-\eta^{\eps}\>^{k}dy\\
& \quad +\int_{\{y:|y|\le 2|\eta^{\eps}|\}}
|y|^{k-\gamma}|u_{1}(t, x-y-\eta^{\eps})|^{2}dy\\
&\lesssim
 \int_{\R^{d}} |y|^{-\gamma}|u_{1}(t,
x-y-\eta^{\eps})|^{2} \<x-y-\eta^{\eps}\>^{k}dy
+|\eta^{\eps}|^{k-\gamma}\|u_{1}(t)\|_{L^{2}}^{2},
\end{align*}
since $k>\g$. 
It follows that
\begin{align*}
& \quad \int_{[0, T]\setminus
I^{\eps}(T)}\left\|\(\int_{|y+\eta^{\eps}|\le
|\eta^{\eps}|}|y|^{-\gamma}|u_{1}(t, x-y-\eta^{\eps})|^{2}dy\)
u_{2}(t,
x)\right\|_{L^{2}(\R^{d})}dt\\
\lesssim &\(\int_{[0, T]\setminus I^{\eps}(T)}\sup_{{x\in\R^{d}}\atop
{|y+\eta^{\eps}|>
|\eta^{\eps}|}}\<x-\eta^{\eps}\>^{-k}\<x\>^{-k}dt\)\|\<x\>^{k}u_{2}(t,
x)\|_{L^{\infty}L^{2d/(d-\g)}}\\
&  \times\left\|\int |y|^{-\gamma}|u_{1}(t,
x-y-\eta^{\eps})|^{2}\<x-\eta^{\eps}\>^{k}dy\right\|_{L^{\infty}L^{2d/\g}}\\
& \lesssim  (M_{k+1}(T))^{3}\int_{[0, T]\setminus
I^{\eps}(T)}{|\eta^{\eps}|^{-k}}(1+|\eta^{\eps}|^{k-\gamma})dt\\
& \lesssim
(M_{k+1}(T))^{3}\(\eps^{k(1/2-\si)}+\eps^{\gamma(1/2-\si)}\)T\lesssim
(M_{k+1}(T))^{3}\eps^{\gamma(1/2-\si)}T,
\end{align*}
since $k>\g$. In $I^\eps(T)$, 
H\"older inequality, Hardy--Littlewood--Sobolev inequality and Sobolev
embedding yield similarly
\begin{equation*}
\eps^{\alpha_{c}-1}\int_{I^{\eps}(T)}\|N_{I}^{\eps}\|_{L^{2}(\R^{d})}dt\lesssim
(M_{1}(T))^{3}|I^{\eps}(T)|.
\end{equation*}
Note that
\begin{align*}
\|\sqrt{\eps}\nabla\varphi_{j}\|_{L^{2}(\R^{d})}&\lesssim\sqrt{\eps}\|\nabla
u_{j}\|_{L^{2}(\R^{d})}+|\xi_{j}|\|u_{j}\|_{L^{2}(\R^{d})},\\
\|x\varphi_{j}\|_{L^{2}(\R^{d})}&\lesssim \sqrt{\eps}\|y
u_{j}\|_{L^{2}(\R^{d})}+|x_{j}|\|u_{j}\|_{L^{2}(\R^{d})}.
\end{align*}
The $\Sigma_\eps$ estimate of $N_I^\eps$ then follows easily, in view
of Lemma ~\ref{lem:traj}.

When $T>0$ does not depend on $\eps$, we simply invoke 
 Lemma 6.2 in \cite {CaFe11}:
\begin{equation*}
|I^{\eps}(T)|=\O\(\eps^{\si}\),
\end{equation*}
and the proof of the lemma is complete.
\end{proof}

\subsection{Proof of Theorem ~\ref{thm:superpositionbounded}} Due to
the lack of natural rescaling for two 
wave packets, we  use a bootstrap argument even on finite time
intervals; the operators $A^\eps$ and $B^\eps$ used in
\S\ref{sec:bounded} are helpful analytically because they have
a precise geometrical meaning in terms of one wave packet, and this
meaning is lost in the case of two wave packets. Since for 
$j=1,2$,
\begin{equation*}
\|\varphi_{j}^{\eps}(t)\|_{L^{r}(\R^{d})}\le C(T)\eps^{-\gamma/8},
\quad \forall t\in [0, T],
\end{equation*}
the bootstrap argument goes as follows: so long as
\begin{equation}\label{7.4}
\|w^{\eps}(t)\|_{L^{r}(\R^{d})}\le C(T)\eps^{-\gamma/8},
\end{equation}
we estimate the error $w^\eps$ with a rather precise rate.
Resuming the
computations of Proposition ~\ref{prop:bounded} and using the above
bootstrap argument, we have
\begin{equation*}
\|w^{\eps}\|_{L^{\infty}([0, T]; L^{2}(\R^{d}))}\lesssim
\frac{1}{\eps}\|L^{\eps}\|_{L^{1}([0, T];
L^{2}(\R^{d}))}+\frac{1}{\eps}\|N_{I}^{\eps}\|_{L^{1}([0, T];
L^{2}(\R^{d}))}.
\end{equation*}
Similarly, applying $\sqrt{\eps}\nabla$ and $x$ to the equation,
resuming an analogue computation as $A^{\eps}$ and $B^{\eps}$, we
get
\begin{align*}
\|w^{\eps}\|_{L^{\infty}([0, T]; \Sigma_{\eps})}&\lesssim
\frac{1}{\eps}\|L^{\eps}\|_{L^{1}([0, T];
\Sigma_{\eps})}+\frac{1}{\eps}\|N_{I}^{\eps}\|_{L^{1}([0, T];
\Sigma_{\eps})}\\
&\lesssim\sqrt{\eps}+\eps^{\si}+\eps^{\gamma(1/2-\si)}\lesssim
\eps^{\si}+\eps^{\gamma(1/2-\si)} ,
\end{align*}
since $0<\si<1/2$. 
Optimizing the estimate in $\si$, we find
\begin{equation*}
  \si = \g\(\frac{1}{2}-\si\)\Longleftrightarrow \si = \frac{\g}{2(1+\g)},
\end{equation*}
which is consistent with $0<\si<1/2$. 
Then Gagliardo--Nirenberg inequality yields
\begin{equation*}
\|w^{\eps}(t)\|_{L^{r}(\R^{d})}\lesssim \eps^{-\gamma/4}
\|w^{\eps}(t)\|_{L^{2}(\R^{d})}^{1-\gamma/4}\|\eps
\nabla w^{\eps}(t)\|_{L^{2}(\R^{d})}^{\gamma/4}\lesssim
\eps^{\frac{\g}{2(1+\g)}-\gamma/4}.
\end{equation*}
Now we notice that
\begin{equation*}
  \frac{\g}{2(1+\g)}-\frac{\gamma}{4}>-\frac{\g}{8}, 
\end{equation*}
so for any $T>0$ independent of $\eps$, we can find $\eps_0$ so that 
\eqref{7.4} holds for $t\in [0,T]$ provided that $\eps\in ]0,\eps_0]$.
Theorem ~\ref{thm:superpositionbounded} follows.

\subsection{Proof of Theorem ~\ref{thm:superpositionlargetime}}
Since $a_{j}\in\Sigma^{k}$, we have, by Proposition~\ref{prop:global},
\begin{equation*}
M_{k}(t)\le C_{k}e^{C_{k}t}.
\end{equation*}
From Lemma 6.3 in \cite{CaFe11}, there exists positive
constants $C, C_{1}, C_{2}$ independent of $\eps$ such that
\begin{equation*}
|I_{\eps}(t)|\le C\eps^{\si}e^{C_{1}t}|E_{1}-E_{2}|^{-2}, \quad
\forall t\in \left[0, C_{2}\ln\frac{1}{\eps}\right].
\end{equation*}
It follows from Lemma \ref{lem:7.1} that
\begin{equation*}
\eps^{-1}\|N_{I}^{\eps}\|_{L^{1}([0, t]; \Sigma_{\eps})} \lesssim
e^{Ct}(t\eps^{\gamma(1/2-\si)}+
\eps^{\si}e^{C_{1}t})\lesssim
e^{Ct}(\eps^{\gamma(1/2-\si)}+
\eps^{\si})
\end{equation*}
Choosing $k\ge  3$ and the same (optimal) $\si$ as before,
\begin{equation*}
\si=\frac{\gamma}{2(1+\gamma)},
\end{equation*}
we have
\begin{equation*}
\eps^{-1}\|N_{I}^{\eps}\|_{L^{1}([0, t]; \Sigma_{\eps})}
\lesssim\eps^{\gamma/2(1+\gamma)}e^{Ct}.
\end{equation*}
 Resuming the bootstrap argument and similar arguments as in
Section ~\ref{sec:largetime} yields Theorem~\ref{thm:superpositionlargetime}.

 \bibliographystyle{amsplain}
\bibliography{biblio}

 \end{document}